\RequirePackage{color}
\documentclass[11pt]{article}

\RequirePackage{titling}
\RequirePackage[affil-it]{authblk}
\RequirePackage[letterpaper, left=1.2truein, right=1.2truein, top = 1.2truein,bottom = 1.2truein]{geometry}
\RequirePackage{amsthm,amsmath,amsfonts,amssymb}
\RequirePackage[numbers]{natbib}
\RequirePackage[colorlinks,citecolor=blue,urlcolor=blue,breaklinks=true]{hyperref}
\RequirePackage{booktabs, enumitem}
\RequirePackage{graphicx}
\RequirePackage{tikz}
\usetikzlibrary{positioning,graphs,quotes,shapes,arrows}
\tikzstyle{line}=[draw,-stealth,thick]
\RequirePackage[toc,page]{appendix}

\theoremstyle{plain}
\newtheorem{theorem}{Theorem}
\newtheorem{lemma}{Lemma}
\newtheorem{assumption}{Assumption}


\newcommand{\bbR}{\mathbb R}
\newcommand{\bbP}{\mathbb P}
\newcommand{\calLr}{\mathcal L^r}
\newcommand{\calR}{\mathcal R}
\newcommand{\calH}{\mathcal H}
\newcommand{\calM}{\mathcal M}
\newcommand{\calF}{\mathcal F}
\newcommand{\calJ}{\mathcal J}
\newcommand{\lob}{l^{r}}
\newcommand{\lms}{l^{\overline{r}}}
\newcommand{\Lob}{L^{r}}
\newcommand{\Lms}{L^{\overline{r}}}
\newcommand{\Lobi}{L^r_i}
\newcommand{\odds}{\mathrm{Odds}}
\newcommand{\oddsr}{\mathrm{Odds}^{r}}
\newcommand{\Phir}{\Phi^{r}}
\newcommand{\phir}{\phi^{r}}
\newcommand{\bfDr}{\mathbf D^{r}}
\newcommand{\alphar}{\alpha^{r}}
\newcommand{\PEN}{\mathtt{PEN}}
\newcommand{\tr}{^\intercal}
\newcommand{\quads}{\quad\ }
\newcommand{\E}{\mathbb{E}}
\newcommand{\domr}{\mathrm{dom}^r}
\newcommand{\bbG}{\mathbb G}
\newcommand{\calT}{\mathcal T}
\newcommand{\calG}{\mathcal G}
\newcommand{\calO}{\mathcal O}
\newcommand{\calU}{\mathcal U}
\newcommand{\calN}{\mathcal N}
\newcommand{\bfA}{\mathbf A}
\newcommand{\bfB}{\mathbf B}
\newcommand{\bfC}{\mathbf C}
\newcommand{\bfD}{\mathbf D}
\newcommand{\bfE}{\mathbf E}
\newcommand{\bfF}{\mathbf F}



\title{Balancing Weights for Non-monotone Missing Data}
\author[1]{Jianing Dong}
\author[2]{Raymond K. W. Wong}
\author[3]{Kwun Chuen Gary Chan}
\affil[1,2]{Department of Statistics, Texas A\&M University}
\affil[3]{Department of Biostatistics, University of Washington}
\date{}

\begin{document}
	\maketitle
	
	\begin{abstract}
		Balancing weights have been widely applied to single or monotone missingness due to empirical advantages over likelihood-based methods and inverse probability weighting approaches. This paper considers non-monotone missing data under the complete-case missing variable condition (CCMV), a case of missing not at random (MNAR). Using relationships between each missing pattern and the complete-case subsample, we construct a weighted estimator for estimation, where the weight is a sum of ratios of the conditional probability of observing a particular missing pattern versus that of observing the complete-case, given the variables observed in the corresponding missing pattern. However, plug-in estimators of the propensity odds can be unbounded and lead to unstable estimation. Using further relations between propensity odds and balancing of moments across response patterns, we employ tailored loss functions, each encouraging empirical balance across patterns to estimate propensity odds flexibly using a functional basis expansion. We propose two penalizations to control propensity odds model smoothness and empirical imbalance. We study the asymptotic properties of the proposed estimators and show that they are consistent under mild smoothness assumptions. Asymptotic normality and efficiency are developed. Simulation results show the superior performance of the proposed method.
	\end{abstract}
	
	\begin{keywords}
		Non-monotone missing; Missing not at random (MNAR); Complete-case missing variable (CCMV); Covariate balancing
	\end{keywords}

	\section{Introduction}
	\label{s:intro}
	Missing data is ubiquitous in many research fields, such as health sciences, economics, and sociology. Monotone missing patterns, which arise in longitudinal studies and multi-stage sampling, have been widely studied. In practice, non-monotone missing patterns are more common since an ordering of variables may not exist to characterize the response patterns. For example, participants of an observational cohort may miss a particular study visit but return in later visits. 
	
	Restricting the analysis to complete cases in which all relevant variables are observed is a convenient solution but ignores information from partially missing data. It may lead to biased estimates unless data are missing completely at random (MCAR) \citep{little2019statistical}, and would typically lose efficiency even under MCAR. The missing at random (MAR) assumption, where missingness depends only on observed data, is more reasonable and is widely used to handle missing data in likelihood-based or weighting methods.
	
	Researchers often focus on a specific target, such as the coefficients of linear regression with Gaussian errors, which can be easily estimated by maximum likelihood estimation if the full-data density is known. To estimate parameters in the full-data density where the full variable vector $L$ may be partially missing, \citet{little1985maximum} suggested a mixture of normal distributions as the model of missing data density. However, its validity heavily relies on parametric density assumptions. The likelihood-based methods are attractive because the missingness probability can be factored out from the likelihood under MAR \citep{rubin1976inference}. To mimic the score function (the derivative of log-likelihood) in the absence of missing data, \citet{reilly1995mean} proposed a mean-score approach, and \citet{chatterjee2003pseudoscore} proposed a pseudo-score approach for the bivariate case and extended to partial questionnaire design which is a particular case of non-monotone missing data \citep{chatterjee2010inference}. \citet{chen2004nonparametric} proposed a semi-parametric method based on reparametrization of the joint likelihood into a product of the odds-ratio functions. However, these methods cannot directly handle continuous variables and require the discretization of continuous variables.
	
	Inverse propensity weighting (IPW) methods \citep{robins1994estimation} are also attractive since they do not require the specification of full-data likelihood. In contrast, a propensity model for non-missingness is needed. However, the resulting estimators can be highly unstable since even a small set of tiny propensity estimates would lead to extreme weights. A more direct weighting approach, the balancing method, is developed especially for the average treatment effect (ATE), which can be viewed as a missing data problem with only the response variable missing; see, for example, \citet{zubizarreta2015stable}, \citet{chan2016globally} and \citet{wong2018kernel}, and \citet{zhao2019covariate} introduced tailored loss functions to show the connection between IPW and balancing methods. These balancing methods handle the single missing variable scenario, which has yet to be extended to handle non-monotone missingness. 
	
	\citet{robins1997non} argued that missing not at random (MNAR) is more natural than MAR under non-monotone missingness. However, many different specifications of MNAR are not identifiable from observed data. \citet{little1993pattern} imposed an identifying condition, the complete-case missing variable (CCMV) restriction, that matches the unidentified conditional distribution of missing variables for missing patterns to the identified distribution for complete cases. Then, the full-data density can be estimated by a pattern-mixture model. IPW estimators were also proposed under CCMV restriction and extensions \citep{tchetgen2018discrete}. Another increasingly popular MNAR mechanism is the no self-censoring (NSC) restriction, studied in \citet{sinha2014semiparametric}, \citet{sadinle2017itemwise} and \citet{malinsky2022semiparametric}. Under MNAR, specifications of missing variable density typically rely on parametric assumptions and have intrinsic drawbacks. For instance, likelihood-based methods are often restricted to discrete variables due to complex integral computation, and IPW methods are often unstable due to difficulties in weight estimation for multiple missing patterns. 
	
	We propose a non-parametric approach that generalizes the balancing approaches to non-monotone missing scenarios under the CCMV assumption. Based on a tailored loss function, the proposed method approximately balances a set of functions of observed variables and mimics the missing-data adjustment of the mean-score approaches. The proposed method does not require the discretization of continuous variables and is less prone to misspecification due to the non-parametric modeling. A carefully designed penalization strikes a good balance between bias and variance, and prioritizes balancing smoother functions. We also show the consistency and asymptotic efficiency of the resulting estimator.

	\section{Notation and preliminaries}
	\label{s:notation}
	We formally describe the setup of the problem. Let $L=(L^{[1]},\ldots,L^{[d]})\in\bbR^d$ be a random vector of interest and $R=(R^1,\ldots,R^d)\in\calR\subseteq\{0,1\}^d$ be a binary random vector where $R^j=1$ indicates that $L^{[j]}$ is observed and $\calR=\{r:P(R=r)>0\}$ is the set of possible response patterns in the study. Denote the complete-case pattern by $1_d=(1,\ldots,1)$. Let $M=|\calR|$ be the number of response patterns. For the response pattern $r$, we denote the observed variables by $\Lob\in\mathbb{R}^{d_r}$ and the missing variables by $\Lms\in\mathbb{R}^{d-d_r}$ where $d_r$ is the number of observed variables. So, the observations are $(L^{R_i}_i,R_i)_{i=1}^N$. An example of a bivariate non-monotone missing structure with $M=3$ is 
	\begin{center}
		\renewcommand{\arraystretch}{1.5}
		\begin{tabular}{cccc}
			\toprule
			response pattern & $R=11$ & $R=10$ & $R=01$ \\
			\hline
			observed variables & $L^{11}=(L^{[1]},L^{[2]})$ & $L^{10}=L^{[1]}$ & $L^{01}=L^{[2]}$ \\
			\hline
			missing variables & $L^{\overline{11}}=\emptyset$ & $L^{\overline{10}}=L^{[2]}$ & $L^{\overline{01}}=L^{[1]}$ \\
			\bottomrule
		\end{tabular}\mbox{} 
	\end{center}
	Let $\theta_0\in\bbR^q$ be the parameter of interest which is the unique solution to $\E\{\psi_\theta(L)\}=0$, with a known vector-valued estimating function $\psi_\theta(L)=\psi(L,\theta)$ with values in $\bbR^q$. For instance, we could use the quasi-likelihood estimating functions for the generalized linear models. If full data were observed, a solution to the estimating equations $N^{-1}\sum_{i=1}^N\psi_\theta(L_i)=0$ is a common Z-estimator. However, $\psi_\theta(L_i)$ can only be evaluated at complete samples, and a complete-case analysis is typically biased or inefficient. To address the problem, we posit the following assumptions throughout this paper.
	
	\begin{assumption}\label{assump1}$ $ 
		\begin{enumerate}[label={\textbf{~\Alph*:}},ref={Assumption~\theassumption.\Alph*},leftmargin=1cm]
			\item\label{assump-1A}
			The estimating function $\psi(L,\theta)$ is differentiable with respect to $\theta$ with derivative $\dot{\psi}_\theta(L)$. Also, $\E\{\psi_\theta(L)\}$ has the unique root $\theta_0$ and is differentiable at $\theta_0$ with nonsingular derivative $D_{\theta_0}$.
			\item\label{assump-1B}
			There exists a constant $\delta_0>0$ such that $P(R=1_d\mid\lob)\ge\delta_0$ for any $r\in\calR$ and so $1_d\in\calR$.
			\item\label{assump-1C}
			$p(\lms\mid\lob,R=r)=p(\lms\mid\lob,R=1_d)
			$, or equivalently, $\frac{P(R=r\mid l)}{P(R=1_d\mid l)}=\frac{P(R=r\mid\lob)}{P(R=1_d\mid\lob)}$, for any $r\in\calR\backslash 1_d$\ .
		\end{enumerate}
	\end{assumption}
	\ref{assump-1A} is a standard regularity assumption for Z-estimation. \ref{assump-1B} ensures that complete cases are available for analysis. The first equation in \ref{assump-1C} is the CCMV condition, while the second equation motivates the logit discrete choice nonresponse model (LDCM) in \citet{tchetgen2018discrete}. To see that the missing mechanism is MNAR, we define the propensity as the probability of data belonging to a specific response pattern conditional on the variables of interest and write it as $P(R=r\mid l)$.
	The propensity odds between patterns $r$ and $1_d$
	is defined as $P(R=r\mid l)/P(R=1_d\mid l)$. The CCMV condition states that the propensity odds
	depends on $l$ via $\lob$, and so is equal to
	$\oddsr(\lob)=P(R=r\mid\lob)/P(R=1_d\mid\lob)$. The propensity is a function of $L=\bigcup_{s\in\calR\backslash 1_d}L^s$ and does not satisfy either MCAR or MAR conditions because propensities and odds are related through
	\begin{align}\label{propensity}
		P(R=r\mid l)
		=\frac{P(R=r\mid l)/P(R=1_d\mid l)}{\sum_{s\in\calR}P(R=s\mid l)/P(R=1_d\mid l)}
		=\frac{\oddsr(\lob)}{\sum_{s\in\calR}\odds^s(l^s)}\ .
	\end{align}
	\citet{tchetgen2018discrete} showed that Assumption \ref{assump1} is sufficient for non-parametric identification and developed the inverse propensity weighting (IPW) estimator that is motivated by the law of total expectation:
	\begin{align}\label{total-expectation}
		\E\left\{\frac{\mathsf{1}_{R=1_d}\psi_\theta(L)}{P(R=1_d\mid L)}\right\}
		=\E\left[\E\left\{\frac{\mathsf{1}_{R=1_d}\psi_\theta(L)}{P(R=1_d\mid L)}\bigg\vert L\right\}\right]
		=\E\{\psi_\theta(L)\}\ .
	\end{align} 
	
	Taking the reciprocal of \eqref{propensity} and that $\odds^{1_d}(l)=1$, 
	\begin{align}\label{sumweights}
		\frac{1}{P(R=1_d\mid l)}=\sum_{r\in\calR}\oddsr(\lob)\ .
	\end{align}
	As such, one can focus on the estimation of the odds. A standard approach is to fit a logistic regression where
	the binary response indicates whether the pattern is
	$R=1_d$ or $R=r$, and obtain an estimate of $P(R=r\mid\lob,R\in\{1_d,r\})$ \citep{tchetgen2018discrete}. Due to the relationship
	\begin{align*}
		\oddsr(\lob)=\frac{P(R=r\mid\lob,R\in\{1_d,r\})}{1-P(R=r\mid\lob,R\in\{1_d,r\})}\ ,
	\end{align*}
	a plug-in estimator of $\oddsr(\lob)$ can then be constructed based on the predicted probability.
	
	Based on \eqref{total-expectation} and \eqref{sumweights}, the resulting estimator $\hat{\theta}$ can be obtained by solving the weighted estimating equations $N^{-1}\sum_{i=1}^N\mathsf{1}_{R_i=1_d}\sum_{r\in\calR}\hat{w}^r(\Lobi)\psi_\theta(L_i)=0$ where $\hat{w}^r(\lob)$ represents an estimator of $\oddsr(\lob)$ and $\mathsf{1}_A$ is an indicator function of an event $A$. However, the estimated odds may lead to extremely large weights and hence an unstable estimator of $\theta_0$ when the likelihood estimation of a misspecified missingness model is used. Similar phenomena were observed in inverse propensity estimation in \cite{kang2007demystifying} and many subsequent papers, which motivates covariate balancing methods for parameter estimation. See, for example, \cite{imai2014covariate}, \cite{yiu2018covariate}, \cite{zhao2019covariate} and \cite{tan2020regularized}.
	
	We note that the estimation of $P(R=1_d\mid l)$ or $\oddsr(\lob)$ is not the ultimate target but only serves as an intermediate step for estimating $\theta_0$. Instead of using the entropy loss, \emph{i.e.}, the negative log-likelihood function, as the loss function, which is typical for logistic regression, we employ a tailored loss that directly imposes empirical control on key identifying conditions of the weights, which we call the balancing conditions described below. These balancing conditions are directly related to estimating target parameters $\theta_0$. Since the missingness mechanism is typically not our interest, parsimonious modeling may not be required. Instead, we will use functional basis expansions for flexible modeling, which requires penalizations for stable estimation. In addition, our proposed estimator achieves the semiparametric efficiency bounds of parameters defined through the estimating equations. By contrast, \citet{tchetgen2018discrete} only provided asymptotic variances when the parametric model of either propensity odds or missing variable density is correctly specified. Here, we present the semiparametric efficiency bound,
	whose proof is given in Appendix \ref{sec:proof-efficiency}. The ``regular estimators'' are defined according to \citet{begun1983information}. The regularity of an estimator can be viewed as a robustness or stability property.
	\begin{theorem}\label{efficiency-bound}
		Under Assumption \ref{assump1}, the asymptotic variance lower bound for all regular estimators of $\theta_0$ is $D_{\theta_0}^{-1}V_{\theta_0}D_{\theta_0}^{-1\tr}$, where $V_\theta=\E\{F_\theta(L,R)F_\theta(L,R)\tr\}$ and
		{\small
			\begin{align*}
				F_\theta(L,R)=\mathsf{1}_{R=1_d}\sum_{r\in\calR}\oddsr(\Lob)\{\psi_\theta(L)-u_\theta^r(\Lob)\}+\sum_{r\in\calR}\mathsf{1}_{R=r}u_\theta^r(\Lob)-\E\{\psi_\theta(L)\}
			\end{align*}
		}
		with $u_\theta^r(\lob)=\E\{\psi_\theta(L)\mid\Lob=\lob,R=r\}$, the conditional expectation of the estimating function given variables $\Lob$ and is equal to $\E\{\psi_\theta(L)\mid\Lob=\lob,R=1_d\}$ under \ref{assump-1C}. In a slight abuse of notation, $u_\theta^{1_d}(l^{1_d})=\psi_\theta(l)$.
	\end{theorem}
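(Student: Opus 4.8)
The plan is to compute the bound by identifying the efficient influence function (EIF) for $\theta_0$ and invoking the convolution theorem for regular estimators in the sense of \citet{begun1983information}. The first step is to rewrite the observed-data likelihood so that the nuisance parameters are exposed. Writing the observed unit as $O=(\Lob,R)$, the odds representation \eqref{propensity} gives the factorization $p(\lob,R=r)=\oddsr(\lob)\,p(\lob,R=1_d)$, so the observed law is fully determined by the complete-case subdensity $q(l):=p(l,R=1_d)$ together with the odds functions $\{\oddsr\}_{r\in\calR\setminus 1_d}$, subject only to the single normalization $\int q(l)\sum_{r\in\calR}\oddsr(\lob)\,dl=1$ that makes $p(l)=q(l)\sum_{r\in\calR}\oddsr(\lob)$ a density. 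Under \ref{assump-1C} the full-data law is thereby identified from $O$, so $\theta_0$ is a well-defined functional of the observed distribution.

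Next I would differentiate regular parametric submodels through the truth. Perturbing $\log q$ in a direction $c(l)$ and $\log\oddsr$ by $b^r(\lob)$ (with $b^{1_d}\equiv 0$), the induced observed-data score is $S(O)=\sum_{r\in\calR}\mathsf{1}_{R=r}\{c^r(\Lob)+b^r(\Lob)\}$, where $c^r(\lob)=\E\{c(L)\mid\Lob=\lob,R=1_d\}$, while the associated full-data score is $a(l)=c(l)+\sum_{r\in\calR}P(R=r\mid l)\,b^r(\lob)$; the linearized normalization constraint coincides exactly with $\E\{S(O)\}=0$. The key structural observation is that each $b^r$ is a free function of $\lob$ and $c$ is a free function of $l$, so the coefficient of $\mathsf{1}_{R=r}$ can be made an arbitrary $\lob$-measurable function, independently across patterns. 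Hence the tangent space $\calT$ is the whole space of mean-zero square-integrable functions of $O$: the observed-data model is locally nonparametric, the gradient is unique, and it suffices to exhibit one influence function.

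I would then obtain the pathwise derivative of the target. Since $\theta$ depends on the observed law only through the full-data distribution, differentiating $\E_{p_t}\{\psi_{\theta(t)}(L)\}=0$ yields $D_{\theta_0}\dot\theta=-\E\{\psi_{\theta_0}(L)a(L)\}$. It then remains to verify that $F_{\theta_0}$ is the matching gradient. This is the computational heart of the argument: substituting $S$ and $F_{\theta_0}$ (noting $\E\{\psi_{\theta_0}(L)\}=0$ removes the centering term), I would apply $P(R=1_d\mid l)\,\oddsr(\lob)=P(R=r\mid l)$ and $\sum_{r\in\calR}P(R=r\mid l)=1$ to the complete-case term, the defining identity $u_{\theta_0}^r(\lob)=\E\{\psi_{\theta_0}(L)\mid\Lob=\lob,R=r\}$ together with the CCMV exchange $\E\{\cdot\mid\Lob,R=r\}=\E\{\cdot\mid\Lob,R=1_d\}$ from \ref{assump-1C}, and the resulting cancellation of the $\sum_{r}\E\{\mathsf{1}_{R=r}u_{\theta_0}^r(\Lob)c^r(\Lob)\}$ terms, to reduce $\E\{F_{\theta_0}(O)S(O)\}$ to $\E\{\psi_{\theta_0}(L)c(L)\}+\sum_{r}\E\{\mathsf{1}_{R=r}u_{\theta_0}^r(\Lob)b^r(\Lob)\}=\E\{\psi_{\theta_0}(L)a(L)\}=-D_{\theta_0}\dot\theta$. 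This identifies $-D_{\theta_0}^{-1}F_{\theta_0}$ as the influence function (the sign being the usual Z-estimation convention and immaterial to the variance).

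Because $\calT$ is the full space, $-D_{\theta_0}^{-1}F_{\theta_0}$ is the efficient influence function, and the convolution theorem gives the stated lower bound $\mathrm{Var}(D_{\theta_0}^{-1}F_{\theta_0})=D_{\theta_0}^{-1}V_{\theta_0}D_{\theta_0}^{-1\tr}$ with $V_{\theta_0}=\E\{F_{\theta_0}F_{\theta_0}\tr\}$, after separately confirming $\E\{F_{\theta_0}\}=0$ by the same iterated-expectation identities. I expect the main obstacle to be twofold: cleanly establishing that the CCMV-constrained model is nonparametric at the observed-data level — which requires checking that the directions $\{b^r\}$ genuinely span all pattern-wise functions and that the normalization constraint is precisely the mean-zero condition — and executing the score-product algebra, where the CCMV conditional exchange and the identity $P(R=1_d\mid l)\,\oddsr(\lob)=P(R=r\mid l)$ must be inserted in exactly the right places to produce the cancellation.
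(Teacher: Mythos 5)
Your proposal is correct: it arrives at the same influence function $-D_{\theta_0}^{-1}F_{\theta_0}$, and the computational core --- inserting $P(R=1_d\mid l)\,\oddsr(\lob)=P(R=r\mid l)$ and the CCMV exchange $\E\{\cdot\mid\Lob,R=r\}=\E\{\cdot\mid\Lob,R=1_d\}$ to produce the cancellation in the score-product identity --- is the same algebra that drives the paper's verification of pathwise differentiability. The architecture around it differs in two ways, one cosmetic and one substantive. Cosmetically, the paper perturbs the pattern subdensities $f_\alpha(l^s,s)$ and differentiates $\E\{\psi_\theta(L)\}$ through the CCMV factorization of the full-data density, whereas you reparametrize the observed law by the complete-case subdensity $q$ and the odds $\{\oddsr\}$; these parametrizations are equivalent (indeed $f(\lob,r)f(\lms\mid\lob,1_d)=\oddsr(\lob)q(l)$), though yours exposes the nuisance structure more cleanly. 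Substantively, the efficiency step is genuinely different: the paper argues that the candidate influence function \emph{belongs to} the tangent set (with $\mathsf{1}_{R=r}u_{\theta_0}^r(\Lob)$ matched to the pattern scores) and invokes Theorem 3.1 of \citet{newey1990semiparametric}, while you prove the stronger structural fact that the tangent space is \emph{all} of mean-zero $L_2$ --- the CCMV model is saturated at the observed-data level, because the directions $b^r$ make the coefficient of $\mathsf{1}_{R=r}$ a free function of $\lob$ for each $r\neq 1_d$, the coefficient of $\mathsf{1}_{R=1_d}$ is a free function of the full $l$ through $c$, and the normalization constraint linearizes exactly to the mean-zero condition. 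Saturation buys you uniqueness of the gradient, so efficiency is automatic and no matching against the score structure is needed; it also makes explicit why CCMV restricts nothing about the observed data (hence is untestable). What it costs is that you must actually carry out the spanning argument together with the routine regularity/closure step for bounded directions --- precisely the point the paper's proof handles only informally (``it is easy to see that $\zeta$ belongs to the tangent set''). Both routes are sound; yours gives the cleaner and more self-contained justification of why the exhibited influence function is the efficient one.
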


	\section{Construction of our method}
	\label{s:construction}
	In this section, we define the balancing conditions and propose a tailored loss function whose expectation is minimized when the propensity odds model satisfies the balancing conditions. We will show that the imbalance can be explicitly controlled under penalization. Eventually, we will introduce two penalizations to control empirical imbalance and the smoothness of the propensity odds estimates, respectively. 
	
	\subsection{Balancing conditions and benefits}
	To estimate the propensity odds in a way more related to the estimation of $\theta_0$, we first consider the effect of a set of generic weights $w^r$ in the weighted estimating equations that are constructed to estimate $\E\{\psi_\theta(L)\}$. The law of total expectation \eqref{total-expectation} shows that the weighted average in the fully observed pattern is equal to the unweighted average in the population. Moreover, for each missing pattern $r$, one can see that under the CCMV assumption, the following equation holds for any measurable function $g(\lob)$. We call it the balancing condition associated with function $g(\lob)$ using $\oddsr(\lob)$: 
	\begin{align}\label{balancing-condition-expectation}
		\E\left\{\mathsf{1}_{R=r}g(\Lob)\right\}
		&=\E\left\{\mathsf{1}_{R=1_d}\oddsr(\Lob)g(\Lob)\right\}\\
		&=\E\left\{P(R=1_d\mid\Lob)\frac{P(R=r\mid\Lob)}{P(R=1_d\mid\Lob)}g(\Lob)\right\}\nonumber\ .
	\end{align}
	Note that the weights $w^r(\Lob)$ are equal to $\oddsr(\Lob)$ almost surely with respect to the conditional measure given $R=1_d$ if they satisfy the equations $\E\{\mathsf{1}_{R=r}g(\Lob)\}=\E\{\mathsf{1}_{R=1_d}w^rg(\Lob)\}$ for all measurable functions. In other words, the set of balancing conditions associated with all measurable functions identifies the propensity odds. This motivates another way to estimate the propensity odds by balancing the shared variables between two patterns. To explicitly see how balancing approach helps the estimation of $\theta$, we study the error $N^{-1}\sum_{i=1}^N\mathsf{1}_{R_i=1_d}\sum_{r\in\calR}w^r_i\psi_\theta(L_i)-\E\{\psi_\theta(L)\}$ based on a generic set of weights $w^r_i, i=1,\ldots, N$. The error can be first decomposed as
	\begin{align*}
		&\quads\frac{1}{N}\sum_{i=1}^N\mathsf{1}_{R_i=1_d}
		\sum_{r\in\calR}w^r_i\psi_\theta(L_i)-\E\{\psi_\theta(L)\}\\
		&=\sum_{r\in\calR}\left[\frac{1}{N}\sum_{i=1}^N\mathsf{1}_{R_i=1_d}w^r_i\psi_\theta(L_i)-\E\{\mathsf{1}_{R=r}\psi_\theta(L)\}\right]\ .
	\end{align*}
	Further decomposition of each inner term leads to:
	\begin{align}
		&\quads\frac{1}{N}\sum_{i=1}^N\mathsf{1}_{R_i=1_d}w^r_i\psi_\theta(L_i)-\E\{\mathsf{1}_{R=r}\psi_\theta(L)\}\nonumber\\
		&=\frac{1}{N}\sum_{i=1}^N\mathsf{1}_{R_i=1_d}w^r_i\left\{\psi_\theta(L_i)-u_\theta^r(\Lobi)\right\}\label{decomposition1}\\
		&\quads+\frac{1}{N}\sum_{i=1}^N\left\{\mathsf{1}_{R_i=1_d}w^r_iu_\theta^r(\Lobi)-\mathsf{1}_{R_i=r}u_\theta^r(\Lobi)\right\}\label{decomposition2}\\
		&\quads+\frac{1}{N}\sum_{i=1}^N\mathsf{1}_{R_i=r}u_\theta^r(\Lobi)
		-\E\{\mathsf{1}_{R=r}u_\theta^r(\Lob)\}\ .\label{decomposition3}
	\end{align}
	To control the error, we aim to design weights that can control the magnitude of the components in the above decomposition. First, we note that only the first two terms \eqref{decomposition1} and \eqref{decomposition2} depend on the weights. Indeed, the last term \eqref{decomposition3} is expected to converge to zero uniformly over $\theta$ in a compact set $\Theta$ under mild assumptions. Next, we focus on the term \eqref{decomposition2}. It is the empirical version of imbalance associated with $u_\theta^r$. Therefore, to control this error component, we would like the weights to achieve empirical balance, at least approximately. It is the fundamental motivation of the proposed weights. Finally, the term \eqref{decomposition1} with the proposed weights introduced later can be shown to uniformly converge to zero by some technical arguments. In Section \ref{s:asymptotic} and Appendices \ref{sec:proof-efficiency}--\ref{sec:lemma}, we provide rigorous theoretical statements and corresponding proofs. In the next subsection, we propose a tailored loss function that encourages the empirical balance.
	
	\subsection{Tailored loss function for balancing conditions}
	For each missing pattern $r\neq 1_d$, suppose
	\begin{align*}
		\oddsr(\lob;\alphar)=\exp\left\{\Phir(\lob)\tr\alphar\right\}\ ,
	\end{align*}
	where $\Phir(\lob)=\left\{\phi^r_1(\lob),\ldots,\phi^r_{K_r}(\lob)\right\}$ are $K_r$ basis functions for the observed variables in pattern $r$. The basis functions for $\lob$ can be constructed as tensor products of the basis functions for each variable. One may choose suitable basis functions depending on the observed variables and the number of observations in different patterns. For each missing pattern $r\neq 1_d$, we propose the tailored loss function:
	\begin{align*}
		\calLr\{\oddsr(\lob;\alphar),R\}=\mathsf{1}_{R=1_d}\oddsr(\lob;\alphar)-\mathsf{1}_{R=r}\log\oddsr(\lob;\alphar)\ .
	\end{align*}
	Assuming the exchangeability of taking derivatives with respect to $\alphar$ and taking expectations with respect to $L$, the minimizer of $\E[\calLr\{\oddsr(\Lob;\alphar),R\}]$ satisfies $$\partial\E[\calLr\{\oddsr(\Lob;\alphar),R\}]/\partial\alphar=0,$$ which can be rewritten as the balancing conditions: 
	\begin{align*}
		\E\left\{\mathsf{1}_{R=1_d}\oddsr(\Lob;\alphar)\Phir(\Lob)\right\}
		=\E\left\{\mathsf{1}_{R=r}\Phir(\Lob)\right\}\ .
	\end{align*}
	In practice, the minimum tailored loss estimator of $\alphar$, denoted by $\breve{\alpha}^r$, is obtained by minimizing the average loss 
	\begin{align*}
		\calLr_N(\alphar)=N^{-1}\sum_{i=1}^N\calLr\{\oddsr(\Lobi;\alphar),R_i\}\ .
	\end{align*}
	The estimating equations $\nabla\calLr_N(\breve{\alpha}^r)=0$ can be rewritten as the empirical version of balancing conditions using $\oddsr(\Lobi;\breve{\alpha}^r)$:
	\begin{align}\label{balancing-condition-odds}
		\frac{1}{N}\sum_{i=1}^N\mathsf{1}_{R_i=1_d}\oddsr(\Lobi;\breve{\alpha}^r)\Phir(\Lobi)
		=\frac{1}{N}\sum_{i=1}^N\mathsf{1}_{R_i=r}\Phir(\Lobi)\ .
	\end{align}
	Recall that we want to control the term \eqref{decomposition2}. If $u_\theta^r\in\mathrm{span}\{\phi^r_1,\ldots,\phi^r_{K_r}\}$, the empirical balances associated with the basis functions imply the empirical balance associated with $u_\theta^r$. Therefore, the basis functions should be chosen to approximate $u_\theta^r$ well. Theoretically, one can increase the number of basis functions with sample size to extend the spanned space and lower the approximation error. 
	
	\subsection{Penalized optimization and empirical imbalance}
	As mentioned in the last subsection, one wants to balance as many basis functions as necessary to enlarge the space spanned by the basis functions since $u_\theta^r$ is unknown. However, the unpenalized optimizations can result in overfitting or even being unfeasible to compute if one chooses too many basis functions. More precisely, $\calLr_N(C\alphar)\to-\infty$ as $C\to\infty$ if there exists $\alphar$ such that the linear component ${\Phir}(\lob)\tr\alphar>0$ for all data in pattern $r$ and ${\Phir}(\lob)\tr\alphar<0$ for all data in pattern $1_d$.
	
	Therefore, we consider the penalized average loss $\calLr_\lambda(\alphar)=\calLr_N(\alphar)+\lambda J^r(\alphar)$ where the penalty function $J^r(\cdot)$ is a continuous non-negative convex function, and the tuning parameter $\lambda\ge0$ controls the degree of penalization chosen by a cross-validation procedure to be discussed later. Since the estimation of propensity odds is not the final goal, we want to study how penalization affects the estimation of $\E\{\psi_\theta(L)\}$. We will explore the empirical imbalance of $u_\theta^r$, which depends on those of basis functions.
	
	Note that the tailored loss function is convex in $\alphar$. One can consider the sub-differential, as the penalty $J^r(\cdot)$ may be non-differentiable. Denote the sub-differential of $\calLr_\lambda$ at $\alphar$ as $\partial\calLr_\lambda(\alphar)$, which is a set of sub-derivatives $g\in\bbR^{K_r}$ satisfying 
	\begin{align*}
		\calLr_\lambda(\beta^r)\ge\calLr_\lambda(\alphar)+g\tr(\beta^r-\alphar), \textrm{ for all }\beta^r\in\bbR^{K_r}\ .
	\end{align*}
	The sub-differential of $\calLr_\lambda$ at any bounded minimizer $\hat{\alpha}^r$ should include the vector $\mathbf{0}$ since $\calLr_\lambda(\alphar)\ge \calLr_\lambda(\hat{\alpha}^r)+\mathbf{0}\tr(\alphar-\hat{\alpha}^r)$ for all $\alphar\in\bbR^{K_r}$. Therefore, there exists a sub-derivative $\hat{s}=(\hat{s}_1,\ldots, \hat{s}_{K_r})\in\bbR^{K_r}$ of $J^r$ at $\hat{\alpha}^r$ such that 
	\begin{equation}\label{eq:subdiff}\nabla\calLr_N(\hat{\alpha}^r)+\lambda\hat{s}=\mathbf{0} \ .
	\end{equation}

	\subsubsection{Controlling empirical imbalance}
	We denote the absolute difference between the left-hand side and the right-hand side of \eqref{balancing-condition-odds} the empirical imbalance. One immediate observation from \eqref{eq:subdiff} is that the empirical balance \textit{might} not hold exactly when $\lambda\neq 0$. Since the penalty determines $\hat{\alpha}^r$ and $\hat{s}$, one wants to choose an appropriate penalty to ensure control of the empirical imbalance. As a continuous function leads to bounded sub-derivatives, a simple choice is $\ell_1$-norm penalty, \emph{i.e.}, $\sum_{k=1}^{K_r}|\alphar_k|$, whose entries of sub-derivatives belong to $[-1,1]$, and so, for each $k\in\{1,\ldots,K_r\}$,
	\begin{align*}
		\left|\frac{1}{N}\sum_{i=1}^N\mathsf{1}_{R_i=r}\phir_k(\Lobi)
		-\frac{1}{N}\sum_{i=1}^N\mathsf{1}_{R_i=1_d}\oddsr(\Lobi;\hat{\alpha}^r)\phir_k(\Lobi)\right|
		\le\lambda\ .
	\end{align*}
	As such, $\lambda$ uniformly controls all empirical imbalances of basis functions. However, to emphasize the control of empirical imbalance of $u_\theta^r$, the tolerance to the empirical imbalances of different basis functions should be allowed to vary. We should tolerate smaller empirical imbalances for those basis functions that are believed to have stronger approximating power for $u_\theta^r$. For example, if one believes that $u_\theta^r$ satisfies certain smoothness assumptions, one may want to prioritize the empirical balances of smoother basis functions. One natural way to achieve this is to use the weighted $\ell_1$-norm penalty, \emph{i.e.}, $\sum_{k=1}^{K_r}t_k|\alphar_k|$ where $t_k\ge0,k=1,\ldots,K_r$, represent relative tolerance which should be determined by the assumptions of $u_\theta^r$. For each $k\in\{1,\ldots,K_r\}$, the corresponding imbalance bound becomes:
	\begin{align*}
		\left|\frac{1}{N}\sum_{i=1}^N\mathsf{1}_{R_i=r}\phir_k(\Lobi)
		-\frac{1}{N}\sum_{i=1}^N\mathsf{1}_{R_i=1_d}\oddsr(\Lobi;\hat{\alpha}^r)\phir_k(\Lobi)\right|
		\le\lambda t_k\ .
	\end{align*}
	Small $t_k$ should be assigned to important basis functions to ensure stronger balance. A detailed example of the choice of $t_k$ is presented in Section \ref{s:tuning}.
	
	\subsubsection{Controlling propensity odds smoothness}
	In addition to the empirical balance, we could also introduce a penalty to promote the smoothness of the propensity odds estimate. In this paper, we focus on a quadratic penalty of the form ${\alphar}\tr\bfDr\alphar$, where $\bfDr$ is a positive semi-definite matrix.
	As suggested by \citet{wood2017generalized}, the most convenient penalties to control the degree of smoothness are those that measure function roughness as a quadratic form in the coefficients of the function.
	A detailed example of the choice is presented in Section \ref{s:tuning}, where the roughness is related to the order of derivatives of a function. It is worth noting that our asymptotic theory only depends on the positive semi-definite property of $\bfDr$, but not a specific choice of $\bfDr$.

	\subsubsection{A combined penalty}
	\label{s:tuning}
	We suggest the following penalized loss to combine the strengths of the above two penalties:
	\begin{align}\label{penalized-loss}
		\calLr_\lambda(\alphar)
		=\frac{1}{N}\sum_{i=1}^N\calLr\{\oddsr(\Lobi;\alphar),R_i\}
		+\lambda\left\{\gamma \sum_{k=1}^{K_r}t_k|\alphar_k|+(1-\gamma){\alphar}\tr\bfDr\alphar
		\right\},
	\end{align}
	where $0\le\gamma\le 1$ is the tuning parameter controlling the weights between two penalties. A method for tuning parameter selection is given in Section \ref{sec:cross-validation}.
	
	Here, we focus on a standard choice of $\{t_k\}_{k=1}^{K_ r}$ and matrix $\bfDr$,
	based on a prior belief that $u_\theta^r$ and $\log\oddsr$ are smooth. There are many ways to define the roughness of a function. Here, we adopt a version in \citet{wood2017generalized} for twice continuously differentiable functions of $d$-dimensional continuous variables, such as cubic splines. The roughness of function $f$ is defined as $\PEN_2(f)=\langle f,f\rangle$ where the semi-inner product is defined as:
	{\footnotesize
		\begin{align*}
			\langle f,g \rangle
			=\int\cdots\int_{\bbR^d}\sum_{v_1+\cdots+v_d=2}\frac{2!}{v_1!\cdots v_d!}
			\left(\frac{\partial^2f}{\partial x_1^{v_1}\cdots\partial x_d^{v_d}}\right)
			\left(\frac{\partial^2g}{\partial x_1^{v_1}\cdots\partial x_d^{v_d}}\right)
			dx_1\cdots dx_d\ .
		\end{align*} 
	}
	Let $\bfDr$ be the Gram matrix where $\bfDr_{i,j}=\langle\phir_i,\phir_j\rangle$. Then, the roughness of basis expansion ${\Phir}\tr\alphar$ is the quadratic term ${\alphar}\tr\bfDr\alphar$. We orthogonalize the basis functions with respect to the above inner product such that the matrix $\bfDr$ becomes diagonal. Thus, the quadratic term becomes $\sum_{k=1}^{K_r}({\alphar_k})^2\PEN_2(\phir_k)$ and has a simple derivative and is strongly convex with respect to $\alphar_k$ for $k$ such that $\PEN_2(\phir_k)>0$. To prioritize balancing smoother basis functions, we tolerate larger imbalances for rougher functions. Thus, we can connect the tolerance with function roughness. In particular, we choose $t_k=\sqrt{\langle \phir_k,\phir_k\rangle}=\sqrt{\PEN_2(\phir_k)}$. Then, for the minimizer $\hat{\alpha}^r$ of $\calLr_\lambda(\alphar)$, the empirical imbalance of any linear combination of basis functions $v(\cdot):= \sum_{k=1}^{K_r} \beta_k^r \phi_k^r(\cdot)$ is
	\begin{align*}
		&\quad\left|\sum_{k=1}^{K_r}\beta^r_k\left[\frac{1}{N}\sum_{i=1}^N
		\left\{\mathsf{1}_{R_i=r}-\mathsf{1}_{R_i=1_d}\oddsr(\Lobi;\hat{\alpha}^r)\right\}\phir_k(L_i^r)\right]\right|\\
		&\le\lambda\gamma\sum_{k=1}^{K_r}|\beta^r_k|\sqrt{\PEN_2(\phir_k)}
		+2\lambda(1-\gamma)\sum_{k=1}^{K_r}
		|\beta^r_k\hat{\alpha}^r_k|
		\PEN_2(\phir_k)\\
		&\le\lambda\gamma\sqrt{K_r}\sqrt{\sum_{k=1}^{K_r}\PEN_2(\beta^r_k\phir_k)}
		+2\lambda(1-\gamma)\sqrt{\sum_{k=1}^{K_r}\PEN_2(\beta^r_k\phir_k)}
		\sqrt{\sum_{k=1}^{K_r}\PEN_2(\hat{\alpha}^r_k\phir_k)}\\
		&=\lambda\left\{\gamma\sqrt{K_r}
		+2\lambda(1-\gamma)\sqrt{\PEN_2\left(\sum_{k=1}^{K_r}\hat{\alpha}^r_k\phir_k\right)}\right\}\sqrt{\PEN_2\left(\sum_{k=1}^{K_r}\beta^r_k\phir_k\right)}\\
		&=\lambda\left\{\gamma\sqrt{K_r}+2(1-\gamma)\sqrt{\PEN_2({\Phir}\tr\hat{\alpha}^r)}\right\}\sqrt{\PEN_2(v)}
	\end{align*} 
	where $\Phir(\lob)\tr\hat{\alpha}^r=\log\oddsr(\lob;{\hat{\alpha}^r})$ denotes the log transformation of the propensity odds model, and $K_r$ is the number of basis functions. As such, the empirical imbalance of $v$ is proportionally bounded by the square root of its roughness $\PEN_2(v)$. So, smoother functions in the spanned space have smaller empirical imbalances. We assume that $u_\theta^r$ is well approximated by the basis functions under certain smoothness assumptions introduced in \ref{assump-3A}. Thus, the approximation errors are controlled, and the empirical imbalance of $u_\theta^r$ is also controlled.
	
	Besides, $\PEN_2({\Phir}\tr\hat{\alpha}^r)$ is controlled through the quadratic term in the penalty. The rough propensity odds models are penalized more. In total, we control the smoothness of the propensity odds model and the imbalance of $u_\theta^r$ by using such a model.

	\subsubsection{Tuning parameter selection}\label{sec:cross-validation}
	One approach to choosing the tuning parameters $\lambda$ and $\gamma$ is to select the pair of parameters that gives the minimal expected loss, usually estimated by a cross-validation procedure. The candidate parameters can be drawn from a grid satisfying $\lambda>0$ and $\gamma\in[0,1]$. For example, in our simulation study, we perform 5-fold cross-validation using the grid $(\lambda\in\{1, 0.1, \ldots, 1e-10\})\times(\gamma\in\{0,0.1,0.5,0.9,1\})$. The dataset is randomly divided into five folds. For each iteration, the propensity odds models are trained on four folds using penalized optimization and tested on the remaining fold to estimate the out-of-sample tailored loss. The parameter pair that yields the lowest cross-validation error across the five iterations is then used to retrain the model on the entire dataset. This approach enhances the model's ability to generalize to unseen data and ensures a stable estimation of the propensity odds.

	\section{Asymptotic properties}
	\label{s:asymptotic}
	In this section, we investigate the consistency and asymptotic normality of our proposed estimators. Consider the proposed penalization \eqref{penalized-loss} with generic basis functions. By minimizing the penalized average loss $\calLr_\lambda(\alphar)$ for each missing pattern $r\neq 1_d$, we first obtain the minimizers $\hat{\alpha}^r$ and the balancing estimators of propensity odds, $\oddsr(\lob;\hat{\alpha}^r)$. Summing up the propensity odds estimators, we obtain the weights, $\hat{w}(L_i)=\sum_{r\in\calR}\oddsr(\Lobi;\hat{\alpha}^r)$, and a plug-in estimator of $\E\{\psi_\theta(L)\}$: 
	\begin{align*}
		\hat{\bbP}_N\psi_\theta=\frac{1}{N}\sum_{i=1}^N\left\{\mathsf{1}_{R_i=1_d}\hat{w}(L_i)\psi_\theta(L_i)\right\}\ .
	\end{align*} 
	The resulting estimator of $\theta_0$ is the solution to $\hat{\bbP}_N\psi_\theta=0$. Denote it by $\hat{\theta}_N$. Under mild conditions, we show that $\oddsr(\lob;\hat{\alpha}^r)$ is consistent, $\hat{\bbP}_N\psi_\theta$ is asymptotically normal for each $\theta$ in a compact set $\Theta\subset\bbR^q$, and $\hat{\theta}_N$ is consistent and efficient. We require the following set of assumptions to establish the consistency of the proposed estimator.
	
	\begin{assumption}\label{assump2} 
		The following conditions hold for each missing pattern $r\in\calR$:
		\begin{enumerate}[label={\textbf{~\Alph*:}}, ref={Assumption~\theassumption.\Alph*},leftmargin=1cm]
			\item\label{assump-2A}
			There exist constants $0<c_0<C_0$ such that $c_0\le\oddsr(\lob)\le C_0$ for all $\lob\in\domr$ where $\domr\subset\mathbb{R}^{d_r}$ is the domain of $\oddsr$.
			
			\item\label{assump-2B}
			The optimization ${\min_{\alphar}} \E[\calLr\{\oddsr(\Lob;\alphar),R\}]$ has a unique solution $\alphar_0\in\bbR^{K_r}$.
			
			\item\label{assump-2C}
			The total number of basis functions $K_r$ grows as the sample size increases and satisfies $K_r^2=o(N_r)$ where $N_r$ is the number of observations in patterns $r$ and $1_d$. 
			
			\item\label{assump-2D}
			There exist constants $C_1>0$ and $\mu_1>1/2$ such that for any positive integer $K_r$, there exist $\alpha_{K_r}^*\in\bbR^{K_r}$ satisfying $\underset{\lob\in\domr}{\sup}\big|\oddsr(\lob)-\oddsr(\lob;\alpha_{K_r}^*)\big|\le C_1K_r^{-\mu_1}.$
			
			\item\label{assump-2E}
			The Euclidean norm of the basis functions satisfies $\underset{\lob\in\domr}{\sup}\|\Phir(\lob)\|_2=O(K_r^{1/2})$. 
			
			\item\label{assump-2F}
			Let $\lambda_1,\ldots,\lambda_{K_r}$ be the eigenvalues of $\E\{\Phir(\Lob)\Phir(\Lob)\tr\}$ in the non-decreasing order. There exist constants $\lambda_{\min}^*$ and $\lambda_{\max}^*$ such that $0<\lambda_{\min}^*\le\lambda_1\le\lambda_{K_r}\le\lambda_{\max}^*$.
			
			\item\label{assump-2G}
			The tuning parameter $\lambda$ satisfies $\lambda=o(1/\sqrt{K_rN_r})$.
		\end{enumerate}
	\end{assumption}
	\ref{assump-2A} is the boundedness assumption commonly used in missing data and causal inference. It is equivalent to that $P(R=r\mid\lob,R\in\{1_d,r\})$ is strictly bounded away from 0 and 1. The domain $\domr$ is usually assumed to be compact, so it becomes possible to approximate $\oddsr$ with compactly supported functions. \ref{assump-2B} is a standard condition for consistency of minimum loss estimators of $\alphar_0$. It is well known that the uniform approximation error is related to the number of basis functions. Thus, we allow $K_r$ to increase with sample size under certain restrictions in \ref{assump-2C}. The uniform approximation rate $\mu_1$ in \ref{assump-2D} is related to the true propensity odds $\oddsr$ and the choice of basis functions. For instance, the rate $\mu_1=s/d$ for power series and splines if $\oddsr$ is continuously differentiable of order $s$ on $[0,1]^d$ under mild assumptions; see \citet{newey1997convergence} and \citet{fan2022optimal} for details. The restriction $\mu> 1/2$ is a technical condition such that the estimator of propensity odds is consistent. \ref{assump-2E} and \ref{assump-2F} are standard conditions for controlling the magnitude of the basis functions. The Euclidean norm of the basis function vector can increase as the spanned space extends, but its growth rate cannot be too fast. These assumptions are satisfied by many bases such as the regression spline, trigonometric polynomial, and wavelet bases; see, e.g., \citet{newey1997convergence, horowitz2004nonparametric, chen2007large} and \citet{fan2022optimal}. \ref{assump-2G} is a technical assumption of the tuning parameter $\lambda$ for the maintenance of consistency of weights. We now establish the consistency of the estimated odds.
	\begin{theorem}\label{odds}
		Under Assumptions \ref{assump1} and \ref{assump2}, for each missing pattern $r$, we have
		\begin{align*}
			\left\|\oddsr(\ \cdot\ ;\hat{\alpha}^r)-\oddsr\right\|_\infty
			&=O_p\left(\sqrt{\frac{K_r^2}{N_r}}+K_r^{\frac{1}{2}-\mu_1}\right)=o_p(1)\ ,\\
			\left\|\oddsr(\ \cdot\ ;\hat{\alpha}^r)-\oddsr\right\|_{P,2}
			&=O_p\left(\sqrt{\frac{K_r}{N_r}}+K_r^{-\mu_1}\right)=o_p(1)
		\end{align*}
		where $\|X\|_{P,2}^2=\int X^2dP$ is the second moment of a random variable.
	\end{theorem}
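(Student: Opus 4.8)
The plan is to reduce both stated rates to a single Euclidean bound on the coefficient error, namely $\|\hat\alpha^r-\alphar_0\|_2=O_p(\sqrt{K_r/N_r})$ with $\alphar_0$ the population minimizer of \ref{assump-2B}, and then (i) convert this into the function-space rates through a mean-value expansion of the exponential link and (ii) add the deterministic sieve-approximation error carried over from \ref{assump-2D}. The first step is to exploit convexity: writing the log-odds as $\Phir(\lob)\tr\alphar$, the per-observation loss is convex in $\alphar$ with gradient $\mathsf{1}_{R=1_d}\oddsr(\lob;\alphar)\Phir(\lob)-\mathsf{1}_{R=r}\Phir(\lob)$ and positive semidefinite Hessian $\mathsf{1}_{R=1_d}\oddsr(\lob;\alphar)\Phir(\lob)\Phir(\lob)\tr$. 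The population first-order condition at $\alphar_0$ is precisely the expected balancing condition, and because the true odds $\oddsr$ also obeys the population balancing condition \eqref{balancing-condition-expectation}, a pointwise minimization identifies $\log\oddsr$ as the ideal log-odds; substituting $g=\Phir\tr\alphar-\log\oddsr$ shows the excess population loss equals $\E\{P(R=r\mid\Lob)(e^{g}-g-1)\}$. Since $e^{g}-g-1\asymp g^2$ for bounded $g$, \ref{assump-2A} furnishes a local strong-convexity inequality in which the excess loss is comparable to $\|\Phir\tr\alphar-\log\oddsr\|_{P,2}^2$.

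For the stochastic step I would use the basic inequality $\calLr_N(\hat\alpha^r)-\calLr_N(\alphar_0)\le\lambda\{J^r(\alphar_0)-J^r(\hat\alpha^r)\}$ together with a second-order expansion of $\calLr_N$ about $\alphar_0$. Setting $\delta=\hat\alpha^r-\alphar_0$, the expansion produces a quadratic term $\tfrac12\delta\tr H_N(\bar\alpha)\delta$ and a linear term driven by $\nabla\calLr_N(\alphar_0)$, which is a centered empirical average (centered because the population gradient vanishes at $\alphar_0$) whose covariance I would bound by $O(K_r/N_r)$ using $\sup\|\Phir\|_2=O(K_r^{1/2})$ from \ref{assump-2E}, so that $\|\nabla\calLr_N(\alphar_0)\|_2=O_p(\sqrt{K_r/N_r})$. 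Convexity of $J^r$ bounds the penalty difference by $\lambda\|s_0\|_2\|\delta\|_2$ for a subgradient $s_0\in\partial J^r(\alphar_0)$, and \ref{assump-2G} is designed so that $\lambda\|s_0\|_2$ is of smaller order than the estimation rate $\sqrt{K_r/N_r}$. The quadratic term is handled by showing $\lambda_{\min}(H_N(\bar\alpha))$ is bounded below with high probability: on a neighborhood of $\alphar_0$ the factor $\oddsr(\,\cdot\,;\alpha)$ stays bounded away from zero, and a matrix-concentration bound for $N_r^{-1}\sum_{i:R_i=1_d}\Phir(\Lobi)\Phir(\Lobi)\tr$ around $\E\{\mathsf{1}_{R=1_d}\Phir\Phir\tr\}$, valid under $K_r^2=o(N_r)$ from \ref{assump-2C} and the eigenvalue bounds of \ref{assump-2F}, keeps it uniformly positive definite. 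Combining the three bounds in $\tfrac{c}{2}\|\delta\|_2^2\le\{O_p(\sqrt{K_r/N_r})+\lambda\|s_0\|_2\}\|\delta\|_2$ yields $\|\delta\|_2=O_p(\sqrt{K_r/N_r})$.

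It remains to transfer the approximation error to $\alphar_0$ and to convert to function norms. Using the strong-convexity characterization above together with the fact that $\alphar_0$ minimizes the population loss, the excess loss at $\alphar_0$ is at most that at the sieve approximant $\alpha_{K_r}^*$ of \ref{assump-2D}; since $\sup|\oddsr(\cdot;\alpha_{K_r}^*)-\oddsr|\le C_1K_r^{-\mu_1}$ and the odds are bounded below, this gives $\|\oddsr(\cdot;\alphar_0)-\oddsr\|_{P,2}=O(K_r^{-\mu_1})$ and, after passing through the lower eigenvalue bound of \ref{assump-2F} to control $\|\alphar_0-\alpha_{K_r}^*\|_2$ and then the basis bound of \ref{assump-2E}, also $\|\oddsr(\cdot;\alphar_0)-\oddsr\|_\infty=O(K_r^{1/2-\mu_1})$. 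Finally, a mean-value expansion of the link gives $|\oddsr(\lob;\hat\alpha^r)-\oddsr(\lob;\alphar_0)|\le C\|\Phir(\lob)\|_2\|\delta\|_2$; taking the supremum with \ref{assump-2E} produces the $O_p(\sqrt{K_r^2/N_r})$ term and taking the $L^2(P)$ norm with \ref{assump-2F} produces the $O_p(\sqrt{K_r/N_r})$ term. Adding the two approximation contributions and invoking $K_r^2=o(N_r)$ and $\mu_1>1/2$ delivers the stated $o_p(1)$ conclusions.

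The step I expect to be the main obstacle is the uniform control of the empirical Hessian jointly with the localization it requires. The lower bound on $\lambda_{\min}(H_N)$ is only available once $\alpha$ is confined to a neighborhood of $\alphar_0$ on which $\Phir\tr\alpha$ is bounded, yet establishing that $\hat\alpha^r$ lies in such a neighborhood already needs a preliminary consistency argument; I would obtain this from uniform convergence of the penalized loss together with the unique-minimizer \ref{assump-2B}, and then run the quadratic argument on the localized region. Interwoven with this is the requirement that the matrix-concentration error for the basis Gram matrix be $o_p(1)$ under the growth $K_r^2=o(N_r)$ (potentially with logarithmic factors that must be absorbed), and that the nonsmooth weighted-$\ell_1$ penalty be handled purely through subgradients so its contribution stays below the $\sqrt{K_r/N_r}$ estimation rate by \ref{assump-2G}. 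These growing-dimension bookkeeping issues, rather than any single inequality, are where the care is needed.
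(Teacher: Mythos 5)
Your proposal reaches the stated rates by essentially the paper's architecture: convexity of the penalized loss, a quadratic expansion whose linear term is an empirical gradient of size $O_p(\sqrt{K_r/N_r})$, a lower bound on the minimum eigenvalue of the empirical Hessian via matrix concentration (the paper's Lemmas \ref{quadratic} and \ref{linear}), subgradient control of the penalty through \ref{assump-2G}, and finally conversion to function norms through local Lipschitzness of $\exp$, with \ref{assump-2E} giving the sup-norm factor $K_r^{1/2}$ and \ref{assump-2F} giving the $L_2(P)$ bound. The genuine difference is the centering point. You expand around the population minimizer $\alphar_0$ of \ref{assump-2B}, which makes the gradient exactly centered, but then you owe a separate lemma showing that $\oddsr(\cdot\,;\alphar_0)$ approximates $\oddsr$ at rates $K_r^{-\mu_1}$ and $K_r^{1/2-\mu_1}$. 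The paper instead expands around the sieve approximant $\alpha_{K_r}^*$ of \ref{assump-2D} (Lemma \ref{alpha}); the gradient at $\alpha_{K_r}^*$ is then not centered, but its bias is precisely the approximation error and is absorbed into the gradient bound $O_p(\sqrt{K_r/N}+K_r^{-\mu_1})$ of Lemma \ref{linear}. What the paper's choice buys is that no property of $\alphar_0$ beyond existence is ever used, and no population-level strong convexity is needed; what your choice buys is a cleaner stochastic term, at the price of the extra population argument.

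That extra argument is also where your plan has a real gap, beyond bookkeeping. Your strong-convexity step needs $e^g-g-1\gtrsim g^2$ with $g={\Phir}\tr\alphar_0-\log\oddsr$, but this inequality fails for $g\to-\infty$ (the left side grows only linearly), and nothing in the assumptions bounds ${\Phir}\tr\alphar_0$ a priori; the same unboundedness blocks your transfer from the $L_2$ bound to $\|\alphar_0-\alpha_{K_r}^*\|_2$ via \ref{assump-2F}, since relating differences of odds to differences of log-odds needs two-sided bounds on $\oddsr(\cdot\,;\alphar_0)$. This is the same circularity you flag for the empirical Hessian, and the paper's resolution handles both at once: it does not run a preliminary consistency argument at all, but uses the ball-boundary form of convexity — $\hat{\alpha}^r\in\alpha_{K_r}^*+\Delta$ as soon as $\inf_{\delta\in\partial\Delta}\calLr_\lambda(\alpha_{K_r}^*+\delta)>\calLr_\lambda(\alpha_{K_r}^*)$ — so the mean-value point $\tilde{\alpha}^r$ is automatically confined to $\Delta$, where ${\Phir}\tr\tilde{\alpha}^r-{\Phir}\tr\alpha_{K_r}^*=o(1)$ uniformly by \ref{assump-2E}, \ref{assump-2C} and $\mu_1>1/2$, hence $\oddsr(\cdot\,;\tilde{\alpha}^r)\ge c_0/2$; and since $\log\oddsr(\cdot\,;\alpha_{K_r}^*)$ is uniformly bounded (unlike, a priori, ${\Phir}\tr\alphar_0$), every quantity in the argument is controlled on $\Delta$. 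Your proof becomes complete if you either run the same ball-boundary argument centered at $\alpha_{K_r}^*$ (recovering the paper's proof), or keep $\alphar_0$ as the center but supply a genuine localization for the population problem, e.g.\ a convexity-along-segments argument from $\alpha_{K_r}^*$ to $\alphar_0$ establishing boundedness of ${\Phir}\tr\alphar_0$ before invoking the quadratic lower bound.
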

	The proof is provided in Appendix \ref{sec:proof-odds}.
	
	Next, we establish the asymptotic normality of the empirical weighted estimating function $\hat{\bbP}_N\psi_\theta$ for each $\theta$. Note that the estimating function $\psi_\theta$ is a $q$-dimensional vector-valued function. We only need to consider each entry separately. Denote the $j$-th entry of $\psi_\theta$ and $u_\theta^r$ as $\psi_{\theta,j}$ and $u_{\theta,j}^r$ respectively. Let $n_{[ \ ]}\{\epsilon,\calF,\cdot\}$ denote the bracketing number of the set $\calF$ by $\epsilon$-brackets with respect to a specific norm. We will need the following additional conditions. 
	\begin{assumption}\label{assump3} 
		The following conditions hold for all missing pattern $r$ and all $\theta\in\Theta$ where $\Theta$ is a compact set:
		\begin{enumerate}[label={\textbf{~\Alph*:}},ref={Assumption~\theassumption.\Alph*},leftmargin=1cm]
			
			\item\label{assump-3A}
			There exist constants $C_2>0$ and $\mu_2>1/2$ such that for any $\theta$ and each missing pattern $r$, there exists a parameter $\beta_\theta^r$ satisfying $\sup_{\lob\in\domr}|u_\theta^r(\lob)-\Phir(\lob)\tr\beta_\theta^r|\le C_2K_r^{-\mu_2}$. 
			
			\item\label{assump-3B}
			Each of the true propensity odds, $\oddsr$, is contained in a set of smooth functions $\calM^r$. There exists constants $C_\calM>0$ and $d_\calM>1/2$ such that $\log n_{[ \ ]}\{\epsilon,\calM^r,L^\infty\}\le C_\calM(1/\epsilon)^{1/d_\calM}$.
			
			\item\label{assump-3C}
			The sets $\Psi:=\{\psi_{\theta,j}:\theta\in\Theta,j=1,\ldots,q\}$ are contained in a function class $\calH$ such that there exists constants $C_\calH>0$ and $d_\calH>1/2$ such that $\log n_{[ \ ]}\{\epsilon,\calH,L_2(P)\}\le C_\calH(1/\epsilon)^{1/d_\calH}$.
			
			\item\label{assump-3D}
			There exists a constant $C_3$ such that for all $j=1,\ldots,q$, 
			\begin{align*}
				\E\left\{\psi_{\theta,j}(L)-u_{\theta,j}^r(\Lob)\right\}^2\le \E\left[\underset{\theta}{\sup}\left\{\psi_{\theta,j}(L)-u_{\theta,j}^r(\Lob)\right\}^2\right]\le C_3^2\ .
			\end{align*}
			
			\item\label{assump-3E}
			$N_r^{1/\{2(\mu_1+\mu_2)\}}=o(K_r)$, which means that the growth rate of the number of basis functions has a lower bound.
		\end{enumerate}
	\end{assumption}
	\ref{assump-3A} is a requirement similar in spirit to \ref{assump-2D} such that the conditional expectation $u_\theta^r(\lob)$ can be well approximated as we extend the space spanned by the basis functions. \ref{assump-3B} and \ref{assump-3C} are conditions on the complexity of the function classes $\calM^r$ and $\calH$ to ensure uniform convergence over $\theta$. These assumptions are satisfied for many function classes. For instance, if $\calM^r$ is a Sobolev class of functions $f:[0,1]\mapsto\bbR$ such that $\|f\|_\infty\le1$ and the ($s-1$)-th derivative is absolutely continuous with $\int(f^{(s)})^2(x)dx\le1$ for some fixed $s\in\mathbb{N}$, then $\log n_{[ \ ]}\{\epsilon,\calM^r,L^\infty)\}\le C(1/\epsilon)^{1/s}$ by Example 19.10 of \citet{van2000asymptotic}. The condition $d_\calM>1/2$ is satisfied for all $s\ge1$. A H\"older class of functions also satisfies this condition \citep{fan2022optimal}. \ref{assump-3D} is a technical condition related to the envelope function such that we can apply the maximal inequality via bracketing. \ref{assump-3E} requires the number of basis functions to grow such that the approximation error decreases in general.
	
	\begin{theorem}\label{psi}
		Suppose that Assumptions \ref{assump1}, \ref{assump2} and \ref{assump3} hold. For any $\theta\in\Theta$,
		\begin{align*}
			\sqrt{N}\left[\hat{\bbP}_N\psi_\theta-\E\{\psi_\theta(L)\}\right]\overset{d}{\to}N(0,V_\theta)
		\end{align*}
		where $V_\theta$ is defined in Theorem \ref{efficiency-bound}.
	\end{theorem}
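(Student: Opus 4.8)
The plan is to establish the asymptotic normality of $\hat{\bbP}_N\psi_\theta$ by reusing the error decomposition \eqref{decomposition1}--\eqref{decomposition3} from Section~\ref{s:construction}, specialized to the proposed weights $\hat{w}^r(\Lobi) = \oddsr(\Lobi;\hat{\alpha}^r)$, and showing that only the term \eqref{decomposition3} contributes to the limiting variance $V_\theta$, while \eqref{decomposition1} and \eqref{decomposition2} are asymptotically negligible at the $\sqrt{N}$ scale. Summing over $r$, the leading stochastic term will be $N^{-1}\sum_{i=1}^N\big[\mathsf{1}_{R_i=1_d}\sum_r\oddsr(\Lobi)\{\psi_\theta(L_i)-u_\theta^r(\Lobi)\} + \sum_r\mathsf{1}_{R_i=r}u_\theta^r(\Lobi)\big] - \E\{\psi_\theta(L)\}$, which is exactly an i.i.d.\ average of the influence function $F_\theta(L,R)$ from Theorem~\ref{efficiency-bound}. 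A direct application of the classical central limit theorem for i.i.d.\ summands then yields convergence to $N(0,V_\theta)$, since $\E\{F_\theta(L,R)\}=0$ and $V_\theta=\E\{F_\theta F_\theta\tr\}$ by construction.

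First I would treat the empirical-imbalance term \eqref{decomposition2}, i.e.\ $N^{-1}\sum_i\{\mathsf{1}_{R_i=1_d}\oddsr(\Lobi;\hat\alpha^r)u_\theta^r(\Lobi)-\mathsf{1}_{R_i=r}u_\theta^r(\Lobi)\}$. By \ref{assump-3A}, $u_\theta^r$ is uniformly approximated to order $K_r^{-\mu_2}$ by $\Phir\tr\beta_\theta^r$, so writing $u_\theta^r$ as this linear combination plus a remainder reduces the imbalance of $u_\theta^r$ to the imbalances of the basis functions, which are controlled by the penalized first-order conditions. Using the imbalance bound derived at the end of Section~\ref{s:tuning}, the empirical imbalance of any $v\in\mathrm{span}\{\phir_k\}$ is bounded by $\lambda\{\gamma\sqrt{K_r}+2(1-\gamma)\sqrt{\PEN_2(\Phir\tr\hat\alpha^r)}\}\sqrt{\PEN_2(v)}$; combined with \ref{assump-2G} ($\lambda=o(1/\sqrt{K_rN_r})$) and the consistency of $\hat\alpha^r$ from Theorem~\ref{odds}, this forces \eqref{decomposition2} to be $o_p(N^{-1/2})$. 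The approximation-error contribution is separately $O(K_r^{-\mu_2})$, which is $o(N^{-1/2})$ under \ref{assump-3E}.

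Next I would handle \eqref{decomposition1}, namely $N^{-1}\sum_i\mathsf{1}_{R_i=1_d}\oddsr(\Lobi;\hat\alpha^r)\{\psi_\theta(L_i)-u_\theta^r(\Lobi)\}$. Since $u_\theta^r(\lob)=\E\{\psi_\theta(L)\mid\Lob=\lob,R=1_d\}$, the summand is conditionally mean-zero given $\Lob$ on the complete-case event, so with the \emph{true} odds $\oddsr$ this is a genuine mean-zero average entering $V_\theta$. The work is to replace $\oddsr(\,\cdot\,;\hat\alpha^r)$ by $\oddsr$ at negligible cost: I would write the difference as an empirical process indexed by the estimated odds and invoke a stochastic equicontinuity / maximal-inequality argument. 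Here \ref{assump-3B} (bracketing entropy of the smooth class $\calM^r$ containing $\oddsr$), \ref{assump-3C} (entropy of $\calH\supseteq\Psi$), and \ref{assump-3D} (envelope bound $C_3$) supply a uniform-in-$\theta$ Donsker-type control, while the $L_2(P)$ consistency rate $\|\oddsr(\cdot;\hat\alpha^r)-\oddsr\|_{P,2}=O_p(\sqrt{K_r/N_r}+K_r^{-\mu_1})$ from Theorem~\ref{odds} shows the fluctuation shrinks. The product of the entropy-driven modulus of continuity with this rate must be shown to be $o_p(N^{-1/2})$, which is where \ref{assump-3E} and the joint entropy exponents $d_\calM,d_\calH>1/2$ are exploited.

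The main obstacle I expect is precisely this last step: controlling \eqref{decomposition1} uniformly over $\theta\in\Theta$ while the indexing odds function $\oddsr(\cdot;\hat\alpha^r)$ is itself random and drifts toward $\oddsr$ as $K_r\to\infty$. The product class $\{\mathsf{1}_{R=1_d}m(\lob)(\psi_{\theta,j}-u_{\theta,j}^r):m\in\calM^r,\theta\in\Theta\}$ must be shown to be $P$-Donsker (or at least to have a well-behaved modulus of continuity), so that the empirical process evaluated at $\hat\alpha^r$ differs from its population counterpart by $o_p(N^{-1/2})$; this requires combining the bracketing-entropy bounds of \ref{assump-3B}--\ref{assump-3C} multiplicatively and verifying the entropy integral converges, then coupling the resulting rate with the $L_2$ convergence of the odds. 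Once \eqref{decomposition1} and \eqref{decomposition2} are shown negligible and \eqref{decomposition3} is matched to the i.i.d.\ average of $F_\theta$, the conclusion follows from the Lindeberg--L\'evy CLT applied entrywise and the Cram\'er--Wold device.
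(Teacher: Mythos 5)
Your overall route coincides with the paper's own proof: Appendix \ref{sec:proof-normality} refines the decomposition \eqref{decomposition1}--\eqref{decomposition3} into four terms $S_{\theta,1}^r,\ldots,S_{\theta,4}^r$, shows $\sqrt{N}|S_{\theta,i}^r|=o_p(1)$ for $i=1,2,3$ (a bracketing/maximal-inequality argument for $S_{\theta,1}^r$ in Lemma \ref{S1}, a product-of-errors argument for $S_{\theta,2}^r$ in Lemma \ref{S2}, and the penalization imbalance bound with \ref{assump-2G} for $S_{\theta,3}^r$ in Lemma \ref{S3}), and applies the CLT to $S_{\theta,4}^r$, whose sum over $r$ is the i.i.d.\ average of $F_\theta$. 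Your treatment of \eqref{decomposition1}, of the basis-span part of \eqref{decomposition2}, and of the final CLT step all match this.

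The genuine gap is your disposal of the approximation-error remainder in \eqref{decomposition2}, i.e.\ the term $N^{-1}\sum_i\{\mathsf{1}_{R_i=1_d}\oddsr(\Lobi;\hat{\alpha}^r)-\mathsf{1}_{R_i=r}\}\{u_\theta^r(\Lobi)-\Phir(\Lobi)\tr\beta_\theta^r\}$, which is exactly the paper's $S_{\theta,2}^r$. You bound it crudely by $O(K_r^{-\mu_2})$ and assert this is $o(N^{-1/2})$ ``under \ref{assump-3E}.'' That implication is false: \ref{assump-3E} states $N_r^{1/\{2(\mu_1+\mu_2)\}}=o(K_r)$, which only yields $K_r^{-(\mu_1+\mu_2)}=o(N_r^{-1/2})$, not $K_r^{-\mu_2}=o(N^{-1/2})$. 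Indeed, under \ref{assump-2C} one has $K_r=o(\sqrt{N_r})$, hence $K_r^{-\mu_2}\sqrt{N_r}\ge K_r^{1-\mu_2}(\sqrt{N_r}/K_r)\to\infty$ whenever $\mu_2\le 1$, a case the assumptions permit since they only require $\mu_2>1/2$. The missing idea --- and the reason this term is nonetheless negligible --- is the near-orthogonality exploited in Lemma \ref{S2}: by the balancing identity \eqref{balancing-condition-expectation} under CCMV, $\E[\{\mathsf{1}_{R=1_d}\oddsr(\Lob)-\mathsf{1}_{R=r}\}g(\Lob)]=0$ for any $g$, so the mean of this term involves only the piece with $\oddsr(\cdot;\hat{\alpha}^r)-\oddsr$ and is bounded via Cauchy--Schwarz by $\|\oddsr(\cdot;\hat{\alpha}^r)-\oddsr\|_{P,2}\,\|u_\theta^r-{\Phir}\tr\beta_\theta^r\|_{P,2}=O_p\bigl((\sqrt{K_r/N_r}+K_r^{-\mu_1})K_r^{-\mu_2}\bigr)$, which is $o_p(N^{-1/2})$ using $\mu_2>1/2$ and \ref{assump-3E}; the remaining centered fluctuation is then controlled by a bracketing maximal inequality whose envelope is of order $K_r^{-\mu_2}\to0$. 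Without this bias-times-bias structure (or an equivalent argument), your claim that \eqref{decomposition2} is $o_p(N^{-1/2})$ does not follow from the stated assumptions, and the proof breaks at this step.
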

	To prove the theorem, we utilize a few lemmas of the bracketing number $n_{[ \ ]}\{\epsilon,\calF,\cdot\}$. The proofs of the theorem and lemmas are given separately in Appendices \ref{sec:proof-normality} and \ref{sec:lemma}.
	
	With further assumptions, we show the consistency and asymptotical normality of $\hat{\theta}_N$ that solves $\hat{\bbP}_N\psi_\theta=0$.
	
	\begin{assumption}\label{assump4} 
		The following conditions hold for all missing pattern $r$ and all $\theta\in\Theta$:
		\begin{enumerate}[label={\textbf{~\Alph*:}},ref={Assumption~\theassumption.\Alph*},leftmargin=1cm]
			\item\label{assump-4A}
			For any sequence $\{\theta_n\}\in\Theta$ , $\E\{\underset{1\le j\le q}{\max}|\psi_{\theta_n,j}(L)|\}\to 0$ implies $\|\theta_n-\theta_0\|_2\to 0$
			
			\item\label{assump-4B}
			For each $j$-th entry and any $\delta>0$, there exists an envelop function $f_{\delta,j}$ such that $|\psi_{\theta,j}(l)-\psi_{\theta_0,j}(l)|\le f_{\delta,j}(l)$ for any $\theta$ such that $\|\theta-\theta_0\|_2\le\delta$. Besides, $\|f_{\delta,j}\|_{P,2}\to0$ when $\delta\to0$.
		\end{enumerate}
	\end{assumption}
	\ref{assump-4A} is a standard regularity assumption for Z-estimation. \ref{assump-4B} corresponds to the continuity assumption on $\psi(l,\theta)$ with respect to $\theta$. For example, the Lipschitz class of functions $\{f_\theta:\theta\in\Theta\}$ satisfies this condition if $\Theta$ is compact. More precisely, there exists an uniform envelop function $f$ such that $|f_{\theta_1}(l)-f_{\theta_2}(l)|\le\|\theta_1-\theta_2\|_2f(l)$ for any $\theta_1,\theta_2\in\Theta$ where $\|f\|_{P,2}<\infty$. Now, we establish the theorem.
	
	\begin{theorem}\label{theta}
		Suppose that Assumptions \ref{assump1}--\ref{assump4} hold. Then
		\begin{align*}
			\hat{\theta}_N\xrightarrow{P}\theta_0
		\end{align*}
		and
		\begin{align*}
			N^{\frac{1}{2}}(\hat{\theta}_N-\theta_0)\overset{d}{\to}N(0,D_{\theta_0}^{-1}V_{\theta_0}D_{\theta_0}^{-1\tr})\ ,
		\end{align*}
		where $D_{\theta_0}^{-1}V_{\theta_0}D_{\theta_0}^{-1\tr}$ is the asymptotic variance bound in Theorem \ref{efficiency-bound}. Therefore, $\hat{\theta}_N$ is semiparametrically efficient. 
		
		Similar to \ref{assump-3B}, suppose that the set of derivatives of of estimating functions $\calJ_\Theta:=\{\dot{\psi}_\theta:\theta\in\Theta\}$ satisfies $n_{[ \ ]}\{\epsilon,\calJ_\Theta,L_1(P)\}<\infty$, and $\dot{\psi}_\theta$ is continuous in a neighborhood of $\theta_0$.
		Also suppose that there exists a constant $\lambda_{\max}'$ such that $\sup_{\theta\in\Theta}\|\E\{\dot{\psi}_\theta(L)\dot{\psi}_\theta(L)\tr\}\|_2\le\lambda_{\max}'$.
		Then,
		\begin{align*}
			\hat{D}_{\hat{\theta}_N}=\frac{1}{N}\sum_{i=1}^N\left\{\mathsf{1}_{R_i=1_d}\hat{w}(L_i)\dot{\psi}_{\hat{\theta}_N}(L_i)\right\}
		\end{align*}
		is a consistent estimator of $D_{\theta_0}$. Furthermore, let $\hat{u}_{\hat{\theta}_N}^r$ be an estimator of the conditional expectation $u_{\hat{\theta}_N}^r$ such that $\sup_{\lob\in\domr}|\hat{u}_{\hat{\theta}_N,j}^r(\lob)-u_{\hat{\theta}_N,j}^r(\lob)|=o_p(1)$ for each $j$-th entry. Then, $\hat{D}_{\hat{\theta}_N}^{-1}V_{\hat{\theta}_N}\hat{D}_{\hat{\theta}_N}^{-1\tr}$ is a consistent estimator of $D_{\theta_0}^{-1}V_{\theta_0}D_{\theta_0}^{-1\tr}$ where
		{\footnotesize
			\begin{align}\label{asymptotic-variance}
				\hat{V}_{\hat{\theta}_N}&=\frac{1}{N}\sum_{i=1}^N\left(\hat{F}_i\hat{F}_i\tr\right),\nonumber \\ 
				\hat{F}_i&=\sum_{r\in\calR}\mathsf{1}_{R_i=r}\hat{u}_{\hat{\theta}_N}^r(\Lobi)+\mathsf{1}_{R_i=1_d}\sum_{r\in\calR}\oddsr(\Lobi;\hat{\alpha}^r)\left\{\psi_{\hat{\theta}_N}(L)-\hat{u}_{\hat{\theta}_N}^r(\Lobi)\right\}-\hat{\bbP}_N\psi_{\hat{\theta}_N}\ .
			\end{align}
		}
	\end{theorem}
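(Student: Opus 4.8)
\emph{Consistency.} The plan is to treat $\hat\theta_N$ as a Z-estimator and invoke the standard consistency theorem for such estimators (e.g., \citet{van2000asymptotic}). Two ingredients are required: the uniform convergence $\sup_{\theta\in\Theta}\|\hat\bbP_N\psi_\theta-\E\{\psi_\theta(L)\}\|_2\xrightarrow{P}0$, and well-separation of the root $\theta_0$. For the uniform convergence I would revisit the decomposition \eqref{decomposition1}--\eqref{decomposition3} already used in the proof of Theorem \ref{psi}: the bracketing-entropy condition \ref{assump-3C} on $\calH\supseteq\Psi$, together with the envelope bound \ref{assump-3D} and the continuity condition \ref{assump-4B}, upgrades the control of each of the three terms from pointwise to uniform over $\theta\in\Theta$. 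Well-separation is exactly the content of \ref{assump-4A}: since $\E\{\psi_\theta(L)\}=0$ only at $\theta_0$ and any sequence with $\E\{\max_j|\psi_{\theta_n,j}(L)|\}\to0$ must satisfy $\theta_n\to\theta_0$, the map $\theta\mapsto\|\E\{\psi_\theta(L)\}\|_2$ is bounded away from zero outside every neighborhood of $\theta_0$. Combining these yields $\hat\theta_N\xrightarrow{P}\theta_0$.

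\emph{Asymptotic normality and efficiency.} Because the estimated weights $\hat w(L_i)=\sum_{r\in\calR}\oddsr(\Lobi;\hat\alpha^r)$ do not depend on $\theta$, the map $\theta\mapsto\hat\bbP_N\psi_\theta$ is differentiable with derivative $\hat D_\theta=\frac1N\sum_{i=1}^N\mathsf{1}_{R_i=1_d}\hat w(L_i)\dot\psi_\theta(L_i)$. An exact mean-value expansion of $0=\hat\bbP_N\psi_{\hat\theta_N}$ about $\theta_0$ (applied row by row, with intermediate points $\tilde\theta$ all lying between $\hat\theta_N$ and $\theta_0$) then gives $\sqrt N(\hat\theta_N-\theta_0)=-\hat D_{\tilde\theta}^{-1}\sqrt N\,\hat\bbP_N\psi_{\theta_0}+o_p(1)$. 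Theorem \ref{psi} at $\theta_0$ supplies $\sqrt N\,\hat\bbP_N\psi_{\theta_0}\overset d\to N(0,V_{\theta_0})$ (using $\E\{\psi_{\theta_0}(L)\}=0$), and since $\tilde\theta\xrightarrow{P}\theta_0$, the consistency of $\hat D_\theta$ established below yields $\hat D_{\tilde\theta}\xrightarrow{P}D_{\theta_0}$. Slutsky's theorem then delivers the stated limit $N(0,D_{\theta_0}^{-1}V_{\theta_0}D_{\theta_0}^{-1\tr})$, which coincides with the bound of Theorem \ref{efficiency-bound}, establishing semiparametric efficiency.

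\emph{Consistency of $\hat D_{\hat\theta_N}$ and of the variance estimator.} To prove $\hat D_{\hat\theta_N}\xrightarrow{P}D_{\theta_0}$ I would split $\hat D_{\hat\theta_N}-D_{\theta_0}=\big[\hat D_{\hat\theta_N}-\E\{\dot\psi_{\hat\theta_N}(L)\}\big]+\big[\E\{\dot\psi_{\hat\theta_N}(L)\}-\E\{\dot\psi_{\theta_0}(L)\}\big]$, where $D_{\theta_0}=\E\{\dot\psi_{\theta_0}(L)\}$ by the exchange of differentiation and expectation justified under \ref{assump-1A}. The first bracket is a weighted empirical mean of $\dot\psi_{\hat\theta_N}$ minus its population analogue; applying the machinery of Theorem \ref{psi} with $\dot\psi$ in place of $\psi$---now using the finite bracketing number of $\calJ_\Theta$ in $L_1(P)$ and the moment bound $\sup_{\theta\in\Theta}\|\E\{\dot\psi_\theta(L)\dot\psi_\theta(L)\tr\}\|_2\le\lambda_{\max}'$---gives a uniform-in-$\theta$ weak law, so this bracket is $o_p(1)$. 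The second bracket vanishes by continuity of $\theta\mapsto\E\{\dot\psi_\theta(L)\}$ near $\theta_0$ and $\hat\theta_N\xrightarrow{P}\theta_0$. For $\hat V_{\hat\theta_N}\xrightarrow{P}V_{\theta_0}$ I would show the plug-in terms $\hat F_i$ in \eqref{asymptotic-variance} converge to the oracle influence function $F_{\theta_0}(L_i,R_i)$ of Theorem \ref{efficiency-bound}: the estimated odds converge uniformly by Theorem \ref{odds}, the estimated conditional expectations satisfy $\sup_{\lob\in\domr}|\hat u^r_{\hat\theta_N,j}(\lob)-u^r_{\hat\theta_N,j}(\lob)|=o_p(1)$ by assumption, $\hat\theta_N\xrightarrow{P}\theta_0$, and the centering $\hat\bbP_N\psi_{\hat\theta_N}$ is identically zero at the solution. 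Bounded-moment arguments then control the products so that $\frac1N\sum_i\hat F_i\hat F_i\tr-\frac1N\sum_i F_{\theta_0}F_{\theta_0}\tr=o_p(1)$, while the second average converges to $V_{\theta_0}$ by the law of large numbers. Continuity of matrix inversion (valid since $D_{\theta_0}$ is nonsingular) finally upgrades these two consistency results to $\hat D_{\hat\theta_N}^{-1}\hat V_{\hat\theta_N}\hat D_{\hat\theta_N}^{-1\tr}\xrightarrow{P}D_{\theta_0}^{-1}V_{\theta_0}D_{\theta_0}^{-1\tr}$.

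\emph{Main obstacle.} I expect the hard part to be the uniform (in $\theta$) control of the weighted empirical process with \emph{estimated} weights: the weights $\hat w$ carry the penalized-odds estimation error, so $\theta\mapsto\hat\bbP_N\psi_\theta$ is not a standard empirical process and the classical equicontinuity arguments do not apply directly. The two saving features are that the weights are common across $\theta$---so the differentiation in $\theta$ is exact and the decomposition \eqref{decomposition1}--\eqref{decomposition3} transfers verbatim to $\dot\psi$---and that Theorem \ref{odds} already furnishes the uniform consistency of $\hat w$. The remaining effort is to verify that the entropy conditions \ref{assump-3C} for $\Psi$ and the analogous hypothesis on $\calJ_\Theta$ deliver the needed equicontinuity uniformly over the compact set $\Theta$.
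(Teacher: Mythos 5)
Your consistency argument and your plan for $\hat{D}_{\hat{\theta}_N}$ and $\hat{V}_{\hat{\theta}_N}$ track the paper's proof closely: the paper also gets consistency from the uniform convergence $\sup_{\theta\in\Theta}|\hat{\bbP}_N\psi_\theta-\E\{\psi_\theta(L)\}|=o_p(1)$ (its Lemma \ref{uniformbound}) combined with \ref{assump-4A}, and it also proves consistency of $\hat{D}_{\hat{\theta}_N}$ by splitting off the weight-estimation error (controlled by Theorem \ref{odds}), invoking a Glivenko--Cantelli argument for the class built from $\calJ_\Theta$, and using continuity of $\dot{\psi}_\theta$ near $\theta_0$. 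The genuine divergence, and the gap, is in the asymptotic normality step. Your mean-value expansion of $0=\hat{\bbP}_N\psi_{\hat{\theta}_N}$ requires $\hat{D}_{\tilde{\theta}}\xrightarrow{P}D_{\theta_0}$ at random intermediate points $\tilde{\theta}$, and that in turn requires a uniform law of large numbers over $\calJ_\Theta=\{\dot{\psi}_\theta:\theta\in\Theta\}$, continuity of $\dot{\psi}_\theta$ near $\theta_0$, and the moment bound $\lambda_{\max}'$ --- exactly the supplementary hypotheses that the theorem introduces only for its second part (the variance estimator). The normality display, however, is asserted under Assumptions \ref{assump1}--\ref{assump4} alone, which say nothing about $\dot{\psi}_\theta$ beyond its existence in \ref{assump-1A}; so as a proof of the theorem as stated, your argument establishes a strictly weaker result. (A smaller point in the same direction: identifying $D_{\theta_0}$ with $\E\{\dot{\psi}_{\theta_0}(L)\}$ needs an interchange of differentiation and expectation that \ref{assump-1A} does not by itself supply.)

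The idea your proposal is missing is the paper's derivative-free route. The paper shows each coordinate class $\Psi_j$ is Donsker (its entropy integral is finite by \ref{assump-3C}), and uses \ref{assump-4B} together with the already-proved consistency of $\hat{\theta}_N$ to obtain the asymptotic equicontinuity $\bbG_N\psi_{\hat{\theta}_N,j}-\bbG_N\psi_{\theta_0,j}\xrightarrow{P}0$ for the \emph{unweighted} empirical process. It then transfers this to the weighted process $\hat{\bbP}_N$ via the decomposition into the terms $S^r_{\theta,1},S^r_{\theta,2},S^r_{\theta,3},S^r_{\theta,5},S^r_{\theta,6}$, handled by Lemmas \ref{S1}--\ref{S3}, \ref{S5} and \ref{S6}. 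Combining and using $\hat{\bbP}_N\psi_{\hat{\theta}_N}=0$, $\E\{\psi_{\theta_0}(L)\}=0$ gives $\sqrt{N}\left[\E\{\psi_{\hat{\theta}_N}(L)\}+\hat{\bbP}_N\psi_{\theta_0}\right]\xrightarrow{P}0$, after which Theorem \ref{psi} at $\theta_0$ and the Delta method applied to the \emph{deterministic} map $\theta\mapsto\E\{\psi_\theta(L)\}$ finish the proof --- and the only differentiability used is that of $\E\{\psi_\theta(L)\}$ at $\theta_0$ with nonsingular derivative $D_{\theta_0}$, which is exactly what \ref{assump-1A} provides. If you are willing to assume the extra conditions on $\calJ_\Theta$ from the outset, your classical Taylor-expansion route can be completed along the lines you sketch; to prove the theorem under its stated hypotheses, you need the equicontinuity argument instead.
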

	The proof is given in Appendix \ref{sec:lemma}.

	\section{Simulation}
	\label{s:simulation}
	A simulation study is conducted to evaluate the finite-sample performance of the proposed estimators. We designed three missing mechanisms that satisfy the CCMV condition. For each setting, we simulated 1,000 independent data sets, each of size $N$=1,000, where $X_j,\ j=1,2,3$, are generated independently from a truncated standard normal distribution with support $[-3,3]$. We considered a logistic regression model $\mathrm{logit}\{P(Y=1\mid X)\}=\theta_1X_1+\theta_2X_2+\theta_3X_3+\theta_4$ where the true coefficients $\theta_0=(1,-1,1,-2)$ are the parameters of interest. The response variable $Y$ is always observed, while $X_1,X_2, X_3$ could be missing. Four non-monotone response patterns are considered: $L^{1111}=(Y,X_1,X_2,X_3)$, $L^{1110}=(Y,X_1,X_2)$, $L^{1101}=(Y,X_1,X_3)$ and $L^{1100}=(Y,X_1)$. The categorical variable $R\in\{1111,1110,1101,1100\}$ indicating the response patterns is generated from a multinomial distribution with the probabilities shown in Appendix \ref{sec:setting}. 
	
	\begin{table}
		\caption{Results of the simulation study based on 1000 replications under three CCMV conditions.}
		\label{simulation-result}
		\begin{center}
			\begin{tabular}{lcccccccc}
				\toprule
				\multicolumn{1}{l}{Method} & \multicolumn{4}{c}{Bias} & \multicolumn{4}{c}{MSE}\\ 
				\cmidrule(lr){2-5}
				\cmidrule(lr){6-9}
				& $\theta_1$ & $\theta_2$ & $\theta_3$ & $\theta_4$ 
				& $\theta_1$ & $\theta_2$ & $\theta_3$ & $\theta_4$ \\ 
				\hline
				\multicolumn{1}{c}{Setting 1}\\
				Full & -0.022 & 0.007 & -0.010 & 0.009 & 0.016 & 0.012 & 0.012 & 0.012 \\ 
				Complete-case & 0.512 & 0.239 & 0.002 & 0.103 & 0.297 & 0.100 & 0.031 & 0.048 \\ 
				Mean-score & 0.218 & 0.098 & -0.004 & 0.084 & 0.077 & 0.050 & 0.034 & 0.047 \\ 
				True-weight & -0.065 & 0.086 & -0.044 & 0.058 & 0.049 & 0.094 & 0.059 & 0.072 \\ 
				Entropy-linear & -0.067 & 0.082 & -0.045 & 0.057 & 0.038 & 0.091 & 0.059 & 0.071 \\ 
				Entropy-parametric & -0.067 & 0.082 & -0.045 & 0.057 & 0.038 & 0.091 & 0.059 & 0.071 \\ 
				Entropy-basis & -0.039 & 0.064 & -0.027 & 0.088 & 0.030 & 0.053 & 0.044 & 0.054 \\ 
				Proposed & -0.024 & 0.060 & -0.032 & 0.112 & 0.033 & 0.047 & 0.043 & 0.054 \\ 
				\hline
				\multicolumn{1}{c}{Setting 2}\\
				Full & -0.022 & 0.007 & -0.010 & 0.009 & 0.016 & 0.012 & 0.012 & 0.012 \\ 
				Complete-case & 0.943 & 0.310 & -0.176 & -0.089 & 0.921 & 0.152 & 0.070 & 0.043 \\ 
				Mean-score & 0.318 & 0.085 & 0.059 & 0.052 & 0.140 & 0.073 & 0.045 & 0.041 \\ 
				True-weight & -0.045 & 0.080 & -0.063 & 0.038 & 0.050 & 0.100 & 0.067 & 0.065 \\ 
				Entropy-linear & -0.001 & 0.126 & -0.111 & 0.051 & 0.041 & 0.118 & 0.081 & 0.072 \\ 
				Entropy-parametric & -0.056 & 0.088 & -0.065 & 0.044 & 0.042 & 0.092 & 0.065 & 0.065 \\ 
				Entropy-basis & -0.062 & 0.054 & -0.058 & 0.070 & 0.042 & 0.074 & 0.064 & 0.067 \\ 
				Proposed & -0.023 & 0.029 & -0.044 & 0.089 & 0.029 & 0.051 & 0.055 & 0.057 \\ 
				\hline
				\multicolumn{1}{c}{Setting 3}\\
				Full & -0.022 & 0.007 & -0.010 & 0.009 & 0.016 & 0.012 & 0.012 & 0.012 \\ 
				Complete-case & 0.805 & 0.889 & -0.043 & 0.252 & 0.689 & 0.892 & 0.041 & 0.111 \\ 
				Mean-score & 0.253 & 0.119 & 0.002 & -0.071 & 0.098 & 0.090 & 0.050 & 0.061 \\ 
				True-weight & -0.076 & 0.195 & -0.115 & 0.081 & 0.115 & 0.220 & 0.134 & 0.129 \\ 
				Entropy-linear & -0.252 & 0.349 & -0.153 & -0.022 & 0.135 & 0.295 & 0.149 & 0.096 \\ 
				Entropy-parametric & -0.158 & 0.265 & -0.136 & 0.104 & 0.098 & 0.218 & 0.151 & 0.143 \\ 
				Entropy-basis & -0.084 & 0.187 & -0.129 & 0.109 & 0.135 & 0.244 & 0.147 & 0.145 \\ 
				Proposed & 0.016 & 0.035 & -0.021 & 0.047 & 0.039 & 0.054 & 0.052 & 0.071 \\ 
				\bottomrule 
			\end{tabular}
		\end{center}
	\end{table}
	We first analyzed the simulated data with the full dataset (Full), which is the ideal case with no missingness. We then analyzed the data in the complete case pattern (Complete-case), for which data in all missing patterns $r\neq 1_d$ are discarded, and an unweighted analysis is used for the remaining data. As we described before, the mean-score approach typically requires the discretization of continuous variables. To implement the mean-score method in \citet{chatterjee2010inference} (Mean-score), we created discrete surrogates as $\tilde{X}_j=\mathsf{1}_{X_j\ge0},\ j=1,2,3$. Note that the mean-score method makes a missing-at-random (MAR) assumption. Next, we considered the inverse propensity weighting methods with the true inverse propensity weights (True-weight). We also examined the performance of the estimators based on the estimated propensity odds using the entropy loss. \citet{tchetgen2018discrete} modeled each propensity odds using linear logistic regression (Entropy-linear), \emph{i.e.}, the logarithm of propensity odds, $\log(\oddsr)$, is a linear function of the shared variables in patterns $r$ and $1_d$, which is correctly specified only for setting 1. For settings 2 and 3, we also applied a parametric logistic regression with the correct model (Entropy-parametric). Note that they are unpenalized methods. When implementing the proposed methodology (Proposed), we set the basis functions $\Phir$ as the tensor products of functions of variables shared in patterns $r$ and $1_d$, where polynomials of degrees up to three are chosen for each continuous variable, and a binary indicator function is chosen for discrete response $Y$. Then, we orthogonalized the basis functions and followed the choices of $\{t_k\}_{k=1}^{K_ r}$ and matrix $\bfDr$ in \ref{s:tuning} to construct the penalty function. Notice that the constant and linear functions have zero roughness. In practice, to achieve stable estimations of the propensity odds, we set the tolerance to the empirical imbalances of these functions to be the same as the smallest positive roughness of other basis functions. We also applied the above basis functions and penalties with entropy loss (Entropy-basis) to estimate the propensity odds for comparing tailored and entropy loss functions. The biases and mean squared errors of each coefficient are shown in Table~\ref{simulation-result}. 
	
	\begin{table}[h]
		\caption{Results of the asymptotic property study: Avg Est SD stands for the average of estimated asymptotic standard deviation. MC SD is short for the Monte Carlo standard deviation, which is the standard deviation of proposed estimators of $\theta$ over 1000 replications. 95\% confidence interval was constructed using estimators of $\theta$ and asymptotic variance for each simulated dataset. The confidence interval coverage probabilities are reported based on 1000 replications.} 
		\label{variance-CI}
		\begin{center}
			\begin{tabular}{llcccccccc}
				\toprule
				\multicolumn{2}{l}{} & \multicolumn{4}{c}{Avg Est SD/MC SD} & \multicolumn{4}{c}{Coverage probabilities}\\ 
				\cmidrule(lr){3-6}
				\cmidrule(lr){7-10}
				& Sample size & $\theta_1$ & $\theta_2$ & $\theta_3$ & $\theta_4$ & $\theta_1$ & $\theta_2$ & $\theta_3$ & $\theta_4$\\
				\hline
				Setting 1 & N=1000 & 0.937 & 0.861 & 0.961 & 0.970 & 0.938 & 0.908 & 0.930 & 0.926 \\ 
				& N=2000 & 0.939 & 0.979 & 0.990 & 1.032 & 0.937 & 0.942 & 0.940 & 0.934 \\ 
				& N=5000 & 1.023 & 1.091 & 1.040 & 1.164 & 0.947 & 0.958 & 0.957 & 0.962 \\ 
				& N=10000 & 1.079 & 1.115 & 1.039 & 1.219 & 0.960 & 0.959 & 0.954 & 0.962 \\ 
				Setting 2 & N=1000 & 0.978 & 0.843 & 0.892 & 0.958 & 0.946 & 0.897 & 0.930 & 0.934 \\ 
				& N=2000 & 0.926 & 0.952 & 0.916 & 0.950 & 0.930 & 0.931 & 0.931 & 0.928 \\ 
				& N=5000 & 0.967 & 0.971 & 0.933 & 1.037 & 0.944 & 0.919 & 0.930 & 0.946 \\ 
				& N=10000 & 1.061 & 1.097 & 0.984 & 1.041 & 0.959 & 0.962 & 0.951 & 0.946 \\ 
				Setting 3 & N=1000 & 0.877 & 0.793 & 0.856 & 0.856 & 0.899 & 0.893 & 0.904 & 0.916 \\ 
				& N=2000 & 0.918 & 0.932 & 0.925 & 0.905 & 0.921 & 0.924 & 0.940 & 0.921 \\ 
				& N=5000 & 0.952 & 1.018 & 1.007 & 0.947 & 0.936 & 0.951 & 0.959 & 0.939 \\ 
				& N=10000 & 0.992 & 1.051 & 1.012 & 1.010 & 0.946 & 0.959 & 0.950 & 0.947 \\ 
				\bottomrule
			\end{tabular}
		\end{center}
	\end{table}
	
	In all three settings, the mean-score method has bad performances (considerable bias over $\theta_1$) that are likely attributed to the misspecified missing mechanisms and discretization. In settings 1 and 2, compared with the linear model and correctly specified parametric model, using basis functions as regressors and adding penalization alleviates overfitting such that the inverse propensity weights and estimates are more stable. The proposed method provides comparable or even smaller errors in these two settings. It is expected since the tailored loss also encourages the balance of observed variables. In setting 3, all the methods misspecify the missing mechanism, but the proposed method outperforms the others.
	
	We also assess the performance of asymptotic variance estimators \eqref{asymptotic-variance} and confidence interval coverage using the estimated standard errors. The results in Table~\ref{variance-CI} show that the average of estimated asymptotic standard deviations is close to the Monte Carlo standard deviations, which are the standard deviations of estimated parameters from 1000 datasets. The $95\%$ confidence interval constructed by estimated asymptotic variance has close to nominal coverage, especially when the sample size increases.

	\section{Real data application}
	We apply the proposed method to a study of hip fracture risk factors among male veterans \citep{barengolts2001risk}. The binary outcome of interest was the presence $(Y=1)$ or absence $(Y=0)$ of hip fracture. Preliminary exploratory analyses suggested that nine risk factors are potentially significant. A detailed description of the data set and its missing data patterns is in \citet{chen2004nonparametric}. Three (BMI, HGB, and Albumin) of these factors are continuous variables, and the remaining six (Etoh, Smoke, Dement, Antiseiz, LevoT4, Antichol) are discrete variables. Including $1_d$, where all variables are observed, there is a total of 38 missingness patterns. The parameters of interest are the coefficients of a logistic regression model with these nine variables as the predictors and the presence or absence of hip fracture as the binary outcome. 
	
	We implemented the same Complete-case analysis as in \citet{chen2004nonparametric} and our proposed method under the CCMV assumption. Like the simulation study, we generated orthogonalized tensor product basis functions for continuous variables and considered two penalties constructed by the roughness norm. Indicator functions are created for each discrete variable as basis functions. Since there are many combinations of possible values of discrete variables, we do not consider any further tensor products. In other words, the indicator functions of discrete variables are added to the log-linear model for propensity odds in an additive way. For example, if $\Lob=$(BMI, HGB, Dement, Antiseiz), the propensity odds model is
	\begin{align*}
		\log\oddsr(\lob)=\sum_{k=1}^{16}\alphar_k\phir_k(\mathrm{BMI, HGB})+\alphar_{17}\mathsf{1}_{\mathrm{Dement}=1}+\alphar_{18}\mathsf{1}_{\mathrm{Antiseiz}=1}
	\end{align*}
	where $\{\phir_k\}_{k=1}^{16}$ are tensor products of two sets of functions, each of which contains polynomials up to degree 3 for one continuous variable. Similarly, we should penalize the indicator functions that perform similarly to lower-order moments for stabilization. The relative tolerance to the empirical imbalances of the indicator functions was chosen to be the same as that for the constant and linear functions. 
	
	We present the parameter estimates, standard errors, and p-values in Table~\ref{realdata}. The results show that our proposed estimators have standard errors similar to those from the complete-case analysis. All nine predictor variables are statistically significant at the $5\%$ level. Our results largely agree with those of \citet{chen2004nonparametric}. One apparent difference is that the coefficient of LevoT4 is significant in our analysis but not in \citet{chen2004nonparametric}. The main difference between the two estimators is the missing mechanism assumptions.
	
	One may notice that some patterns may have rare observations. There are 26 patterns having at most three observations. It may draw attention to the stability of the propensity odds estimator for those patterns. Through the definition, the true propensity odds between a rare pattern and the complete cases should be negligible. Therefore, the contribution of rare patterns to the combined weights should also be negligible. Although the sample size may be too small to guarantee the asymptotic properties, the estimated odds for rare patterns using the proposed penalized estimation only slightly contribute to the total weights. Thus, the total weights and the weighted estimator for the parameters of interest remain stable.
	
	\begin{table}[h]
		\caption{Results of the Hip Fracture Data analysis: The asymptotic variance estimator is used to calculate the standard error(SE) and p-value.} 
		\label{realdata}
		\begin{center}
			\begin{tabular}{lcccccc}
				\toprule
				\multicolumn{1}{l}{Parameters} & \multicolumn{3}{c}{Complete-case} & \multicolumn{3}{c}{Proposed}\\ 
				\cmidrule(lr){2-4}
				\cmidrule(lr){5-7}
				& Estimate & SE & p-value & Estimate & SE & p-value \\ 
				\hline
				Etoh & 1.391 & 0.391 & $<$1e-3 & 1.410 & 0.397 & $<$1e-3 \\ 
				Smoke & 0.929 & 0.400 & 0.020 & 0.972 & 0.386 & 0.012 \\ 
				Dementia & 2.509 & 0.724 & 0.001 & 2.595 & 0.689 & $<$1e-3 \\ 
				Antiseiz & 3.311 & 1.064 & 0.002 & 3.435 & 0.935 & $<$1e-3 \\ 
				LevoT4 & 2.010 & 1.015 & 0.048 & 1.941 & 0.771 & 0.012 \\ 
				AntiChol & -1.918 & 0.768 & 0.012 & -1.872 & 0.796 & 0.019 \\ 
				BMI & -0.104 & 0.039 & 0.007 & -0.100 & 0.040 & 0.013 \\ 
				logHGB & -2.597 & 1.202 & 0.031 & -2.541 & 1.232 & 0.039 \\ 
				Albumin & -0.911 & 0.353 & 0.010 & -0.857 & 0.378 & 0.024 \\ 
				\bottomrule
			\end{tabular}
		\end{center}
	\end{table}

	\section{Discussion}
	We studied the estimation of model parameters defined by estimating functions with non-monotone missing data under the complete-case missing variable condition, which is a case of missing not at random. Using tailored loss for balance, functional bases for flexible modeling and penalization for stable estimation, we propose a method that can be viewed as a generalization of both inverse probability weighting and mean-score approach. The proposed framework improves the reliability of inferences and has significant implications for the analysis of datasets with non-monotone missingness, which is frequently encountered in biomedical data. Given the limited availability of analytical tools for addressing non-monotone missingness, we hope to fill an important gap in the literature.
	
	Despite its strengths, the proposed method has certain limitations. The CCMV assumption, while plausible, is inherently unverifiable from observed data, which may limit its acceptance in some applications. Recently, \cite{chen2022pattern} generalizes the assumption into pattern graphs, which allows detailed relationships among non-monotone missing patterns to be assumed. 
	One can notice that the CCMV assumption is a special case of the pattern graph. An example of such a graph is shown in Figure \ref{fig:CCMV}. All the missing patterns have the same and only parent, the fully observed pattern. 
	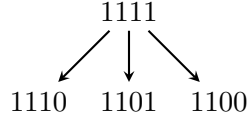
\begin{figure}[h]
		\centering
		\begin{tikzpicture}[node distance=1.2cm]
			\node(1d){1111};
			\node[below of = 1d,xshift=-1.2cm](r1){1110};
			\node[below of = 1d,xshift=0cm](r2){1101};
			\node[below of = 1d,xshift=1.2cm](r3){1100};
			\path[line](1d)--(r1);
			\path[line](1d)--(r2);
			\path[line](1d)--(r3);
		\end{tikzpicture}
		\caption{The pattern graph for CCMV}\label{fig:CCMV}
	\end{figure}
	A future research direction is to extend our tailored loss estimation to allow more flexible identifying assumptions, which are represented by general pattern graphs. The proposed method is well-suited to serve as a tool for these assumptions. Another important direction is to investigate the impact of misspecification on the identifying assumptions and explore ways to incorporate robustness checks. This includes designing sensitivity analyses that account for potential deviations from CCMV or other assumption structures.

	
	\section*{Acknowledgements}
	This work is partly supported by the National Science Foundation (DMS-1711952).
	Portions of this research were conducted with the advanced computing resources provided by Texas A\&M High Performance Research Computing.

	\begin{appendices}
		\section{Details of simulation settings}
		\label{sec:setting}
		The probabilities of data belonging to the four response patterns are respectively:
		\begin{align*}
			P(R=1111\mid L)&=\frac{1}{1+\odds^{1110}(L^{1110})+\odds^{1101}(L^{1101})+\odds^{1100}(L^{1100})},\\ 
			P(R=1110\mid L)&=\frac{\odds^{1110}(L^{1110})}{1+\odds^{1110}(L^{1110})+\odds^{1101}(L^{1101})+\odds^{1100}(L^{1100})},\\ 
			P(R=1101\mid L)&=\frac{\odds^{1101}(L^{1101})}{1+\odds^{1110}(L^{1110})+\odds^{1101}(L^{1101})+\odds^{1100}(L^{1100})},\\ 
			P(R=1100\mid L)&=\frac{\odds^{1100}(L^{1100})}{1+\odds^{1110}(L^{1110})+\odds^{1101}(L^{1101})+\odds^{1100}(L^{1100})},
		\end{align*}
		where $\oddsr(\Lob)=P(R=r\mid\Lob)/P(R=1111\mid\Lob)$ is the true propensity odds, which only depends on the shared variables $\Lob$ under CCMV condition, for $r=1110,1101,1100$. 
		
		\subsection*{Setting 1}
		The logarithms of true propensity odds are:
		\begin{align*}
			\log\odds^{1110}(L^{1110})=&X_1+X_2-\mathsf{1}_{Y=1}-0.5\ ,\\ 
			\log\odds^{1101}(L^{1101})=&\frac{1}{2}X_1+X_3+-0.5\mathsf{1}_{Y=1}-0.3\ ,\\
			\log\odds^{1100}(L^{1100})=&\frac{3}{2}X_1-\mathsf{1}_{Y=1}-0.4\ .
		\end{align*}
		
		\subsection*{Setting 2}
		The logarithms of true propensity odds are:
		\begin{align*}
			\log\odds^{1110}(L^{1110})=&\frac{1}{5}(X_1^2-9)(X_1+1.5)+\frac{1}{5}(X_2^2-9)(X_2+1)\\
			&+\frac{1}{10}(X_1+2)(X_2+2)(X_2-1)-2\mathsf{1}_{Y=1}+3\ ,\\
			\log\odds^{1101}(L^{1101})=&-\frac{1}{5}(X_1^2-9)(X_1+1)+\frac{1}{5}(X_3^2-9)(X_3+1.5)-2\mathsf{1}_{Y=1}\ ,\\
			\log\odds^{1100}(L^{1100})=&-\frac{1}{5}(X_1+2)(X_1+0.5)(X_1-4)-2\mathsf{1}_{Y=1}-1\ .
		\end{align*}
		
		\subsection*{Setting 3} 
		The logarithms of true propensity odds are:
		\begin{align*}
			\log\odds^{1110}(L^{1110})=&\frac{1}{10}(X_1^2-9)(X_1^2-4)X_1+\frac{1}{10}(X_2^2-9)(X_2+1)\\
			&+\frac{1}{4}(X_1+2)(X_2+2)(X_2-1)-2\mathsf{1}_{Y=1}\ ,\\
			\log\odds^{1101}(L^{1101})=&\frac{1}{10}(X_1^2-9)(X_1+1)+\frac{1}{10}(X_3^2-9)(X_3^2-4)X_3\\
			&+\mathsf{1}_{Y=1}\left\{(X_1+1)(X_3^2-4)-2\right\}\ ,\\
			\log\odds^{1100}(L^{1100})=&\mathsf{1}_{Y=0}\left\{\frac{1}{5}(X_1^2-9)(X_1^2-4)X_1-1\right\}\\
			&-\mathsf{1}_{Y=1}\left\{\frac{1}{10}(X_1^2-9)(X_1^2-2.5^2)(X_1+0.5)\right\}\ .
		\end{align*}

		We assume that the number of basis $K_r$ increases with sample size $N$. Precisely, one should predetermine a way to choose $K_r$ depending on the size of the fully observed pattern and missing pattern $r$, $N_r$, which also increases with $N$. In the following proofs, we use ``a large enough $N$'' to describe the requirements for sample size, while the explicit conditions may be related to $K_r$ or $N_r$.

		\section{Proof of Theorem \ref{efficiency-bound}}
		\label{sec:proof-efficiency}
		
		{\bf Proof sketch:}
		To show $D_{\theta_0}^{-1}V_{\theta_0}D_{\theta_0}^{-1\tr}$ is the efficiency bound, we closely follow the structure of semiparametric efficiency bound derivation of \cite{newey1990semiparametric}, \cite{bickel1993efficient} and \cite{chen2008semiparametric}. Briefly speaking, the efficiency bound is defined as the supremum of the Cramer-Rao bounds for all regular parametric submodels. If it exists, the efficient estimator should have the efficient influence function that gives the minimum $\E\{\zeta\zeta\tr\}$, which is known as the asymptotic variance of such an estimator. Based on Theorem 2.2 in \citet{newey1990semiparametric}, the regular and asymptotically linear estimator has the influence function satisfying \eqref{path-differentiability}. 
		Theorem 3.1 in \citet{newey1990semiparametric} reveals that the efficient influence function is in the tangent space $\calT$ where $\calT$ is defined as the mean square closure of all $q$-dimensional linear combinations of scores for all smooth parametric submodels.

		\begin{proof}
			Consider an arbitrary parametric submodel for the likelihood function of $(L,R)$
			\begin{align*}
				L_\alpha(l,r)
				=\prod_{s\in\calR}f_\alpha(l^s,s)^{\mathsf{1}_{r=s}}
			\end{align*}
			where $f_{\alpha_0}(l^s,s)$ gives the true distribution of observed variables in pattern $s$. The resulting score function is given by
			\begin{align*}
				S_\alpha(l,r)
				=\sum_{s\in\calR}\mathsf{1}_{r=s}S_\alpha(l^s,s)
			\end{align*}
			where $S_\alpha(l^{s},s)=\partial\log f_\alpha(l^{s},s)/\partial\alpha$ for $s\in\calR$. Recall that the parameter of interest $\theta_0$ is the solution to $\E\{\psi_\theta(L)\}=0$ and thus is a function of $\alpha$. Pathwise differentiability follows if we can find an influence function $\zeta(L,R)$ for all regular parametric submodels such that
			\begin{align}\label{path-differentiability}
				\frac{\partial\theta_0(\alpha_0)}{\partial\alpha}=\E\{\zeta(L,R)S_{\alpha_0}(L,R)\}\ .
			\end{align}
			To save notations, $\alpha$ is also used as the true parameter value $\alpha_0$ when the context is clear. Chain rule and Leibniz integral rule (differentiating under the integral) gives
			\begin{align*}
				\frac{\partial\E\{\psi_\theta(L)\}}{\partial\alpha}
				&=\int\frac{\partial\psi_\theta(l)f_\alpha(l)}{\partial\alpha}dl
				=\int\left\{\frac{\partial\psi_\theta(l)}{\partial\theta}\frac{\partial\theta(\alpha)}{\partial\alpha}f_\alpha(l)
				+\psi_\theta(l)\frac{\partial f_\alpha(l)}{\partial\alpha}\right\}dl\\
				&=\frac{\partial\theta(\alpha)}{\partial\alpha}\int\frac{\partial\psi_\theta(l)}{\partial\theta}f_\alpha(l)dl
				+\int\psi_\theta(l)\frac{\partial\log f_\alpha(l)}{\partial\alpha}f_\alpha(l)dl\\
				&=\frac{\partial\theta(\alpha)}{\partial\alpha}\frac{\partial\E\{\psi_\theta(L)\}}{\partial\theta}
				+\E\left\{\psi_\theta(L)\frac{\partial\log f_\alpha(L)}{\partial\alpha}\right\}\ .
			\end{align*}
			Therefore,
			\begin{align*}
				\frac{\partial\theta_0(\alpha_0)}{\partial\alpha}
				=-\left[\left.\frac{\partial\E\{\psi_\theta(L)\}}{\partial\theta}\right\vert_{\theta_0}\right]^{-1}\E\left\{\psi_{\theta_0}(L)\frac{\partial\log f_\alpha(L)}{\partial\alpha}\right\}\ .
			\end{align*}
			By the identification assumption, $f_\alpha(\lms\mid\lob,r)=f_\alpha(\lms\mid\lob,1_d)$. Thus, the marginal density
			\begin{align*}
				f_\alpha(l)
				=f_\alpha(l,1_d)+\sum_{r\neq 1_d, r\in\calR}f_\alpha(\lob,r)f_\alpha(\lms\mid\lob,1_d)\ .
			\end{align*}
			Define $S_\alpha(\lms\mid\lob,1_d)=\partial\log f_\alpha(\lms\mid\lob,1_d)/\partial\alpha$ where $\int S_\alpha(\lms\mid\lob,1_d)f_\alpha(\lms\mid\lob,1_d)d\lms=0$. Then,
			\begin{align*}
				\E\left\{\psi_\theta(L)\frac{\partial\log f_\alpha(L)}{\partial\alpha}\right\}
				&=\int\psi_\theta(l)\frac{\partial f_\alpha(l,1_d)}{\partial\alpha}dl\\
				&\quads+\sum_{r\neq 1_d, r\in\calR}\int\psi_\theta(l)\frac{\partial f_\alpha(\lob,r)f_\alpha(\lms\mid\lob,1_d)}{\partial\alpha}dl\ .
			\end{align*}
			The first term is 
			\begin{align*}
				\int\psi_\theta(l)\frac{\partial f_\alpha(l,1_d)}{\partial\alpha}dl
				=\int\psi_\theta(l)S_\alpha(l,1_d)f_\alpha(l,1_d)dl
				=\E\{\mathsf{1}_{R=1_d}\psi_\theta(L)S_\alpha(L,1_d)\}\ .
			\end{align*}
			For each $r\neq 1_d, r\in\calR$,
			\begin{align*}
				&\quads\int\psi_\theta(l)\frac{\partial f_\alpha(\lob,r)f_\alpha(\lms\mid\lob,1_d)}{\partial\alpha}dl\\
				&=\int\psi_\theta(l)\left\{\frac{\partial f_\alpha(\lob,r)}{\partial\alpha}f_\alpha(\lms\mid\lob,1_d)+f_\alpha(\lob,r)\frac{\partial f_\alpha(\lms\mid\lob,1_d)}{\partial\alpha}\right\}dl\\
				&=\int\psi_\theta(l)S_\alpha(\lob,r)f_\alpha(\lob,r)f_\alpha(\lms\mid\lob,1_d)dl\\
				&\quads+\int\psi_\theta(l)S_\alpha(\lms\mid\lob,1_d)f_\alpha(\lob,r)f_\alpha(\lms\mid\lob,1_d)dl\ .
			\end{align*}
			The first integral is
			\begin{align*}
				&\quads\int\psi_\theta(l)S_\alpha(\lob,r)f_\alpha(\lob,r)f_\alpha(\lms\mid\lob,1_d)dl\\
				&=\int\left\{\int\psi_\theta(l)f_\alpha(\lms\mid\lob,1_d)d\lms\right\}S_\alpha(\lob,r)f_\alpha(\lob,r)dl\\
				&=\int\E\{\psi_\theta(L)\mid\Lob=\lob,R=1_d\}S_\alpha(\lob,r)f_\alpha(\lob,r)dl\\
				&=\E\left\{\mathsf{1}_{R=r}u_\theta^r(\Lob)S_\alpha(L^r,r)\right\}\ .
			\end{align*}
			Recall that $\oddsr(\lob)=P(R=r\mid\lob)/P(R=1_d\mid\lob)=p(\lob,r)/p(\lob,1_d)$. The second integral is
			\begin{align*}
				&\quads\int\psi_\theta(l)S_\alpha(\lms\mid\lob,1_d)f_\alpha(\lob,r)f_\alpha(\lms\mid\lob,1_d)dl\\
				&=\int\psi_\theta(l)S_\alpha(\lms\mid\lob,1_d)\frac{f_\alpha(\lob,r)}{f_\alpha(\lob,1_d)}f_\alpha(\lob,1_d)f_\alpha(\lms\mid\lob,1_d)dl\\
				&=\E\left\{\mathsf{1}_{R=1_d}\psi_\theta(L)\oddsr(\Lob)S_\alpha(\Lms\mid\Lob,1_d)\right\}\\
				&=\E\left[\mathsf{1}_{R=1_d}\left\{\psi_\theta(L)-u_\theta(\Lob)\right\}\oddsr(\Lob)S_\alpha(\Lms\mid\Lob,1_d)\right]\\
				&=\E\left[\mathsf{1}_{R=1_d}\left\{\psi_\theta(L)-u_\theta(\Lob)\right\}\oddsr(\Lob)S_\alpha(L,1_d)\right]
			\end{align*}
			since
			\begin{align*}
				&\quads\E\left\{\mathsf{1}_{R=1_d}u_\theta(\Lob)\oddsr(\Lob)S_\alpha(\Lms\mid\Lob,1_d)\right\}\\
				&=\E\left[\mathsf{1}_{R=1_d}u_\theta(\Lob)\oddsr(\Lob)\E\left\{S_\alpha(\Lms\mid\Lob,1_d)\mid\Lob,R=1_d\right\}\right]
				=0\ ,
			\end{align*}
			and
			\begin{align*}
				&\quads\E\left[\E\left\{\psi_\theta(L)\mid\Lob,R=1_d\right\}\mathsf{1}_{R=1_d}\oddsr(\Lob)S_\alpha(L^r,1_d)\right]\\
				&=\E\left\{\mathsf{1}_{R=1_d}u_\theta(\Lob)\oddsr(\Lob)S_\alpha(L^r,1_d)\right\}
			\end{align*}
			where $S_\alpha(\lob,1_d)=\partial\log f_\alpha(\lob,1_d)/\partial\alpha$ and $S_\alpha(l,1_d)=S_\alpha(\lob,1_d)+S_\alpha(\lms\mid\lob,1_d)$. Therefore,
			{\footnotesize
				\begin{align*}
					\E\left\{\psi_\theta(L)\frac{\partial\log f_\alpha(L)}{\partial\alpha}\right\}
					&=\E\{\mathsf{1}_{R=1_d}\psi_\theta(L)S_\alpha(L,1_d)\}
					+\sum_{r\neq 1_d, r\in\calR}\E\left\{\mathsf{1}_{R=r}u_\theta^r(\Lob)S_\alpha(L^r,r)\right\}\\
					&\quads+\sum_{r\neq 1_d, r\in\calR}\E\left[\mathsf{1}_{R=1_d}\left\{\psi_\theta(L)-u_\theta(\Lob)\right\}\oddsr(\Lob)S_\alpha(L,1_d)\right]\ .
				\end{align*}
			}
			Notice that $\E\{\psi_{\theta_0}(L)\}=0$. Recall that we denote the derivative $\partial\E\{\psi_\theta(L)\}/\partial\theta$ at $\theta_0$ as $D_{\theta_0}$. Let the influence function $\zeta(L,R)=
			-D_{\theta_0}^{-1}F_{\theta_0}(L,R)$ where
			{\small
				\begin{align*}
					F_\theta(L,R)=\mathsf{1}_{R=1_d}\sum_{r\in\calR}\oddsr(\Lob)\{\psi_\theta(L)-u_\theta^r(\Lob)\}+\sum_{r\in\calR}\mathsf{1}_{R=r}u_\theta^r(\Lob)-\E\{\psi_\theta(L)\}\ .
				\end{align*}
			}
			Thus, equation \eqref{path-differentiability} is satisfied, and $\theta(\alpha)$ is pathwise differentiable. To calculate the efficiency bound, we need to find the efficient influence function, which is the projection of $\zeta(L,R)$ on the tangent set. The tangent set $\calT$ is defined as the mean square closure of all $q$-dimensional linear combinations of scores $S_\alpha$ for smooth parametric submodels,
			\begin{align*}
				\calT=\left\{h(L,R)\in\bbR^q:\E\{\|h\|^2\}\le\infty,
				\exists A_jS_{\alpha j}\textrm{ with }\lim_{j\to\infty}\E\{\|h-A_jS_{\alpha j}\|\}=0
				\right\}
			\end{align*}
			where $A_j$ is a constant matrix with $q$ rows. It can be verified by similar arguments as in \cite{newey1990semiparametric}. Note that each score $S_{\alpha j}$ has the same form of decomposition, which is the summation of partial scores for each missing pattern $r$. It is also easy to see that $\calT$ is linear and $\zeta$ belongs to the tangent set where $\mathsf{1}_{R=r}u_{\theta_0}^r(\Lob)$ taking the role of $\mathsf{1}_{R=r}S_\alpha(\lob,r)$ for $r\neq 1_d, r\in\calR$ and the rest taking the role of $\mathsf{1}_{R=1_d}S_\alpha(l,1_d)$. Therefore all the conditions of Theorem 3.1 in \cite{newey1990semiparametric} hold, so the efficiency bound for regular estimators of the parameter $\theta$ is given by $D_{\theta_0}^{-1}V_{\theta_0}D_{\theta_0}^{-1\tr}$ where $V_{\theta_0}=\E\{F_{\theta_0}(L,R)F_{\theta_0}(L,R)\tr\}$.
		\end{proof}

		\section{Proof of Theorem \ref{odds}}
		\label{sec:proof-odds}

		{\bf Proof sketch:}
		\ref{assump-2D} assumes that there is a close approximation of the true propensity odds. We will show our estimator is close to the approximation. With the help of a few inequalities, the distance of functions is proportionally bounded by the distance of coefficients. The key to the proof is to show the distance between two coefficients converges with a certain order. The problem is converted to the study of a quadratic form with random coefficients in Lemma \ref{alpha}. The quadratic coefficients form a symmetric random matrix. By the Weyl inequality, we can connect the random matrix with the magnitude of basis functions. So, we can apply the matrix Bernstein inequality to provide bounds on the spectral norm, \emph{i.e.}, the largest eigenvalue. Similarly, we can show that the linear coefficients are also bounded. Lemmas \ref{quadratic} and \ref{linear} provide the bound for the quadratic and linear coefficients respectively.

		\begin{proof} 
			By the triangle inequality and \ref{assump-2D},
			{\small
				\begin{align*}
					&\quads\underset{\lob\in\domr}{\sup}\left| \oddsr(\lob;\hat{\alpha}^r)-\oddsr(\lob)\right|\\
					&\le\underset{\lob\in\domr}{\sup}\left| \oddsr(\lob;\hat{\alpha}^r)-\oddsr(\lob;\alpha_{K_r}^*)\right|
					+\underset{\lob\in\domr}{\sup}\left| \oddsr(\lob;\alpha_{K_r}^*)-\oddsr(\lob)\right|\\
					&\le\underset{\lob\in\domr}{\sup}\left|\exp\left\{\Phir(\lob)\tr\hat{\alpha}^r\right\}-\exp\left\{\Phir(\lob)\tr\alpha_{K_r}^*\right\}\right|
					+C_1K_r^{-\mu_1}\ .
				\end{align*}
			}
			Since the exponential function is locally Lipschitz continuous, $|e^x-e^y|=e^y|e^{x-y}-1|\le2e^y|x-y|$ if $|x-y|\le\ln2$. By the triangle inequality, \ref{assump-2A}, and \ref{assump-2D}, 
			{\small
				\begin{align*}
					\underset{\lob\in\domr}{\sup}\oddsr(\lob;\alpha_{K_r}^*)
					&\le\underset{\lob\in\domr}{\sup}\oddsr(\lob)
					+\underset{\lob\in\domr}{\sup}\left|\oddsr(\lob;\alpha_{K_r}^*)-\oddsr(\lob)\right|\\
					&\le C_0+C_1K_r^{-\mu_1}\ .
				\end{align*}
			}
			Thus, there exists large enough $N^*$ such that $\sup_{\lob\in\domr}\oddsr(\lob;\alpha_{K_r}^*)
			\le 2C_0$ for all $N\ge N^*$. Therefore,
			\begin{align}\label{odds-diff}
				|\exp\left\{\Phir(\lob)\tr\hat{\alpha}^r\right\}-\exp\{\Phir(\lob)\tr\alpha_{K_r}^*\}|
				\le4C_0|\Phir(\lob)\tr\hat{\alpha}^r-\Phir(\lob)\tr\alpha_{K_r}^*|
			\end{align}
			if $|\Phir(\lob)\tr\hat{\alpha}^r-\Phir(\lob)\tr\alpha_{K_r}^*|\le\ln2$. By the Cauchy inequality and \ref{assump-2E}, 
			$|\Phir(\lob)\tr\hat{\alpha}^r-\Phir(\lob)\tr\alpha_{K_r}^*|\le K_r^{1/2}\|\hat{\alpha}^r-\alpha_{K_r}^*\|_2$ for any $\lob\in\domr$. By Lemma \ref{alpha}, 
			$\|\hat{\alpha}^r-\alpha_{K_r}^*\|_2=O_p(\sqrt{K_r/N_r}+K_r^{-\mu_1})$. More precisely, for any $\epsilon>0$, there exists a finite $M_\epsilon>0$ and $N_\epsilon>0$ such that 
			\begin{align*}
				P\left\{\|\hat{\alpha}^r-\alpha_{K_r}^*\|_2
				>M_\epsilon\left(\sqrt{\frac{K_r}{N_r}}+K_r^{-\mu_1}\right)
				\right\}<\epsilon
			\end{align*}
			for any $N\ge N_\epsilon$. Considering the complementary event, we can find $N_\epsilon^*$ large enough such that $M_\epsilon(K_r/\sqrt{N_r}+K_r^{1/2-\mu_1})<\ln2$ which makes the inequality \eqref{odds-diff} hold for any $N\ge N_\epsilon^*$. Then,
			{\footnotesize
				\begin{align*}
					P\left\{\underset{\lob\in\domr}{\sup}\left| \oddsr(\lob;\hat{\alpha}^r)-\oddsr(\lob)\right|
					\le4C_0M_\epsilon\left(\frac{K_r}{\sqrt{N_r}}+K_r^{\frac{1}{2}-\mu_1}\right)+C_1K_r^{-\mu_1}
					\right\}\ge1-\epsilon
				\end{align*}
			}
			for all $N\ge\max\{N^*,N_\epsilon,N_\epsilon^*\}$. In other words, 
			\begin{align*}
				\underset{\lob\in\domr}{\sup}|\oddsr(\lob;\hat{\alpha}^r)-\oddsr(\lob)|=O_p\left(\frac{K_r}{\sqrt{N_r}}+K_r^{\frac{1}{2}-\mu_1}\right)\ .
			\end{align*}
			Now, we consider the $L_2(P)$ norm.
			{\small
				\begin{align*}
					&\quads\|\oddsr(\Lob;\hat{\alpha}^r)-\oddsr(\Lob)\|_{P,2}\\
					&\le\|\oddsr(\Lob;\hat{\alpha}^r)-\oddsr(\Lob;\alpha_{K_r}^*)\|_{P,2}+\|\oddsr(\Lob;\alpha_{K_r}^*)-\oddsr(\Lob)\|_{P,2}\\
					&\le\|\oddsr(\Lob;\hat{\alpha}^r)-\oddsr(\Lob;\alpha_{K_r}^*)\|_{P,2}+\underset{\lob\in\domr}{\sup}\left| \oddsr(\lob;\alpha_{K_r}^*)-\oddsr(\lob)\right|\ .
				\end{align*}
			}
			Following the similar arguments, when $|\Phir(\lob)\tr\hat{\alpha}^r-\Phir(\lob)\tr\alpha_{K_r}^*|\le\ln2$, we have
			\begin{align*}
				&\quads\|\oddsr(\Lob;\hat{\alpha}^r)-\oddsr(\Lob;\alpha_{K_r}^*)\|_{P,2}^2\\
				&\le16C_0^2\int\left\{\Phir(\Lob)\tr\hat{\alpha}^r-\Phir(\Lob)\tr\alpha_{K_r}^*\right\}^2dP(L)\\
				&\le16C_0^2\int(\hat{\alpha}^r-\alpha_{K_r}^*)\tr\Phir(\Lob)\Phir(\Lob)\tr(\hat{\alpha}^r-\alpha_{K_r}^*)dP(L)\\
				&\le16C_0^2\underset{\lob\in\domr}{\sup}\lambda_{\max}\{\Phir(\lob)\Phir(\lob)\tr\}\int(\hat{\alpha}^r-\alpha_{K_r}^*)\tr(\hat{\alpha}^r-\alpha_{K_r}^*)dP(L)\\
				&\le16C_0^2\lambda_{\max}^*\|\hat{\alpha}^r-\alpha_{K_r}^*\|_2^2\ .
			\end{align*}
			Thus, $
			\|\oddsr(\Lob;\hat{\alpha}^r)-\oddsr(\Lob;\alpha_{K_r}^*)\|_{P,2}=O_p(\sqrt{K_r/N_r}+K_r^{-\mu_1})$. Therefore,
			\begin{align*}
				\|\oddsr(\Lob;\hat{\alpha}^r)-\oddsr(\Lob)\|_{P,2}
				=O_p\left(\sqrt{\frac{K_r}{N_r}}+K_r^{-\mu_1}\right)\ .
			\end{align*}
		\end{proof}

		\section{Proof of Theorem \ref{psi}}
		\label{sec:proof-normality}

		{\bf Proof sketch:}
		We further decompose the error terms and show that the first three terms converge to 0 with the rate faster than $1/\sqrt{N}$, and the last term contributes as the influence function. Since the components in the decomposition involve the estimator, they should be treated as random functions. So, we consider the uniform convergence over a set of functions and apply the theory in \citet{van2000asymptotic}. With the maximal inequality via bracketing, the problem is converted to the control of the entropy integral, which requires the calculation of bracketing numbers. 
		Lemmas \ref{bracket1}--\ref{bracket4} are bracketing inequalities which could be of independent interest.

		\begin{proof}
			First, recall that $\hat{\bbP}_N\psi_\theta-\E\{\psi_\theta(L)\}$ has the following decomposition 
			{\small
				\begin{align*}
					\hat{\bbP}_N\psi_\theta-\E\{\psi_\theta(L)\}
					=\sum_{r\in\calR}\left[\frac{1}{N}\sum_{i=1}^N\mathsf{1}_{R_i=1_d}\oddsr(\Lobi;\hat{\alpha}^r)\psi_\theta(L_i)-\E\{\mathsf{1}_{R=r}\psi_\theta(L)\}\right]\ .
				\end{align*}
			}
			For each missing pattern $r$, denote $1/N\sum_{i=1}^N\mathsf{1}_{R_i=1_d}\oddsr(\Lobi;\hat{\alpha}^r)\psi_\theta(L_i)$ as $\hat{\bbP}_N^r\psi_\theta$. Then, $\hat{\bbP}_N^r\psi_\theta-\E\{\mathsf{1}_{R=r}\psi_\theta(L)\}$ can be decomposed into 4 parts:
			\begin{align*}
				S_{\theta,1}^r
				&=\frac{1}{N}\sum_{i=1}^N\mathsf{1}_{R_i=1_d}\left\{\oddsr(\Lobi;\hat{\alpha}^r)-\oddsr(\Lob_i)\right\}\left\{\psi_\theta(L_i)-u^r_\theta(\Lob_i)\right\}\ ,\\
				S_{\theta,2}^r
				&=\frac{1}{N}\sum_{i=1}^N\left\{\mathsf{1}_{R_i=1_d}\oddsr(\Lobi;\hat{\alpha}^r)-\mathsf{1}_{R_i=r}\right\}\left\{u^r_\theta(\Lob_i)-\Phir(\Lob_i)\tr\beta_\theta^r\right\}\ ,\\
				S_{\theta,3}^r
				&=\frac{1}{N}\sum_{i=1}^N\left\{\mathsf{1}_{R_i=1_d}\oddsr(\Lobi;\hat{\alpha}^r)-\mathsf{1}_{R_i=r}\right\}\Phir(\Lob_i)\tr\beta_\theta^r\ ,\\
				S_{\theta,4}^r
				&=\frac{1}{N}\sum_{i=1}^N\mathsf{1}_{R_i=1_d}\oddsr(\Lob_i)\left\{\psi_\theta(L_i)-u^r_\theta(\Lob_i)\right\}\\
				&\quads+\frac{1}{N}\sum_{i=1}^N\mathsf{1}_{R_i=r}u^r_\theta(\Lob_i)-\E\{\mathsf{1}_{R=r}\psi_\theta(L)\}\ .
			\end{align*}
			For any fixed $\theta\in\Theta$ and any missing pattern $r$, by Lemmas \ref{S1}, \ref{S2}, and \ref{S3}, $\sqrt{N}|S_{\theta,i}^r|=o_p(1), i=1,2,3$. It's easy to see that $\E(S_{\theta,4}^r)=0$. Therefore, by the central limit theorem,
			\begin{align*}
				\sqrt{N}\left[\hat{\bbP}_N\psi_\theta-\E\{\psi_\theta(L)\}\right]\to\calN(0,V_\theta)
			\end{align*} 
			where $V_\theta=\E\{F_\theta(L,R)F_\theta(L,R)\tr\}$ and
			{\small
				\begin{align*}
					F_\theta(L,R)=\mathsf{1}_{R=1_d}\sum_{r\in\calR}\oddsr(\Lob)\{\psi_\theta(L)-u_\theta^r(\Lob)\}+\sum_{r\in\calR}\mathsf{1}_{R=r}u_\theta^r(\Lob)-\E\{\psi_\theta(L)\}\ .
				\end{align*}
			}
		\end{proof}

		\section{Proof of Theorem \ref{theta}}

		{\bf Proof sketch:}
		First, by \ref{assump-4A}, the convergence of $\hat{\theta}_N$ should be implied by the uniform convergence of $\hat{\bbP}_N\psi_\theta$ over $\theta$ in a compact set. Second, we study the convergence of $\E\{\psi_{\hat{\theta}_N}(L)\}$ and apply the Delta method to obtain the limiting distribution of $\hat{\theta}_N$. The functional version of the central limit theorem, \emph{i.e.} Donsker's theorem, is applied to achieve uniform convergence.

		\begin{proof}
			Denote the empirical average $N^{-1}\sum_{i=1}^N\psi_\theta(L_i)$ as $\bbP_N\psi_\theta$ and the centered and scaled version $\sqrt{N}[\bbP_N\psi_\theta-\E\{\psi_\theta(L)\}]$ as $\bbG_N\psi_\theta$. Recall the proposed weighted average is
			\begin{align*}
				\hat{\bbP}_N\psi_\theta=\frac{1}{N}\sum_{i=1}^N\left\{\mathsf{1}_{R_i=1_d}\hat{w}(L_i)\psi_\theta(L_i)\right\}\ .
			\end{align*}
			Since $\hat{\theta}_N$ is the solution to $\hat{\bbP}_N\psi_\theta=0$, by Lemma \ref{uniformbound},
			\begin{align*}
				\E\{\psi_{\hat{\theta}_N}(L)\}=\E\{\psi_{\hat{\theta}_N}(L)\}-\hat{\bbP}_N\psi_{\hat{\theta}_N}
				\le\underset{\theta\in\Theta}{\sup}\left|\hat{\bbP}_N\psi_\theta-\E\{\psi_\theta(L)\}\right|=o_p(1)\ .
			\end{align*}
			By identifiability condition \ref{assump-4A}, $\|\hat{\theta}_N-\theta_0\|_2\xrightarrow{P}0$. 
			
			Next, we investigate the asymptotic normality of $\hat{\theta}_N$. Although $\E\{\psi_{\hat{\theta}_N}(L)\}$ has a form of expectation over the population, it can be viewed as a random vector because $\hat{\theta}_N$ depends on the observations. Since $\hat{\theta}_N\xrightarrow{P}\theta_0$, one would expect that $\E\{\psi_{\hat{\theta}_N}(L)\}$ converges to $\E\{\psi_{\theta_0}(L)\}$ in some way. If the limiting distribution is known, one could apply the Delta method to obtain limiting distribution of $\hat{\theta}_N$. From Theorem \ref{psi}, we have the asymptotic normality of $\hat{\bbP}_N\psi_\theta$ for any fixed $\theta\in\Theta$. It is natural to consider
			\begin{align}\label{difference-of-interest}
				\left[\hat{\bbP}_N\psi_{\hat{\theta}_N}-\E\{\psi_{\hat{\theta}_N}(L)\}\right]
				-\left[\hat{\bbP}_N\psi_{\theta_0}-\E\{\psi_{\theta_0}(L)\}\right]
			\end{align}
			The above difference has a similar form to the asymptotic equicontinuity, which can be derived if the function class is Donsker. More precisely, consider the class of $j$-th entry of the estimating functions, $\Psi_j:=\{\psi_{\theta,j}:\theta\in\Theta\}$. It is Donsker due to Theorem 19.5 in \cite{van2000asymptotic} and
			\begin{align*}
				J_{[ \ ]}\{1,\Psi_j,L_2(P)\}
				&=\int_0^1\sqrt{\log n_{[ \ ]}\{\epsilon,\Psi_j,L_2(P)\}}d\epsilon\\
				&\le\int_0^1\sqrt{\log n_{[ \ ]}\{\epsilon,\calH,L_2(P)\}}d\epsilon\\
				&\le\int_0^1\sqrt{C_\calH}\epsilon^{-\frac{1}{2d_\calH}}d\epsilon
				=\sqrt{C_\calH}\le\infty\ .
			\end{align*}
			Then, by Section 2.1.2 in \cite{wellner2013weak}, we have the following asymptotic equicontinuity: for any $\epsilon,\eta>0$, there exists $C_{\epsilon,\eta}>0$ and $N_{\epsilon,\eta}$ such that for all $N\ge N_{\epsilon,\eta}$, 
			\begin{align*}
				P\left(\underset{\psi_{\theta,j}:\rho_P(\psi_{\theta,j}-\psi_{\theta_0,j})<C_{\epsilon,\eta}}{\sup}\left|\bbG_N\psi_{\theta,j}-\bbG_N\psi_{\theta_0,j}\right|\ge\epsilon\right)\le\frac{\eta}{2}
			\end{align*}
			where the seminorm $\rho_P$ is defined as $\rho_P(f)=\{P(f-Pf)^2\}^{1/2}$. Consider
			\begin{align*}
				&\quads\bbG_N\psi_{\hat{\theta}_N,j}-\bbG_N\psi_{\theta_0,j}\\
				&=\sqrt{N}\left[\bbP_N\psi_{\hat{\theta}_N,j}-\E\{\psi_{\hat{\theta}_N,j}(L)\}\right]
				-\sqrt{N}\left[\bbP_N\psi_{\theta_0,j}-\E\{\psi_{\theta_0,j}(L)\}\right]\ .
			\end{align*}
			Notice that $\rho_P(f)\le\|f\|_{P,2}$. By \ref{assump-4B}, for any $\delta>0$, there exists an envelop function $f_{\delta,j}$ such that
			{\small
				\begin{align*}
					P\left(\|\hat{\theta}_N-\theta_0\|_2<\delta\right)
					\le P\left(\|\psi_{\hat{\theta}_N,j}-\psi_{\theta_0,j}\|_{P,2}<C_\delta\right)
					\le P\left\{\rho_P(\psi_{\hat{\theta}_N,j}-\psi_{\theta_0,j})<C_\delta\right\}
				\end{align*}
			}
			where $C_\delta=\|f_{\delta,j}\|_{P,2}\to0$ when $\delta\to0$. Thus, there exists $\delta_{\epsilon,\eta}$ small enough such that $C_\delta\le C_{\epsilon,\eta}$ for all $\delta\le\delta_{\epsilon,\eta}$. Then, by the consistency of $\hat{\theta}_N$, there exists $N_{\epsilon,\eta}^*$ such that for all $N\ge N_{\epsilon,\eta}^*$, 
			\begin{align*}
				P\left(\|\hat{\theta}_N-\theta_0\|_2\ge\delta_{\epsilon,\eta}\right)
				\le\frac{\eta}{2}\ .
			\end{align*}
			Thus, 
			\begin{align*}
				P\left\{\rho_P(\psi_{\hat{\theta}_N,j}-\psi_{\theta_0,j})<C_{\epsilon,\eta}\right\}
				>1-\frac{\eta}{2}\ .
			\end{align*}
			Note that the event $\rho_P(\psi_{\hat{\theta}_N,j}-\psi_{\theta_0,j})<C_{\epsilon,\eta}$ and
			\begin{align*}
				\underset{\psi_{\theta,j}:\rho_P(\psi_{\theta,j}-\psi_{\theta_0,j})<C_{\epsilon,\eta}}{\sup}\left|\bbG_N\psi_{\theta,j}-\bbG_N\psi_{\theta_0,j}\right|<\epsilon
			\end{align*}
			happening together implies $|\bbG_N\psi_{\hat{\theta}_N,j}-\bbG_N\psi_{\theta_0,j}|<\epsilon$. By taking complementary event, for any $N\ge\max\{N_{\epsilon,\eta},N_{\epsilon,\eta}^*\}$, we obtain
			\begin{align*}
				&P\left(\left|\bbG_N\psi_{\hat{\theta}_N,j}-\bbG_N\psi_{\theta_0,j}\right|\ge\epsilon\right)
				\le P\left\{\rho_P(\psi_{\hat{\theta}_N,j}-\psi_{\theta_0,j})\ge C_{\epsilon,\eta}\right\}\\
				&\quad\quad\quad\quad
				+P\left(\underset{\psi_{\theta,j}:\rho_P(\psi_{\theta,j}-\psi_{\theta_0,j})<C_{\epsilon,\eta}}{\sup}\left|\bbG_N\psi_{\theta,j}-\bbG_N\psi_{\theta_0,j}\right|\ge\epsilon\right)\\
				&\le\frac{\eta}{2}+\frac{\eta}{2}=\eta\ .
			\end{align*}
			That is, for each $j$-th entry,
			\begin{align}\label{equicontinuity}
				\sqrt{N}\left[\bbP_N\psi_{\hat{\theta}_N,j}-\E\{\psi_{\hat{\theta}_N,j}(L)\}\right]
				-\sqrt{N}\left[\bbP_N\psi_{\theta_0,j}-\E\{\psi_{\theta_0,j}(L)\}\right]
				\xrightarrow{P}0\ .
			\end{align}
			Therefore, by the comparison between terms in \eqref{difference-of-interest} and $\bbG_N\psi_{\hat{\theta}_N,j}-\bbG_N\psi_{\theta_0,j}$, we should consider
			\begin{align*}
				\sqrt{N}\left[\hat{\bbP}_N\psi_{\hat{\theta}_N,j}-\bbP_N\psi_{\hat{\theta}_N,j}\right]
				-\sqrt{N}\left[\hat{\bbP}_N\psi_{\theta_0,j}-\bbP_N\psi_{\theta_0,j}\right]
			\end{align*}
			which can be decomposed as the following terms. 
			\begin{align*}
				&\quads\sum_{r\in\calR}\left(S_{\hat{\theta}_N,1}^r+S_{\hat{\theta}_N,2}^r+S_{\hat{\theta}_N,3}^r+S_{\hat{\theta}_N,5}^r-S_{\hat{\theta}_N,6}^r\right)\\
				&-\sum_{r\in\calR}\left(S_{\theta_0,1}^r+S_{\theta_0,2}^r+S_{\theta_0,3}^r+S_{\theta_0,5}^r-S_{\theta_0,6}^r\right)
			\end{align*}
			where
			\begin{align*}
				S_{\theta,5}^r&=\frac{1}{N}\sum_{i=1}^N\left\{\mathsf{1}_{R_i=1_d}\oddsr(\Lob_i)-\mathsf{1}_{R_i=r}\right\}\psi_\theta(L_i)\ ,\\
				S_{\theta,6}^r&=\frac{1}{N}\sum_{i=1}^N\left\{\mathsf{1}_{R_i=1_d}\oddsr(\Lob_i)-\mathsf{1}_{R_i=r}\right\}u^r_\theta(\Lob_i)\ .
			\end{align*}
			By Lemmas \ref{S1}, \ref{S2}, and \ref{S3}, $\sqrt{N}\left|S_{\theta,i}^r\right|=o_p(1),i=1,2,3$ for any missing pattern $r$ and $\theta\in\Theta$. Combing with Lemmas \ref{S5} and \ref{S6}, we have
			\begin{align}\label{weighted-equicontinuity}
				\sqrt{N}\left(\hat{\bbP}_N\psi_{\hat{\theta}_N}-\bbP_N\psi_{\hat{\theta}_N}-\hat{\bbP}_N\psi_{\theta_0}+\bbP_N\psi_{\theta_0}\right)
				\xrightarrow{P}0\ .
			\end{align}
			By Equations \eqref{weighted-equicontinuity} and \eqref{equicontinuity}, we have
			\begin{align*}
				\sqrt{N}\left[\hat{\bbP}_N\psi_{\hat{\theta}_N}-\E\{\psi_{\hat{\theta}_N}(L)\}-\hat{\bbP}_N\psi_{\theta_0}+\E\{\psi_{\theta_0}(L)\}\right]
				\xrightarrow{P}0\ .
			\end{align*}
			Since $\hat{\bbP}_N\psi_{\hat{\theta}_N}=0$ and $\E\{\psi_{\theta_0}(L)\}=0$, the above equation can be rewritten as
			\begin{align*}
				\sqrt{N}\left[\E\{\psi_{\hat{\theta}_N}(L)\}-\E\{\psi_{\theta_0}(L)\}+\hat{\bbP}_N\psi_{\theta_0}-\E\{\psi_{\theta_0}(L)\}\right]
				\xrightarrow{P}0\ .
			\end{align*}
			By Theorem \ref{psi},
			\begin{align*}
				\sqrt{N}\left[\hat{\bbP}_N\psi_{\theta_0}-\E\{\psi_{\theta_0}(L)\}\right]
				\overset{d}{\to}N(0,V_{\theta_0})\ .
			\end{align*}
			Since $D_{\theta_0}$ is nonsingular, by multivariate Delta method,
			\begin{align*}
				\sqrt{N}(\hat{\theta}_N-\theta_0)\overset{d}{\to}
				N\left(0,D_{\theta_0}^{-1}V_{\theta_0}D_{\theta_0}^{-1\tr}\right)\ .
			\end{align*}
			Therefore, $\hat{\theta}_N$ is semiparametrically efficient.
			
			Lastly, we look into the estimator for the asymptotic variance. We have the following decomposition:
			\begin{align*}
				\hat{D}_{\hat{\theta}_N}-D_{\theta_0}
				&=\frac{1}{N}\sum_{i=1}^N\left[\mathsf{1}_{R_i=1_d}\left\{\hat{w}(L_i)-\sum_{r\in\calR}\oddsr(\Lob_i)\right\}\dot{\psi}_{\hat{\theta}_N}(L_i)\right]\\
				&\quads+\frac{1}{N}\sum_{i=1}^N\mathsf{1}_{R_i=1_d}\frac{1}{P(R_i=1_d\mid L_i)}\dot{\psi}_{\hat{\theta}_N}(L_i)-D_{\hat{\theta}_N}+D_{\hat{\theta}_N}-D_{\theta_0}\ .
			\end{align*}
			where $\hat{w}(L_i)=\sum_{r\in\calR}\oddsr(\Lobi;\hat{\alpha}^r)$. 
			
			Consider the first term on the right hand side. By Theorem \ref{odds},
			\begin{align*}
				&\quads\|\hat{w}(l)-1/P(R=1_d\mid l)\|_\infty\\
				&\le\sum_{r\in\calR}\|\oddsr(\ \cdot\ ;\hat{\alpha}^r)-\oddsr\|_\infty
				=O_p(\sqrt{K_r^2/N_r}+K_r^{1/2-\mu_1})\ .
			\end{align*}
			Let
			\begin{align*}
				\bfF=\frac{1}{N}\sum_{i=1}^N\mathsf{1}_{R_i=1_d}\dot{\psi}_{\hat{\theta}_N}(L_i)\dot{\psi}_{\hat{\theta}_N}(L_i)\tr\ .
			\end{align*}
			Following the similar arguments in Lemma \ref{linear}, one can see that
			\begin{align*}
				&\quads\left\|\frac{1}{N}\sum_{i=1}^N\left[\mathsf{1}_{R_i=1_d}\left\{\hat{w}(L_i)-\sum_{r\in\calR}\oddsr(\Lob_i)\right\}\dot{\psi}_{\hat{\theta}_N}(L_i)\right]\right\|_2^2\\
				&\le\underset{\theta\in\Theta}{\sup}\frac{1}{N}\sum_{i=1}^N\mathsf{1}_{R_i=1_d}\left\{\hat{w}(L_i)-\sum_{r\in\calR}\oddsr(\Lob_i)\right\}^2\lambda_{\max}\left\{\bfF\right\}\\
				&\le\lambda_{\max}'\|\hat{w}(l)-1/P(R=1_d\mid l)\|_\infty^2=o_p(1)\ .
			\end{align*}
			
			Consider the second term on the right hand side. Notice that $\dot{\psi}_\theta$ is the Jacobian matrix of $\psi_\theta$. We consider all the entries of $\dot{\psi}_\theta$ together and abbreviate the subscripts in the following statements. Define a set of functions $\calF_\Theta:=\{f_{\theta}:\theta\in\Theta\}$ where $f_\theta(L,R):=1_{R=1_d}/P(R=1_d\mid L)\dot{\psi}_\theta(L)$. Similar to Lemma \ref{bracket4}, one can show that
			\begin{align*}
				n_{[ \ ]}\{\epsilon/\delta_0,\calF_\Theta,L_1(P)\}
				\le n_{[ \ ]}\{\epsilon,\calJ_\Theta,L_1(P)\}<\infty
			\end{align*}
			since $1_{R=1_d}/P(R=1_d\mid L)\le1/\delta_0$. Therefore, by Theorem 19.4 in \cite{van2000asymptotic}, $\calF_\Theta$ is P-Glivenko-Cantelli. Since the set $\calF_\Theta$ includes all entries of the Jacobian matrix, we consider the Frobenius/Euclidean norm of a matrix to construct the following convergence result.
			\begin{align*}
				\underset{f_\theta\in\calF_\Theta}{\sup}\left\|\bbP_Nf_\theta-Pf_\theta\right\|_F\xrightarrow{a.s.}0
			\end{align*}
			where $\|\cdot\|_F$ is the Euclidean norm of a matrix. The fact that $\|\cdot\|_2\le\|\cdot\|_F$ implies
			\begin{align*}
				\left\|\frac{1}{N}\sum_{i=1}^N\mathsf{1}_{R_i=1_d}\frac{1}{P(R_i=1_d\mid L_i)}\dot{\psi}_{\hat{\theta}_N}(L_i)-D_{\hat{\theta}_N}\right\|_2=o_p(1)\ .
			\end{align*}
			
			Finally, $D_{\hat{\theta}_N}\xrightarrow{P}D_{\theta_0}$ since $\|\hat{\theta}_N-\theta_0\|_2\xrightarrow{P}0$ and $\dot{\psi}_\theta$ is continuous in a neighborhood of $\theta_0$. Therefore, $\hat{D}_{\hat{\theta}_N}$ is a consistent estimator of $D_{\theta_0}$. 
			
			We skip the details but provide a skeleton of the following proof. Notice that each component of $\hat{F}_i$ converges to the corresponding true value. Therefore, $\hat{F}_i$ and $V_{\hat{\theta}_N}$ are consistent estimators of $F_{\theta_0}(L_i,R_i)$ and $V_{\theta_0}$ respectively. Since $\hat{D}_{\hat{\theta}_N}^{-1}V_{\hat{\theta}_N}\hat{D}_{\hat{\theta}_N}^{-1\tr}$ is a standard sandwich estimator, it is easy to show it is a consistent estimator of the above asymptotic variance.
		\end{proof}
		
		\section{Related Lemmas}
		\label{sec:lemma}
		\begin{lemma}[Weyl's inequality]\label{Weyl}
			Let $\bfA$ and $\bfB$ be $m \times m$ Hermitian matrices and $\bfC=\bfA-\bfB$. Suppose their respective eigenvalues $\mu_i,\nu_i,\rho_i$ are ordered as follows: 
			\begin{align*}
				\bfA:\quad\mu_1\ge\cdots\ge\mu_m\ ,\\
				\bfB:\quad\nu_1\ge\cdots\ge\nu_m\ ,\\
				\bfC:\quad\rho_1\ge\cdots\ge\rho_m\ .
			\end{align*}
			Then, the following inequalities hold.
			\begin{align*}
				\rho_m\le\mu_i-\nu_i\le\rho_1,\quad i=1,\cdots,m\ .
			\end{align*}
			In particular, if $\bfC$ is positive semi-definite, plugging $\rho_m\ge 0$ into the above inequalities leads to
			\begin{align*}
				\mu_i\ge\nu_i,\quad i=1,\cdots,m\ .
			\end{align*}
		\end{lemma}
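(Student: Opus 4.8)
The plan is to derive Weyl's inequality from the Courant--Fischer min-max characterization of the eigenvalues of a Hermitian matrix. Recall that for an $m\times m$ Hermitian matrix $\bfA$ with eigenvalues $\mu_1\ge\cdots\ge\mu_m$, Courant--Fischer gives the max-min formula $\mu_i=\max_{\dim S=i}\min_{x\in S,\,\|x\|=1}x^{*}\bfA x$, where the outer maximum ranges over all $i$-dimensional subspaces $S\subseteq\mathbb{C}^m$. I would take this as the single ingredient and reduce the statement to two one-sided bounds, $\mu_i\le\nu_i+\rho_1$ and $\mu_i\ge\nu_i+\rho_m$, whose combination is exactly $\rho_m\le\mu_i-\nu_i\le\rho_1$.

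For the upper bound, I would let $S^{*}$ be an $i$-dimensional subspace attaining $\mu_i$ in the max-min formula, so that $\mu_i=\min_{x\in S^{*},\,\|x\|=1}x^{*}\bfA x$. Writing $\bfA=\bfB+\bfC$ and using $x^{*}\bfC x\le\rho_1$ for every unit vector $x$ (since $\rho_1$ is the top eigenvalue of $\bfC$), I get $x^{*}\bfA x\le x^{*}\bfB x+\rho_1$ pointwise on the unit sphere of $S^{*}$; minimizing both sides over this sphere preserves the inequality, so $\mu_i\le\rho_1+\min_{x\in S^{*},\,\|x\|=1}x^{*}\bfB x$. Finally, the inner minimum over the particular subspace $S^{*}$ is at most the maximum over all $i$-dimensional subspaces, which is $\nu_i$; hence $\mu_i\le\nu_i+\rho_1$. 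The lower bound is the mirror image: I would instead choose an $i$-dimensional subspace $S_{\bfB}$ attaining $\nu_i=\min_{x\in S_{\bfB},\,\|x\|=1}x^{*}\bfB x$ and feed it into the max-min formula for $\bfA$ as a single candidate, giving $\mu_i\ge\min_{x\in S_{\bfB},\,\|x\|=1}x^{*}\bfA x$. Using $x^{*}\bfC x\ge\rho_m$ on unit vectors yields $x^{*}\bfA x\ge x^{*}\bfB x+\rho_m$, and minimizing gives $\mu_i\ge\rho_m+\nu_i$.

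Combining the two bounds proves the first display. The positive semi-definite special case is then immediate: $\bfC$ positive semi-definite forces $\rho_m\ge0$, so the already-established inequality $\mu_i-\nu_i\ge\rho_m$ gives $\mu_i\ge\nu_i$.

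I do not anticipate a genuine obstacle, since this is a classical result and the argument is self-contained once Courant--Fischer is in hand. The only points requiring care are bookkeeping: fixing the non-increasing ordering convention consistently across $\mu_i,\nu_i,\rho_i$ and matching it to the correct (max-min versus min-max) form of Courant--Fischer, and observing that a pointwise inequality between two functions on a common unit sphere descends to their minima. If one prefers to avoid quoting Courant--Fischer, an alternative is a dimension-counting argument intersecting the span of the top $i$ eigenvectors of $\bfA$ with the span of the bottom $m-i+1$ eigenvectors of $\bfB$, which must contain a nonzero vector; but the min-max route is cleaner and I would adopt it.
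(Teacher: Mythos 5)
Your proof is correct: the two one-sided bounds $\mu_i\le\nu_i+\rho_1$ and $\mu_i\ge\nu_i+\rho_m$ follow exactly as you argue from the max-min form of Courant--Fischer, and the positive semi-definite case is immediate. Note, however, that the paper offers no proof of this lemma at all --- it states Weyl's inequality as a classical auxiliary fact (used later to compare eigenvalues of the random Hessian-type matrices in Lemmas \ref{quadratic} and \ref{linear}) and implicitly defers to the literature --- so there is no paper argument to compare yours against. Your Courant--Fischer derivation is the standard textbook route and is complete as written; the only implicit step worth making explicit is that the outer maximum in Courant--Fischer is attained (e.g., by the span of the top $i$ eigenvectors), which justifies choosing the subspaces $S^{*}$ and $S_{\bfB}$.
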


		\begin{lemma}[Bernstein’s inequality] \label{Bernstein}
			Let $\{\bfA_i\}_{i=1}^N$ be a sequence of independent random matrices with dimensions $d_1\times d_2$. Assume that $\E\{\bfA_i\}=\mathbf{0}_{d_1,d_2}$ and $\|\bfA_i\|_2\le c$ almost surely for all $i=1,\cdots,N$ and some constant $c$. Also assume that
			\begin{align*}
				\max\left\{\left\|\sum_{i=1}^N\E(\bfA_i\bfA_i\tr)\right\|_2,\left\|\sum_{i=1}^N\E(\bfA_i\tr \bfA_i)\right\|_2\right\}\le\sigma^2\ .
			\end{align*}
			Then, for all $t \ge 0$,
			\begin{align*}
				P\left(\left\|\sum_{i=1}^N\bfA_i\right\|_2\ge t\right)\le(d_1+d_2)\exp\left(-\frac{t^2/2}{\sigma^2+ct/3}\right)\ .
			\end{align*}
		\end{lemma}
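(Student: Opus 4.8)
The plan is to prove the rectangular matrix Bernstein inequality by reducing it to a statement about the largest eigenvalue of a sum of independent \emph{Hermitian} matrices, and then running the matrix Laplace-transform (exponential moment) method. First I would pass to the Hermitian setting via the self-adjoint dilation: for each $i$ set
\[
\widetilde{\bfA}_i=\begin{pmatrix}\mathbf 0 & \bfA_i\\ \bfA_i\tr & \mathbf 0\end{pmatrix},
\]
an $(d_1+d_2)\times(d_1+d_2)$ Hermitian matrix. Two elementary facts drive the reduction. The nonzero eigenvalues of $\widetilde{\bfA}_i$ are $\pm$ the singular values of $\bfA_i$, so $\lambda_{\max}(\widetilde{\bfA}_i)=\|\bfA_i\|_2$; and its square is block diagonal, $\widetilde{\bfA}_i^{\,2}=\mathrm{diag}(\bfA_i\bfA_i\tr,\ \bfA_i\tr\bfA_i)$. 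Writing $\mathbf S=\sum_{i=1}^N\widetilde{\bfA}_i$, linearity of the dilation gives $\|\sum_i\bfA_i\|_2=\lambda_{\max}(\mathbf S)$, so the target event becomes $\{\lambda_{\max}(\mathbf S)\ge t\}$. The hypotheses transfer directly: $\E\widetilde{\bfA}_i=\mathbf 0$, $\|\widetilde{\bfA}_i\|_2\le c$, and the block structure together with $\|\mathrm{diag}(\bfB,\bfC)\|_2=\max\{\|\bfB\|_2,\|\bfC\|_2\}$ yields $\|\sum_i\E\widetilde{\bfA}_i^{\,2}\|_2\le\sigma^2$.

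Second, I would apply the matrix Laplace-transform bound: for any $\beta>0$, Markov's inequality applied to $\lambda\mapsto e^{\beta\lambda}$ combined with $e^{\beta\lambda_{\max}(\mathbf S)}\le\mathrm{tr}\,e^{\beta\mathbf S}$ gives $\bbP\{\lambda_{\max}(\mathbf S)\ge t\}\le e^{-\beta t}\,\E\,\mathrm{tr}\,e^{\beta\mathbf S}$. The heart of the argument is the subadditivity of matrix cumulants,
\[
\E\,\mathrm{tr}\,e^{\beta\mathbf S}\le\mathrm{tr}\,\exp\Big(\sum_{i=1}^N\log\E\,e^{\beta\widetilde{\bfA}_i}\Big),
\]
which reduces the problem to a per-summand estimate.

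Third I would bound each term $\log\E e^{\beta\widetilde{\bfA}_i}$ in the positive-semidefinite order. The scalar function $x\mapsto(e^{\beta x}-\beta x-1)/x^2$ (extended continuously at $0$) is increasing, so for eigenvalues bounded by $c$ it is dominated by $g(\beta):=(e^{\beta c}-\beta c-1)/c^2$; by the transfer rule this gives $e^{\beta\widetilde{\bfA}_i}\preceq \mathbf I+\beta\widetilde{\bfA}_i+g(\beta)\widetilde{\bfA}_i^{\,2}$. Taking expectations, using $\E\widetilde{\bfA}_i=\mathbf 0$ and $\mathbf I+\mathbf M\preceq e^{\mathbf M}$, yields $\log\E e^{\beta\widetilde{\bfA}_i}\preceq g(\beta)\,\E\widetilde{\bfA}_i^{\,2}$. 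Monotonicity of the trace-exponential then gives $\E\,\mathrm{tr}\,e^{\beta\mathbf S}\le(d_1+d_2)\exp\big(g(\beta)\,\|\sum_i\E\widetilde{\bfA}_i^{\,2}\|_2\big)\le(d_1+d_2)\,e^{g(\beta)\sigma^2}$. Finally, using $g(\beta)\le (\beta^2/2)/(1-c\beta/3)$ for $0<\beta<3/c$, combining with $e^{-\beta t}$, and optimizing at $\beta=t/(\sigma^2+ct/3)$ delivers the stated tail
\[
\bbP\Big\{\Big\|\sum_{i=1}^N\bfA_i\Big\|_2\ge t\Big\}\le(d_1+d_2)\exp\Big(-\frac{t^2/2}{\sigma^2+ct/3}\Big).
\]

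The main obstacle is the subadditivity step $\E\,\mathrm{tr}\,e^{\sum_i\beta\widetilde{\bfA}_i}\le\mathrm{tr}\,\exp\big(\sum_i\log\E e^{\beta\widetilde{\bfA}_i}\big)$: unlike the scalar case it is not a trivial consequence of independence, since matrix exponentials do not factor when the summands fail to commute. I would establish it by conditioning on all but one summand and invoking Lieb's concavity theorem (concavity of $\mathbf M\mapsto\mathrm{tr}\,\exp(\mathbf H+\log\mathbf M)$ on positive-definite $\mathbf M$) together with Jensen's inequality, applied iteratively over $i=1,\dots,N$. The remaining pieces --- the dilation identities, the scalar comparison lemma, and the final scalar optimization --- are routine.
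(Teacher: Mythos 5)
Your proposal is correct, but there is nothing in the paper to compare it against: the paper states Lemma \ref{Bernstein} as a known off-the-shelf tool (it is the rectangular matrix Bernstein inequality of Tropp, invoked in the proof of Lemma \ref{quadratic} to control $\|\bfB-\bfC\|_2$) and supplies no proof, just as it does for Weyl's inequality in Lemma \ref{Weyl}. What you have written is a faithful reconstruction of the canonical Tropp argument, and the pieces are all sound: the self-adjoint dilation $\widetilde{\bfA}_i$ has symmetric spectrum, so $\lambda_{\max}(\sum_i\widetilde{\bfA}_i)$ really equals $\|\sum_i\bfA_i\|_2$ (not merely an upper bound), and the block identity $\widetilde{\bfA}_i^{\,2}=\mathrm{diag}(\bfA_i\bfA_i\tr,\bfA_i\tr\bfA_i)$ transfers the variance hypothesis exactly as you claim; the Laplace-transform step, the moment bound $e^{\beta\widetilde{\bfA}_i}\preceq\mathbf I+\beta\widetilde{\bfA}_i+g(\beta)\widetilde{\bfA}_i^{\,2}$ via monotonicity of $x\mapsto(e^{\beta x}-\beta x-1)/x^2$, and the final optimization at $\beta=t/(\sigma^2+ct/3)$ (which indeed satisfies $\beta<3/c$, as your bound on $g$ requires) all check out and reproduce the stated constant $t^2/2$ over $\sigma^2+ct/3$. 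You also correctly identify the genuinely nontrivial step --- the cumulant subadditivity $\E\,\mathrm{tr}\,e^{\beta\mathbf S}\le\mathrm{tr}\exp\bigl(\sum_i\log\E e^{\beta\widetilde{\bfA}_i}\bigr)$ --- and the standard route through Lieb's concavity theorem with iterated conditioning; the only small point left tacit is that passing from $\E e^{\beta\widetilde{\bfA}_i}\preceq\exp\bigl(g(\beta)\E\widetilde{\bfA}_i^{\,2}\bigr)$ to the semidefinite bound on $\log\E e^{\beta\widetilde{\bfA}_i}$ uses operator monotonicity of the logarithm, and that the subsequent trace comparison uses monotonicity of $\mathbf M\mapsto\mathrm{tr}\,e^{\mathbf M}$ (the exponential itself is not operator monotone). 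These are routine and do not affect correctness.
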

		
		\begin{lemma}\label{alpha}
			Under Assumptions \ref{assump1} and \ref{assump2}, the minimizer $\hat{\alpha}^r$ satisfies 
			\begin{align*}
				\|\hat{\alpha}^r-\alpha_{K_r}^*\|_2=O_p\left(\sqrt{\frac{K_r}{N_r}}+K_r^{-\mu_1}\right)=o_p(1)\ .
			\end{align*}
		\end{lemma}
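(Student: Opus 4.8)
The plan is to exploit convexity of $\calLr_\lambda$ in $\alphar$ via the standard ``witness on the sphere'' argument. Write $r_N=\sqrt{K_r/N_r}+K_r^{-\mu_1}$ and set $F(\delta)=\calLr_\lambda(\alpha_{K_r}^*+\delta)-\calLr_\lambda(\alpha_{K_r}^*)$, a convex function of $\delta\in\bbR^{K_r}$ with $F(0)=0$ whose minimizer is $\hat\delta=\hat\alpha^r-\alpha_{K_r}^*$. It suffices to show that for every $\epsilon>0$ there is a constant $M$ with $P(\inf_{\|\delta\|_2=Mr_N}F(\delta)>0)\ge 1-\epsilon$ for all large $N$, because convexity then forces $\|\hat\delta\|_2<Mr_N$: if $\|\hat\delta\|_2\ge Mr_N$, the point $\delta^*=Mr_N\hat\delta/\|\hat\delta\|_2$ lies on $[0,\hat\delta]$ and satisfies $F(\delta^*)\le \frac{Mr_N}{\|\hat\delta\|_2}F(\hat\delta)\le 0$, a contradiction. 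On the sphere I would use the exact second-order expansion $\calLr_N(\alpha_{K_r}^*+\delta)-\calLr_N(\alpha_{K_r}^*)=\nabla\calLr_N(\alpha_{K_r}^*)\tr\delta+\tfrac12\delta\tr\nabla^2\calLr_N(\tilde\alpha)\delta$, with $\tilde\alpha$ on the segment, where the tailored loss gives the explicit $\nabla\calLr_N(\alphar)=N^{-1}\sum_i\{\mathsf{1}_{R_i=1_d}\oddsr(\Lobi;\alphar)-\mathsf{1}_{R_i=r}\}\Phir(\Lobi)$ and $\nabla^2\calLr_N(\alphar)=N^{-1}\sum_i\mathsf{1}_{R_i=1_d}\oddsr(\Lobi;\alphar)\Phir(\Lobi)\Phir(\Lobi)\tr\succeq0$. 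The two quantities to bound are the linear (gradient) and quadratic (Hessian) terms.

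For the gradient at $\alpha_{K_r}^*$ I split into bias and noise. Subtracting the population balancing identity \eqref{balancing-condition-expectation} (applied componentwise to $\Phir$) shows its mean equals $\E[\mathsf{1}_{R=1_d}\{\oddsr(\Lob;\alpha_{K_r}^*)-\oddsr(\Lob)\}\Phir(\Lob)]$; for a unit vector $v$, bounding $v\tr(\cdot)$ by Cauchy--Schwarz against $\{\E(v\tr\Phir)^2\}^{1/2}\le\sqrt{\lambda_{\max}^*}$ and using $\sup|\oddsr(\cdot;\alpha_{K_r}^*)-\oddsr|\le C_1K_r^{-\mu_1}$ from \ref{assump-2D} yields a bias of order $K_r^{-\mu_1}$ (pairing against the quadratic form rather than $\|\Phir\|_2$ is what avoids a spurious $\sqrt{K_r}$). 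The centered part is an average of i.i.d. mean-zero vectors with $\E\|\cdot\|_2^2=O(K_r)$ (using $\oddsr(\cdot;\alpha_{K_r}^*)\le 2C_0$, \ref{assump-2A}, and $\E\|\Phir\|_2^2\le K_r\lambda_{\max}^*$), so its norm is $O_p(\sqrt{K_r/N})=O_p(\sqrt{K_r/N_r})$ since $N_r\asymp N$. Hence $\|\nabla\calLr_N(\alpha_{K_r}^*)\|_2=O_p(r_N)$, making the linear term at least $-C_gMr_N^2$ on the sphere.

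For the quadratic term I would first show the weight $\oddsr(\Lobi;\tilde\alpha)=\exp\{\Phir(\Lobi)\tr\tilde\alpha\}$ is bounded below uniformly over the ball: $\Phir\tr\alpha_{K_r}^*=\log\oddsr(\cdot;\alpha_{K_r}^*)$ is bounded by \ref{assump-2A} and \ref{assump-2D}, while $\sup|\Phir\tr(\tilde\alpha-\alpha_{K_r}^*)|\le O(K_r^{1/2})Mr_N=o(1)$ because $K_r^{1/2}r_N=K_r/\sqrt{N_r}+K_r^{1/2-\mu_1}=o(1)$ by \ref{assump-2C} and $\mu_1>1/2$; thus $\oddsr(\Lobi;\tilde\alpha)\ge c>0$ for large $N$. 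It then remains to bound below $\hat\Sigma=N^{-1}\sum_i\mathsf{1}_{R_i=1_d}\Phir(\Lobi)\Phir(\Lobi)\tr$, whose mean obeys $\E\hat\Sigma=\E\{P(R=1_d\mid\Lob)\Phir\Phir\tr\}\succeq\delta_0\lambda_{\min}^*I$ by \ref{assump-1B} and \ref{assump-2F}. Applying the matrix Bernstein inequality (Lemma \ref{Bernstein}) to $\hat\Sigma-\E\hat\Sigma$, whose summands have spectral norm $O(K_r/N)$ by \ref{assump-2E} and variance proxy $O(K_r/N)$, gives $\|\hat\Sigma-\E\hat\Sigma\|_2=o_p(1)$ since $N/(K_r\log K_r)\to\infty$ follows from $K_r^2=o(N_r)$; Weyl's inequality (Lemma \ref{Weyl}) then yields $\lambda_{\min}(\hat\Sigma)\ge\delta_0\lambda_{\min}^*/2$ w.h.p., so $\delta\tr\nabla^2\calLr_N(\tilde\alpha)\delta\ge\kappa\|\delta\|_2^2$ for a constant $\kappa>0$.

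Finally, by convexity and non-negativity of $J^r$ the penalty contributes at least $-\lambda\|s_0\|_2\|\delta\|_2$ for some $s_0\in\partial J^r(\alpha_{K_r}^*)$; since $\lambda_{\min}^*\|\alpha_{K_r}^*\|_2^2\le\E(\Phir\tr\alpha_{K_r}^*)^2$ with $\Phir\tr\alpha_{K_r}^*$ uniformly bounded gives $\|\alpha_{K_r}^*\|_2=O(1)$, the relevant subgradient norm grows at most polynomially in $K_r$, so \ref{assump-2G} ($\lambda=o(1/\sqrt{K_rN_r})$) makes this term $o(r_N^2)$ uniformly on the sphere. Collecting the pieces, $F(\delta)\ge(\tfrac{\kappa}{2}M^2-C_gM)r_N^2-o(r_N^2)>0$ once $M>2C_g/\kappa$ and $N$ is large, which finishes the proof. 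The main obstacle is the quadratic step: controlling $\lambda_{\min}(\nabla^2\calLr_N(\tilde\alpha))$ uniformly over the moving center $\tilde\alpha$ in the shrinking ball. This forces the exponential weight to stay bounded away from zero there (the delicate requirement $K_r^{1/2}r_N=o(1)$, which is exactly why both $K_r^2=o(N_r)$ and $\mu_1>1/2$ are imposed) and an operator-norm concentration bound for $\hat\Sigma$, for which the Weyl plus matrix-Bernstein combination is tailor-made; careful tracking of the $N$-versus-$N_r$ normalization (valid since $N_r\asymp N$) is what produces $\sqrt{K_r/N_r}$ rather than $\sqrt{K_r/N}$.
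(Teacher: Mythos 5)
Your proposal is correct and follows essentially the same route as the paper's own proof (Lemma \ref{alpha} together with its supporting Lemmas \ref{quadratic} and \ref{linear}): convexity reduces the claim to a positive infimum of $\calLr_\lambda(\alpha_{K_r}^*+\delta)-\calLr_\lambda(\alpha_{K_r}^*)$ on the sphere of radius proportional to $\sqrt{K_r/N_r}+K_r^{-\mu_1}$, the Hessian is bounded below by combining a uniform lower bound on the fitted odds over the shrinking ball with Weyl's inequality and matrix Bernstein applied to the empirical Gram matrix, the gradient at $\alpha_{K_r}^*$ is shown to be $O_p\bigl(\sqrt{K_r/N_r}+K_r^{-\mu_1}\bigr)$, and the penalty is absorbed via a subgradient bound under \ref{assump-2G}. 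The only cosmetic difference is your mean-plus-centered decomposition of the gradient (the paper instead splits it into a mean-zero term involving the true odds and an approximation-error term controlled by \ref{assump-2D}), which yields the same rate by the same Markov/Cauchy--Schwarz arguments.
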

		
		\begin{proof}
			It suffices to show for any $\epsilon>0$, there exists $C_\epsilon$ and $N_\epsilon$ such that 
			\begin{align}\label{Op-alpha}
				P\left\{\|\hat{\alpha}^r-\alpha_{K_r}^*\|_2>C_\epsilon\left(\sqrt{\frac{K_r}{N_r}}+K_r^{-\mu_1}\right)\right\}\le\epsilon
			\end{align}
			for any $N\ge N_\epsilon$. It means that the minimizer $\hat{\alpha}^r$ is in a small neighbourhood of $\alpha_{K_r}^*$ with probability higher than $1-\epsilon$. Consider the set $\Delta=\{\delta\in\bbR^{K_r}:\|\delta\|_2\le C(\sqrt{K_r/N_r}+K_r^{-\mu_1})\}$ for an arbitrary constant $C$. Since $\calLr_\lambda$ is a convex function of $\alphar$, the minimizer $\hat{\alpha}^r\in\alpha_{K_r}^*+\Delta$ if $\inf_{\delta\in\partial\Delta}{\inf}\calLr_\lambda(\alpha_{K_r}^*+\delta)> \calLr_\lambda(\alpha_{K_r}^*)$. Thus, considering the complementary event, we have
			{\small
				\begin{align*}
					P\left\{\|\hat{\alpha}^r-\alpha_{K_r}^*\|_2>C\left(\sqrt{\frac{K_r}{N_r}}+K_r^{-\mu_1}\right)\right\}
					\le
					P\left\{\underset{\delta\in\partial\Delta}{\inf}\calLr_\lambda(\alpha_{K_r}^*+\delta)\le\calLr_\lambda(\alpha_{K_r}^*)\right\}\ .
				\end{align*} 
			}
			Recall that for any $r\neq 1_d$ and any $\lambda>0$, the objective function is
			\begin{align*}
				\calLr_\lambda(\alphar)
				&=\frac{1}{N}\sum_{i=1}^N\left\{\mathsf{1}_{R_i=1_d}\oddsr(\Lobi;\alphar)-\mathsf{1}_{R_i=r}\log\oddsr(\Lobi;\alphar)\right\}+\lambda J^r(\alphar)
			\end{align*}
			where $\oddsr(\lob;\alphar)=\exp\{\Phir(\lob)\tr\alphar\}$ and $J^r(\alphar)=\gamma \sum_{k=1}^{K_r}t_k|\alphar_k|+(1-\gamma)(\alphar)\tr\bfDr\alphar$. To investigate $\inf_{\delta\in\partial\Delta}\calLr_\lambda(\alpha_{K_r}^*+\delta)-\calLr_\lambda(\alpha_{K_r}^*)$, we apply the mean value theorem. There exists some $\tilde{\alpha}^r$ satisfying $\tilde{\alpha}^r-\alpha_{K_r}^*\in\mathrm{int}(\Delta)$, which is the interior of $\Delta$, such that for any $\delta\in\Delta$, 
			\begin{align*}
				&\quads\calLr_\lambda(\alpha_{K_r}^*+\delta)-\calLr_\lambda(\alpha_{K_r}^*)\\
				&=\delta\tr\left.\frac{\partial\calLr_N(\alphar)}{\partial\alphar}\right\vert_{\alpha_{K_r}^*}
				+\frac{1}{2}\delta\tr\left\{\left.\frac{\partial^2\calLr_N(\alphar)}{(\partial\alphar)^2}\right\vert_{\tilde{\alpha}^r}\right\}\delta
				+\lambda J^r(\alpha_{K_r}^*+\delta)-\lambda J^r(\alpha_{K_r}^*)\ .
			\end{align*}
			By the triangle inequality and Cauchy inequality, the difference between penalties satisfies
			\begin{align*}
				&J^r(\alpha_{K_r}^*+\delta)-J^r(\alpha_{K_r}^*)\\
				&=\gamma\sum_{k=1}^{K_r}\left(\left|\alphar_{*,k}+\delta_k^r\right|-\left|\alphar_{*,k}\right|\right)t_k
				+(1-\gamma)\left(\delta\tr\bfDr\delta+2\delta\tr\bfDr\alpha_{K_r}^*\right)\\
				&\ge-\gamma\sum_{k=1}^{K_r}\left|\delta_k^r\right|t_k
				+(1-\gamma)\left(\delta\tr\bfDr\delta+2\delta\tr\bfDr\alpha_{K_r}^*\right)\\
				&\ge-\gamma\sqrt{\sum_{k=1}^{K_r}t_k^2}\left\|\delta\right\|_2
				-2(1-\gamma)\|\delta\|_2\|\bfDr\|_2\|\alpha_{K_r}^*\|_2
				+(1-\gamma)\delta\tr\bfDr\delta\ .
			\end{align*}
			Denote the constant $\gamma\sqrt{\sum_{k=1}^{K_r}t_k^2}+2(1-\gamma)\|\bfDr\|_2\|\alpha_{K_r}^*\|_2$ as $c_{\mathrm{lin}}$. Then, by the Cauchy inequality again, 
			\begin{align*}
				\calLr_\lambda(\alpha_{K_r}^*+\delta)-\calLr_\lambda(\alpha_{K_r}^*)
				&\ge
				-\left\{\left\|\left.\frac{\partial\calLr_N(\alphar)}{\partial\alphar}\right\vert_{\alpha_{K_r}^*}\right\|_2+\lambda c_{\mathrm{lin}}\right\}
				\left\|\delta\right\|_2\\
				&\quads+\delta\tr
				\left\{\frac{1}{2}\left.\frac{\partial^2\calLr_N(\alphar)}{(\partial\alphar)^2}\right\vert_{\tilde{\alpha}^r}+\lambda(1-\gamma)\bfDr\right\}
				\delta\ .
			\end{align*}
			First, let's have a look at the quadratic coefficient. Since $\bfDr$ is a positive semi-definite matrix, $\lambda(1-\gamma)\delta\tr\bfDr\delta\ge0$.
			Thus, by Lemma \ref{quadratic}, the quadratic terms are bounded from below. More precisely, for any $\epsilon>0$, there exists $N_\epsilon^*$ such that for any $N\ge N_\epsilon^*$,
			\begin{align*}
				P\left[\delta\tr\left\{\frac{1}{2}\left.\frac{\partial^2\calLr_N(\alphar)}{(\partial\alphar)^2}\right\vert_{\tilde{\alpha}^r}+\lambda(1-\gamma)\bfDr\right\}\delta
				\ge\frac{C_{\mathrm{quad}}}{2}\|\delta\|_2^2\right]\ge1-\frac{1}{2}\epsilon\ .
			\end{align*} 
			Next, let's investigate the bound of the linear coefficient. By \ref{assump-2E}, $\lambda=O(\sqrt{K_r/N})$. By Lemma \ref{linear}, for any $\epsilon>0$, there exists $N_\epsilon'$ and a constant $C_\epsilon'$ such that for any $N\ge N_\epsilon'$,
			\begin{align*}
				P\left[
				\left\{\left\|\left.\frac{\partial\calLr_N(\alphar)}{\partial\alphar}\right\vert_{\alpha_{K_r}^*}\right\|_2+\lambda c_{\mathrm{lin}}\right\}
				\ge C_\epsilon'\left(\sqrt{\frac{K_r}{N}}+K_r^{-\mu_1}\right)\right]
				\le\frac{1}{2}\epsilon\ .
			\end{align*} 
			Considering the complement of the above event and the fact that $P(A\cap B)=P(A)+P(B)-P(A\cup B)\ge P(A)+P(B)-1$, we have
			{\small
				\begin{align*}
					P\left\{\calLr_\lambda(\alpha_{K_r}^*+\delta)-\calLr_\lambda(\alpha_{K_r}^*)
					\ge\frac{C_{\mathrm{quad}}}{2}\|\delta\|_2^2
					-C_\epsilon'\left(\sqrt{\frac{K_r}{N}}+K_r^{-\mu_1}\right)\|\delta\|_2
					\right\}\ge 1-\epsilon
				\end{align*}
			}
			for any $N\ge\max\{N_\epsilon^*,N_\epsilon'\}$. Note that $\partial\Delta=\{\delta\in\bbR^{K_r}:\|\delta\|_2=C(\sqrt{K_r/N_r}+K_r^{-\mu_1})\}$. Choosing $C>2C_\epsilon'/C_{\mathrm{quad}}$, we have $P\{\inf_{\delta\in\partial\Delta}\calLr_\lambda(\alpha_{K_r}^*+\delta)\ge\calLr_\lambda(\alpha_{K_r}^*)\}\ge1-\epsilon$ for any $\epsilon>0$. Therefore, inequality \eqref{Op-alpha} holds which completes the proof.
		\end{proof}
		
		\begin{lemma}\label{quadratic}
			There exists a constant $C_{\mathrm{quad}}$ such that the Hessian matrix of $\calLr_N(\alphar)$ at $\tilde{\alpha}^r$ satisfies
			\begin{align*}
				\lim_{N\to\infty}P\left[
				\lambda_{\min}\left\{\left.\frac{\partial^2\calLr_N(\alphar)}{(\partial\alphar)^2}\right\vert_{\tilde{\alpha}^r}\right\}
				\ge C_{\mathrm{quad}}
				\right]=1
			\end{align*} 
			where $\lambda_{\min}(\cdot)$ represents the minimal eigenvalue of the matrix. 
		\end{lemma}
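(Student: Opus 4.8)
The plan is to differentiate the loss explicitly, reduce the claim to a lower bound on the smallest eigenvalue of a sample second-moment matrix of the basis functions restricted to complete cases, and then obtain that bound by pairing a population-level spectral estimate with a matrix concentration argument. Since $\oddsr(\lob;\alphar)=\exp\{\Phir(\lob)\tr\alphar\}$ and $\calLr\{\oddsr(\lob;\alphar),R\}=\mathsf{1}_{R=1_d}\exp\{\Phir(\lob)\tr\alphar\}-\mathsf{1}_{R=r}\Phir(\lob)\tr\alphar$, the second summand is linear in $\alphar$ and disappears under second differentiation, leaving
\[
\left.\frac{\partial^2\calLr_N(\alphar)}{(\partial\alphar)^2}\right|_{\tilde{\alpha}^r}=\frac{1}{N}\sum_{i=1}^N\mathsf{1}_{R_i=1_d}\,\oddsr(\Lobi;\tilde{\alpha}^r)\,\Phir(\Lobi)\Phir(\Lobi)\tr .
\]

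Next I would produce a uniform lower bound on the scalar factor $\oddsr(\lob;\tilde{\alpha}^r)$. Because $\tilde{\alpha}^r$ lies in the shrinking neighborhood $\alpha_{K_r}^*+\Delta$ arising from the mean value theorem in the proof of Lemma \ref{alpha}, \ref{assump-2E} and the Cauchy inequality give $\sup_{\lob\in\domr}\big|\Phir(\lob)\tr(\tilde{\alpha}^r-\alpha_{K_r}^*)\big|=O\!\big(K_r/\sqrt{N_r}+K_r^{1/2-\mu_1}\big)$, which tends to $0$ by \ref{assump-2C} and $\mu_1>1/2$. Combined with \ref{assump-2A} and the approximation bound \ref{assump-2D} (which force $\oddsr(\lob;\alpha_{K_r}^*)\ge c_0/2$ for large $N$), this yields $\oddsr(\lob;\tilde{\alpha}^r)\ge c_0/4$ uniformly in $\lob$ for all large $N$. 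Hence the Hessian dominates $(c_0/4)\hat{\Sigma}$ in the positive semidefinite order, where $\hat{\Sigma}:=N^{-1}\sum_{i=1}^N\mathsf{1}_{R_i=1_d}\Phir(\Lobi)\Phir(\Lobi)\tr$, so $\lambda_{\min}$ of the Hessian is at least $(c_0/4)\lambda_{\min}(\hat{\Sigma})$ and it suffices to bound $\lambda_{\min}(\hat{\Sigma})$ away from zero.

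For the population version, $\E\hat{\Sigma}=\E\{P(R=1_d\mid\Lob)\Phir(\Lob)\Phir(\Lob)\tr\}$, and since $P(R=1_d\mid\Lob)\ge\delta_0$ by \ref{assump-1B} while $\E\{\Phir(\Lob)\Phir(\Lob)\tr\}\succeq\lambda_{\min}^*I$ by \ref{assump-2F}, a quadratic-form comparison gives $\lambda_{\min}(\E\hat{\Sigma})\ge\delta_0\lambda_{\min}^*$. I would then control $\|\hat{\Sigma}-\E\hat{\Sigma}\|_2$ by the matrix Bernstein inequality (Lemma \ref{Bernstein}) applied to the centered summands $\bfA_i=N^{-1}\{\mathsf{1}_{R_i=1_d}\Phir(\Lobi)\Phir(\Lobi)\tr-\E\hat{\Sigma}\}$. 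Using $\|\Phir\|_2^2=O(K_r)$ from \ref{assump-2E} together with the identity $(\Phir\Phir\tr)^2=\|\Phir\|_2^2\,\Phir\Phir\tr$, one obtains a per-term operator-norm bound $c=O(K_r/N)$ and a variance proxy $\sigma^2=O(K_r/N)$, so that for any fixed $t>0$ the tail is bounded by $2K_r\exp\{-c'\,tN/K_r\}$ for some constant $c'>0$; this vanishes because $K_r\log K_r=o(N)$, which follows from $K_r^2=o(N_r)$ and $N_r\le N$. Finally Weyl's inequality (Lemma \ref{Weyl}) gives $\lambda_{\min}(\hat{\Sigma})\ge\lambda_{\min}(\E\hat{\Sigma})-\|\hat{\Sigma}-\E\hat{\Sigma}\|_2$, and taking $t=\delta_0\lambda_{\min}^*/2$ yields $\lambda_{\min}(\hat{\Sigma})\ge\delta_0\lambda_{\min}^*/2$ with probability tending to $1$, whence $C_{\mathrm{quad}}:=c_0\delta_0\lambda_{\min}^*/8$ works.

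I expect the concentration step to be the main obstacle: one must verify that both the per-term bound and the variance proxy scale like $K_r/N$ rather than something larger, and that the growth restriction $K_r^2=o(N_r)$ is exactly what is needed to absorb the dimensional prefactor $K_r$ in front of the exponential in Bernstein's bound. A secondary technical subtlety is that $\tilde{\alpha}^r$ is itself random, so the uniform lower bound on $\oddsr(\,\cdot\,;\tilde{\alpha}^r)$ must be argued from the deterministic radius of the neighborhood $\Delta$ rather than from any pointwise consistency of $\tilde{\alpha}^r$.
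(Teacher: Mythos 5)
Your proposal is correct and follows essentially the same route as the paper's proof: explicit computation of the Hessian, a uniform lower bound on $\oddsr(\cdot;\tilde{\alpha}^r)$ over the deterministic-radius neighborhood $\Delta$ via \ref{assump-2A}, \ref{assump-2D}, \ref{assump-2E}, reduction to the sample Gram matrix on complete cases, the population bound $\delta_0\lambda_{\min}^*$ from \ref{assump-1B} and \ref{assump-2F}, and matrix Bernstein plus Weyl's inequality to transfer it to the sample. The only differences are immaterial constants and your use of a fixed threshold $t$ in Bernstein's inequality (sufficient here, since only $o_p(1)$ deviation is needed) where the paper takes $t\asymp\sqrt{K_r\log K_r/N_r}$ to additionally record a rate.
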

		
		\begin{proof}
			Denote the Hessian matrix of $\calLr_N(\alphar)$ at $\tilde{\alpha}^r$ as $\bfA$:
			\begin{align*}
				\bfA=\left.\frac{\partial^2\calLr_N(\alphar)}{(\partial\alphar)^2}\right\vert_{\tilde{\alpha}^r}
				=\frac{1}{N}\sum_{i=1}^N\left\{\oddsr(\Lobi;\tilde{\alpha}^r)\mathsf{1}_{R_i=1_d}\Phir(\Lob_i)\Phir(\Lob_i)\tr\right\}\ .
			\end{align*} 
			Recall that the set $\Delta=\{\delta\in\bbR^{K_r}:\|\delta\|_2\le C(\sqrt{K_r/N_r}+K_r^{-\mu_1})\}$ and $\tilde{\alpha}^r-\alpha_{K_r}^*\in\mathrm{int}(\Delta)$. Following the similar arguments in the proof of Theorem \ref{odds} (Appendix \ref{sec:proof-odds}), it can be easily shown that
			\begin{align*}
				&\quads\oddsr(\lob;\tilde{\alpha}^r)\\
				&\ge\oddsr(\lob)-\left|\oddsr(\lob)-\oddsr(\lob;\alpha_{K_r}^*)\right|
				-4C_0\left|\Phir(\lob)\tr\tilde{\alpha}^r-\Phir(\lob)\tr\alpha_{K_r}^*\right|\\
				&\ge c_0-C_1K_r^{-\mu_1}-4C_0C(K_r/\sqrt{N_r}+K_r^{\frac{1}{2}-\mu_1})\ .
			\end{align*}
			Then, there exists $N_\Delta$ such that $\oddsr(\lob;\tilde{\alpha}^r)>c_0/2$ holds for any $N\ge N_\Delta$. Let 
			\begin{align*}
				\bfB&=\frac{1}{N}\sum_{i=1}^N\left\{\frac{1}{2}c_0\mathsf{1}_{R_i=1_d}\Phir(\Lob_i)\Phir(\Lob_i)\tr\right\},\\
				\bfC&=\frac{1}{2}c_0\E\left\{\mathsf{1}_{R=1_d}\Phir(\Lob)\Phir(\Lob)\tr\right\}
				=\frac{1}{2}c_0\E\left\{P(R=1_d\mid\Lob)\Phir(\Lob)\Phir(\Lob)\tr\right\},\\
				\bfD&=\frac{1}{2}c_0\delta_0\E\left\{\Phir(\Lob)\Phir(\Lob)\tr\right\}\ .
			\end{align*} 
			It's easy to see that matrices $\bfA,\bfB,\bfC,\bfD$ are symmetric. Based on the above discussions, $\bfA-\bfB$ is positive semi-definite for large enough $N$. By \ref{assump-1B}, $\bfC-\bfD$ is also positive semi-definite. Applying Lemma \ref{Weyl}, we have $\lambda_{\min}(\bfA)\ge\lambda_{\min}(\bfB)$, $\lambda_{\min}(\bfC)\ge\lambda_{\min}(\bfD)$ and $|\lambda_{\min}(\bfB)-\lambda_{\min}(\bfC)|\le\max\{|\lambda_{\min}(\bfB-\bfC)|,|\lambda_{\max}(\bfB-\bfC)|\}=\|\bfB-\bfC\|_2$.
			Therefore, $\lambda_{\min}(\bfA)\ge\lambda_{\min}(\bfD)-\|\bfB-\bfC\|_2\ge c_0\delta_0\lambda_{\min}^*/2-\|\bfB-\bfC\|_2$. To study $\|\bfB-\bfC\|_2$, we apply Lemma \ref{Bernstein}. Let 
			\begin{align*}
				\bfE_i=\frac{1}{N}\Big[\mathsf{1}_{R_i=1_d}\Phir(\Lob_i)\Phir(\Lob_i)\tr-\E\left\{\mathsf{1}_{R=1_d}\Phir(\Lob)\Phir(\Lob)\tr\right\}\Big]\ .
			\end{align*} 
			So, $\E\{\bfE_i\}=\mathbf{0}_{K_r,K_r}$. By the triangle inequality, Lemma \ref{Weyl} and the fact that $\|\cdot\|_2\le\|\cdot\|_F$,
			{\small
				\begin{align*}
					\|\bfE_i\|_2
					&\le\frac{1}{N}\mathsf{1}_{R_i=1_d}\|\Phir(\Lob_i)\Phir(\Lob_i)\tr\|_F
					+\frac{1}{N}\|\E\left\{\mathsf{1}_{R=1_d}\Phir(\Lob)\Phir(\Lob)\tr\right\}\|_2\\
					&\le\frac{1}{N}\sqrt{\textrm{trace}\{\Phir(\Lob_i)\Phir(\Lob_i)\tr\Phir(\Lob_i)\Phir(\Lob_i)\tr\}}
					+\frac{1}{N}\|\E\left\{\Phir(\Lob)\Phir(\Lob)\tr\right\}\|_2\\
					&=\frac{1}{N}\|\Phir(\Lob_i)\|_2^2
					+\frac{1}{N}\|\E\left\{\Phir(\Lob)\Phir(\Lob)\tr\right\}\|_2\ .
				\end{align*} 
			}
			By \ref{assump-2E}, \ref{assump-2F} and the fact that $N_r/N<1$, $\|\bfE_i\|_2=O(K_r/N_r)$. Similarly,
			\begin{align*}
				&\quads\left\|\sum_{i=1}^N\E(\bfE_i\bfE_i\tr)\right\|_2\\
				&\le\frac{1}{N}\left\|\E\{\mathsf{1}_{R=1_d}\Phir(\Lob)\Phir(\Lob)\tr\Phir(\Lob)\Phir(\Lob)\tr\}\right\|_2\\
				&\quads+\frac{1}{N}\left\|\E\{\mathsf{1}_{R=1_d}\Phir(\Lob)\Phir(\Lob)\tr\}\E\left\{\mathsf{1}_{R=1_d}\Phir(\Lob)\Phir(\Lob)\tr\right\}\right\|_2\\
				&\le\frac{1}{N}\underset{\lob\domr}{\sup}\|\Phir(\lob)\|_2^2\|\E\left\{\Phir(\Lob)\Phir(\Lob)\tr\right\}\|_2
				+\frac{1}{N}\|\E\left\{\Phir(\Lob)\Phir(\Lob)\tr\right\}\|_2^2\\
				&=O(K_r/N_r)\ .
			\end{align*} 
			Taking $t=C\sqrt{K_r\log K_r/N_r}$ in Lemma \ref{Bernstein} for an arbitrary constant $C$, we have
			\begin{align*}
				\exp\left(\left\|\sum_{i=1}^N\bfE_i\right\|_2\ge t\right)\le2K_r\exp(-C'\log K_r)
			\end{align*}
			for large enough $N$ and some constant $C'$. In other words, $\|\bfB-\bfC\|_2=O_p(\sqrt{K_r\log K_r/N_r})=o_p(1)$. Therefore, for any $\epsilon$ there exists $N_{\Delta,\epsilon}$ such that 
			\begin{align*}
				P\left\{\lambda_{\min}(\bfA)\ge\frac{1}{4}c_0\delta_0\lambda_{\min}^*\right\}
				\ge1-\epsilon
			\end{align*}
			for any $N\ge\max\{N_\Delta,N_{\Delta,\epsilon}\}$.
		\end{proof}

		\begin{lemma}\label{linear}
			The gradient of $\calLr_N(\alphar)$ at $\alpha_{K_r}^*$ satisfies
			\begin{align*}
				\left\|\left.\frac{\partial\calLr_N(\alphar)}{\partial\alphar}\right\vert_{\alpha_{K_r}^*}\right\|_2
				&=O_p\left(\sqrt{\frac{K_r}{N}}+K_r^{-\mu_1}\right)\ .
			\end{align*} 
		\end{lemma}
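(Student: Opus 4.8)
The plan is to compute the gradient explicitly, split it into a deterministic bias term and a mean-zero stochastic fluctuation, and bound each separately. Since $\oddsr(\lob;\alphar)=\exp\{\Phir(\lob)\tr\alphar\}$, differentiating the tailored loss gives
\begin{align*}
	\left.\frac{\partial\calLr_N(\alphar)}{\partial\alphar}\right\vert_{\alpha_{K_r}^*}
	=\frac{1}{N}\sum_{i=1}^N\left\{\mathsf{1}_{R_i=1_d}\oddsr(\Lobi;\alpha_{K_r}^*)\Phir(\Lobi)-\mathsf{1}_{R_i=r}\Phir(\Lobi)\right\}\ .
\end{align*}
Writing the $i$-th summand as $W_i$ and setting $g:=\E\{W_i\}$ and $\hat g:=N^{-1}\sum_{i=1}^N W_i$, I would bound the target norm by $\|\hat g-g\|_2+\|g\|_2$ via the triangle inequality and control the two pieces, which contribute the $\sqrt{K_r/N}$ and $K_r^{-\mu_1}$ terms respectively.

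For the bias term $\|g\|_2$, the key observation is that the population balancing condition \eqref{balancing-condition-expectation} holds \emph{exactly} at the true odds, i.e. $\E\{\mathsf{1}_{R=r}\Phir(\Lob)\}=\E\{\mathsf{1}_{R=1_d}\oddsr(\Lob)\Phir(\Lob)\}$ (take $g=\phir_k$ for each $k$). Subtracting this from $g$ gives
\begin{align*}
	g=\E\left\{\mathsf{1}_{R=1_d}\left[\oddsr(\Lob;\alpha_{K_r}^*)-\oddsr(\Lob)\right]\Phir(\Lob)\right\}\ .
\end{align*}
For any unit vector $v$, Cauchy--Schwarz in $L_2(P)$ yields $|v\tr g|\le\sqrt{\E\{[\oddsr(\Lob;\alpha_{K_r}^*)-\oddsr(\Lob)]^2\}}\cdot\sqrt{v\tr\E\{\Phir(\Lob)\Phir(\Lob)\tr\}v}$, where the first factor is at most $C_1K_r^{-\mu_1}$ by \ref{assump-2D} and the second is at most $\sqrt{\lambda_{\max}^*}$ by \ref{assump-2F}. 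Taking the supremum over $v$ gives $\|g\|_2=O(K_r^{-\mu_1})$. I expect this to be the main subtlety of the proof: one must apply Cauchy--Schwarz in $L_2(P)$ rather than pull $\|\Phir(\lob)\|_2$ out pointwise, since the latter route uses only \ref{assump-2E} and loses a factor $K_r^{1/2}$, producing the weaker bound $K_r^{1/2-\mu_1}$.

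For the fluctuation term, since $W_1-g,\ldots,W_N-g$ are i.i.d. and mean zero, the cross terms vanish and $\E\|\hat g-g\|_2^2=N^{-1}\E\|W_1-g\|_2^2\le N^{-1}\E\|W_1\|_2^2$. Because $\mathsf{1}_{R=1_d}$ and $\mathsf{1}_{R=r}$ are mutually exclusive for $r\neq 1_d$ and $\oddsr(\cdot;\alpha_{K_r}^*)\le 2C_0$ for large $N$ (established in the proof of Theorem \ref{odds} in Appendix \ref{sec:proof-odds}), I get $\|W_1\|_2^2\le 4C_0^2\|\Phir(\Lob_1)\|_2^2$ up to a constant, and $\E\|\Phir(\Lob_1)\|_2^2=\mathrm{trace}\,\E\{\Phir(\Lob)\Phir(\Lob)\tr\}\le K_r\lambda_{\max}^*$ by \ref{assump-2F}. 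Hence $\E\|\hat g-g\|_2^2=O(K_r/N)$, and Markov's inequality upgrades this to $\|\hat g-g\|_2=O_p(\sqrt{K_r/N})$. Combining the two bounds through the triangle inequality then gives $\|\hat g\|_2=O_p(\sqrt{K_r/N}+K_r^{-\mu_1})$, completing the proof.
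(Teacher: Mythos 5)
Your proof is correct, but it is organized around a genuinely different decomposition than the paper's. The paper splits the gradient by adding and subtracting the \emph{true} odds inside the empirical sum: one piece, $\frac{1}{N}\sum_i\{\mathsf{1}_{R_i=r}-\mathsf{1}_{R_i=1_d}\oddsr(\Lob_i)\}\Phir(\Lob_i)$, is exactly mean zero by the population balancing condition and is handled by the same second-moment/Markov computation you use; the other piece, $\frac{1}{N}\sum_i\mathsf{1}_{R_i=1_d}\{\oddsr(\Lob_i)-\oddsr(\Lobi;\alpha_{K_r}^*)\}\Phir(\Lob_i)$, is still a \emph{sample} average, which the paper bounds by writing it as a quadratic form $\frac{1}{N}\xi\tr\{\frac{1}{N}\sum_i\mathsf{1}_{R_i=1_d}\Phir(\Lob_i)\Phir(\Lob_i)\tr\}\xi$ and invoking the matrix-Bernstein argument of Lemma \ref{quadratic} to show the sample Gram matrix has largest eigenvalue $\le\lambda_{\max}^*+o_p(1)$. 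You instead center the sum at its own expectation: your fluctuation term absorbs all sampling noise in one variance/trace calculation, and your bias term $g=\E\{\mathsf{1}_{R=1_d}[\oddsr(\Lob;\alpha_{K_r}^*)-\oddsr(\Lob)]\Phir(\Lob)\}$ is purely deterministic, so the Cauchy--Schwarz bound against the \emph{population} Gram matrix (using \ref{assump-2F}) suffices and no concentration for the sample Gram matrix is needed at all. This makes your argument more elementary and self-contained; what the paper's route buys in exchange is that the eigenvalue control it invokes is needed anyway for Lemma \ref{quadratic}, so the two lemmas share machinery. Your observation that Cauchy--Schwarz in $L_2(P)$ must be used for the bias term --- rather than pulling out $\sup\|\Phir(\lob)\|_2=O(K_r^{1/2})$ pointwise via \ref{assump-2E}, which would degrade the rate to $K_r^{1/2-\mu_1}$ --- is exactly the right subtlety, and the paper's quadratic-form bound plays the same role there. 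Both routes deliver the claimed $O_p(\sqrt{K_r/N}+K_r^{-\mu_1})$.
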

		
		\begin{proof}
			The gradient of $\calLr_N(\alphar)$ at $\alpha_{K_r}^*$ is
			\begin{align*}
				\left.\frac{\partial\calLr_N(\alphar)}{\partial\alphar}\right\vert_{\alpha_{K_r}^*}
				=\frac{1}{N}\sum_{i=1}^N\left\{\mathsf{1}_{R_i=r}-\mathsf{1}_{R_i=1_d}\oddsr(\Lobi;\alpha_{K_r}^*)\right\}\Phir(\Lob_i)\ .
			\end{align*}
			Thus, by the triangle inequality,
			{\small
				\begin{align*}
					\left\|\left.\frac{\partial\calLr_N(\alphar)}{\partial\alphar}\right\vert_{\alpha_{K_r}^*}\right\|_2
					&\le\left\|\frac{1}{N}\sum_{i=1}^N\left\{\mathsf{1}_{R_i=r}-\mathsf{1}_{R_i=1_d}\oddsr(\Lob_i)\right\}\Phir(\Lob_i)\right\|_2\\
					&\quads+\left\|\frac{1}{N}\sum_{i=1}^N\mathsf{1}_{R_i=1_d}\left\{\oddsr(\Lob_i)-\oddsr(\Lobi;\alpha_{K_r}^*)\right\}\Phir(\Lob_i)\right\|_2\ .
				\end{align*} 
			}
			Consider the first term on the right hand side. Let $A_i=\{\mathsf{1}_{R_i=r}-\mathsf{1}_{R_i=1_d}\oddsr(\Lob_i)\}\Phir(\Lob_i)$. It's easy to see that $\{A_i\}_{i=1}^N$ are i.i.d. and $\E(A_i)=0$. Thus, 
			\begin{align*}
				&\quads\E\left\|\frac{1}{N}\sum_{i=1}^NA_i\right\|_2^2
				=\frac{1}{N}\E(A_i\tr A_i)\\
				&=\frac{1}{N}\E\left[\sum_{k=1}^{K_r}
				\left\{\mathsf{1}_{R=r}+\mathsf{1}_{R=1_d}\oddsr(\Lob)\oddsr(\Lob)\right\}
				\phir_k(\Lob)\phir_k(\Lob)\right]\\
				&\le\frac{C_0^2+1}{N}\E\|\Phir(\Lob)\|_2^2\ .
			\end{align*}
			By \ref{assump-2E}, $\E\|\sum_{i=1}^NA_i/N\|_2^2=O(K_r/N)$. By the Markov inequality, this implies $\|\sum_{i=1}^NA_i/N\|_2=O_p(\sqrt{K_r/N})$. As for the second term on the right hand side, let $\xi=(\xi_1,\cdots,\xi_N)$ where $\xi_i=\mathsf{1}_{R_i=1_d}\{\oddsr(\Lob_i)-\oddsr(\Lobi;\alpha_{K_r}^*)\}$ and $B=(B_1,\cdots,B_N)$ where $B_i=\mathsf{1}_{R_i=1_d}\Phir(\Lob_i)$. Then,
			\begin{align*}
				\left\|\frac{1}{N}\sum_{i=1}^N\xi_iB_i\right\|_2^2
				=\frac{1}{N^2}\xi\tr BB\tr\xi
				=\frac{1}{N}\xi\tr\left\{\frac{1}{N}\sum_{i=1}^N\mathsf{1}_{R_i=1_d}\Phir(\Lob_i)\Phir(\Lob_i)\tr\right\}\xi\ .
			\end{align*}
			Following the similar arguments in the proof of Lemma \ref{quadratic}, it's easy to see that 
			\begin{align*}
				\lambda_{\max}\left\{\frac{1}{N}\sum_{i=1}^N\mathsf{1}_{R_i=1_d}\Phir(\Lob_i)\Phir(\Lob_i)\tr\right\}
				\le\lambda_{\max}^*+o_p(1)\ .
			\end{align*}
			By \ref{assump-2D}, $|\xi_i|\le C_1K_r^{-\mu_1}$. Thus, $\|\frac{1}{N}\sum_{i=1}^N\xi_iB_i\|_2=O_p(K_r^{-\mu_1})$ and 
			\begin{align*}
				\left\|\left.\frac{\partial\calLr_N(\alphar)}{\partial\alphar}\right\vert_{\alpha_{K_r}^*}\right\|_2
				=O_p\left(\sqrt{\frac{K_r}{N}}+K_r^{-\mu_1}\right)\ .
			\end{align*}
		\end{proof}

		\begin{lemma}\label{S1}
			Under Assumptions \ref{assump1}--\ref{assump3}, for any missing pattern $r$, 
			$$
			\sup_{\theta\in\Theta}|\sqrt{N}S_{\theta,1}^r|=o_p(1).
			$$
		\end{lemma}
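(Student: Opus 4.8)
The plan is to recognise $S_{\theta,1}^r$ as a cross term between the odds-estimation error $\hat h:=\oddsr(\,\cdot\,;\hat\alpha^r)-\oddsr$ and the residual $\psi_\theta(L)-u_\theta^r(\Lob)$, and to show it is genuinely second order by pairing the conditional mean-zero structure of the residual with the $L_\infty$ convergence rate for the odds from Theorem \ref{odds}. Since $\psi_\theta$ is vector valued I would argue one coordinate $j$ at a time and take the maximum over $j=1,\dots,q$ at the end.

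First I would remove the data dependence of $\hat h$ by conditioning. Let $\mathcal A=\sigma(\{(\Lobi,R_i)\}_{i=1}^N)$. The penalized tailored loss $\calLr_\lambda$ depends on the data only through the observed blocks $\Lobi$ of the units with $R_i\in\{1_d,r\}$, so its minimizer $\hat\alpha^r$, and hence $\hat h$, is $\mathcal A$-measurable. For a complete case the missing block $\Lms_i$ is still random given $\mathcal A$, with conditional law $p(\,\cdot\mid\Lobi,R=1_d)$; by the definition $u_\theta^r(\lob)=\E\{\psi_\theta(L)\mid\Lob=\lob,R=1_d\}$ (valid under \ref{assump-1C}) this gives $\E\{\psi_\theta(L_i)-u_\theta^r(\Lobi)\mid\mathcal A\}=0$ on $\{R_i=1_d\}$. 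Consequently the summands $\xi_i(\theta):=\mathsf{1}_{R_i=1_d}\hat h(\Lobi)\{\psi_{\theta,j}(L_i)-u_{\theta,j}^r(\Lobi)\}$ of $\sqrt N S_{\theta,1,j}^r=N^{-1/2}\sum_i\xi_i(\theta)$ are, conditionally on $\mathcal A$, independent and centered with the multiplier $\hat h$ frozen.

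Next I would bound $\sup_\theta|\sqrt N S_{\theta,1,j}^r|$ by a maximal inequality via bracketing, applied conditionally on $\mathcal A$ with $\hat h$ treated as a fixed function. Index the class $\mathcal F_N=\{\mathsf{1}_{R=1_d}\hat h(\Lob)(\psi_{\theta,j}(L)-u_{\theta,j}^r(\Lob)):\theta\in\Theta\}$ by $\theta$, with envelope $F_N=\mathsf{1}_{R=1_d}|\hat h(\Lob)|\sup_\theta|\psi_{\theta,j}-u_{\theta,j}^r|$. The bracketing entropy of $\{\psi_{\theta,j}:\theta\in\Theta\}$ is controlled by \ref{assump-3C}, and since conditional expectation is an $L_2(P)$ contraction the entropy of $\{u_{\theta,j}^r\}$, hence of $\{\psi_{\theta,j}-u_{\theta,j}^r\}$, is no larger; multiplying each bracket by the fixed $\hat h$ turns these into brackets for $\mathcal F_N$ with $L_2(P)$ sizes scaled by $|\hat h|$, so the entropy integral $J_{[\ ]}\{\|F_N\|_{P,2},\mathcal F_N,L_2(P)\}$ stays finite and, because $d_{\calH}>1/2$, tends to zero as its radius does. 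By \ref{assump-3D} and Theorem \ref{odds},
\begin{align*}
\|F_N\|_{P,2}\le\|\hat h\|_\infty\,\big\|\sup_{\theta}|\psi_{\theta,j}-u_{\theta,j}^r|\big\|_{P,2}\le C_3\|\hat h\|_\infty=O_p\!\left(\sqrt{K_r^2/N_r}+K_r^{1/2-\mu_1}\right)=o_p(1),
\end{align*}
where the last step uses \ref{assump-2C} and $\mu_1>1/2$. The maximal inequality then yields $\E\{\sup_{\theta}|\sqrt N S_{\theta,1,j}^r|\mid\mathcal A\}=o_p(1)$; integrating out $\mathcal A$ by conditional Markov and bounded convergence, and maximizing over $j=1,\dots,q$, gives $\sup_{\theta\in\Theta}|\sqrt N S_{\theta,1}^r|=o_p(1)$.

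The main obstacle is the conditional bracketing step: one must verify that multiplying the $\theta$-indexed brackets by the frozen $\hat h$ keeps the entropy integral finite, so that the growing sieve dimension $K_r$ enters only through the vanishing envelope norm $\|F_N\|_{P,2}$ and not through the entropy count; and one must apply the maximal inequality to summands that, given $\mathcal A$, are independent but not identically distributed, which calls for the bracketing version for independent non-identically distributed arrays (bracketing taken with respect to the averaged conditional law). The conditioning on $\mathcal A$ is exactly what makes this work, since it decouples $\hat h$ from the centered residuals and lets the sup-norm rate of Theorem \ref{odds} govern the envelope; without it, the supremum would have to range over the full $K_r$-dimensional class of candidate odds expansions, whose entropy would contribute an extra factor of order $\sqrt{K_r}$ and would require the stronger condition $K_r^3=o(N_r)$ rather than \ref{assump-2C}.
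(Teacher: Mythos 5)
Your overall skeleton matches the paper's: both reduce $S^r_{\theta,1}$ to an empirical process with mean zero (via the conditional centering of $\psi_\theta(L)-u^r_\theta(\Lob)$ given $(\Lob, R=1_d)$ under \ref{assump-1C}), both take the envelope norm to be of order $C_3\|\oddsr(\,\cdot\,;\hat\alpha^r)-\oddsr\|_\infty=o_p(1)$ via Theorem \ref{odds} and \ref{assump-3D}, and both finish with a bracketing maximal inequality whose entropy integral vanishes with the envelope because $d_\calH>1/2$. Your observations that $\hat\alpha^r$ is measurable with respect to $\mathcal A=\sigma(\{(\Lobi,R_i)\}_{i=1}^N)$ and that, given $\mathcal A$, the complete-case residuals are independent and centered are also correct. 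The gap is exactly at the step you call the ``main obstacle'' and then treat as a routine citation: conditionally on $\mathcal A$, the complete cases are independent but \emph{not} identically distributed (unit $i$ has law $p(\,\cdot\mid\Lobi,R_i=1_d)$), so the i.i.d. maximal inequality (Corollary 19.35 of \citet{van2000asymptotic}, which the paper uses) does not apply, and the non-i.i.d. bracketing inequality you invoke needs bracketing numbers with respect to the averaged conditional law $\bar P_N=N_c^{-1}\sum_{i:R_i=1_d}p(\,\cdot\mid\Lobi,R_i=1_d)$, where $N_c$ is the number of complete cases. This is a \emph{random} measure. \ref{assump-3C} controls entropy only in $L_2(P)$: an $\epsilon$-bracket in $L_2(P)$ has a random $L_2(\bar P_N)$-size whose mean is controlled (by Markov and \ref{assump-1B}), but this cannot be made uniform over the exponentially many (of order $e^{C(1/\epsilon)^{1/d_\calH}}$) brackets by any union bound, and no pointwise domination of $p(\,\cdot\mid\lob,R=1_d)$ by $P$ is assumed. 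The envelope-norm step survives this transfer (it concerns a single function, so Markov suffices), but the entropy-integral step does not; you would need sup-norm brackets for $\calH$ (note \ref{assump-3B} is stated in $L^\infty$ precisely because the paper needs this for the odds class, while \ref{assump-3C} is not) or a bounded conditional-density-ratio condition, neither of which is available.

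The paper sidesteps the issue without conditioning: by Theorem \ref{odds}, with probability at least $1-\gamma$ the fitted odds lies in the deterministic sup-norm ball $\{O:\|O-\oddsr\|_\infty\le\delta_1\}$; the induced product class $\calF_1$ has unconditional mean zero under \ref{assump-1C}; its $L_2(P)$ bracketing entropy is bounded by combining the $L^\infty$ entropy of the smoothness class $\calM^r$ from \ref{assump-3B} (rescaled by $\delta_1$, using the containment of the candidate odds functions in $\calM^r$) with \ref{assump-3C} through the product and contraction bracket lemmas (Lemmas \ref{bracket1}--\ref{bracket4}); and then the ordinary i.i.d. maximal inequality applies with envelope $C_3\delta_1\to0$. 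This also shows your closing claim---that an unconditional argument would incur an extra $\sqrt{K_r}$ entropy factor and force $K_r^3=o(N_r)$---mischaracterizes the alternative: the paper never ranges over the $K_r$-dimensional sieve, but over a sup-norm ball inside $\calM^r$, whose entropy is dimension-free. Your conditioning device is appealing precisely because it would let this lemma dispense with \ref{assump-3B}, but under the paper's stated assumptions the conditional maximal-inequality step is where the proof breaks, and repairing it requires a genuinely new assumption or argument rather than a citation.
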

		
		\begin{proof}
			Consider the following empirical process.
			\begin{align*}
				\bbG_N(f_{\theta,1})=\sqrt{N}\left[\frac{1}{N}\sum_{i=1}^Nf_{\theta,1}(L_i,R_i)-\E\left\{f_{\theta,1}(L,R)\right\}\right]
			\end{align*}
			where $f_{\theta,1}(L,R)=\mathsf{1}_{R=1_d}\{O(\Lob)-\oddsr(\Lob)\}\{\psi_\theta(L)-u^r_\theta(\Lob)\}$ and $O$ is an arbitrary function, which can be viewed as an estimator of true propensity odds $\oddsr$. By Theorem \ref{odds}, for any $\gamma>0$, there exists constants $C_\gamma>0$ and $N_\gamma >0$ such that for any $N\ge N_\gamma$,
			\begin{align*}
				P\left\{\left\|\oddsr(\ \cdot\ ;\hat{\alpha}^r)-\oddsr\right\|_\infty\ge C_\gamma\left(\sqrt{\frac{K_r^2}{N_r}}+K_r^{\frac{1}{2}-\mu_1}\right)\right\}
				\le\gamma.
			\end{align*}
			Let $\delta_1=C_\gamma(\sqrt{K_r^2/N_r}+K_r^{1/2-\mu_1})$ and consider the set of functions 
			\begin{align*}
				\calF_1=\left\{f_{\theta,1}:\left\|O-\oddsr\right\|_\infty\le\delta_1,\theta\in\Theta\right\}\ .
			\end{align*}
			By \ref{assump-1C}, for any $f_{\theta,1}\in\calF_1$, 
			\begin{align*}
				\E\left\{f_{\theta,1}(L,R)\right\}
				&=\E\left[\E\left\{f_{\theta,1}(L,R)\mid L^r,R\right\}\right]\\
				&=\E\left[\mathsf{1}_{R=1_d}\E\left\{f_{\theta,1}(L,R)\mid L^r,R=1_d\right\}\right]
				=0\ .
			\end{align*}
			Define $\hat{f}_{\theta,1}(L,R):=\mathsf{1}_{R=1_d}\{\oddsr(\Lob;\hat{\alpha}^r)-\oddsr(\Lob)\}\{\psi_\theta(L)-u^r_\theta(\Lob)\}$. To simplify notations, vectors $A>B$ means that $A_j>B_j$ for each entry, and vector $A>c$ means that $A_j>c$ for each entry where $c$ is a constant. 
			
			Notice that $\sup_{\theta\in\Theta}|\sqrt{N}S_{\theta,1}^r|=\sup_{\theta\in\Theta}|\bbG_N(\hat{f}_{\theta,1})|$. Thus, 
			\begin{align*}
				1-\gamma
				\le P\left(\hat{f}_{\theta,1}\in\calF_1\right)
				\le P\left(\underset{\theta\in\Theta}{\sup}\left|\sqrt{N}S_{\theta,1}^r\right|\le\underset{f_{\theta,1}\in\calF_1}{\sup}|\bbG_N(f_{\theta,1})|\right)\ .
			\end{align*}
			By Markov's inequality, for any $\xi>0$, we have 
			\begin{align*}
				P\left(\underset{f_{\theta,1}\in\calF_1}{\sup}|\bbG_N(f_{\theta,1})|
				\ge\frac{1}{\xi}\E\underset{f_{\theta,1}\in\calF_1}{\sup}\left|\bbG_N(f_{\theta,1})\right|\right)
				\le\xi\ .
			\end{align*}
			If we can show $\E\sup_{f_{\theta,1}\in\calF_1}|\bbG_N(f_{\theta,1})|=o_p(1)$, then for any $\eta>0$ and fixed $\xi>0$, there exists $N_{\xi,\eta}$ and $\sigma_{\xi,\eta}$ such that for any $N\ge N_{\xi,\eta}$,
			\begin{align*}
				P\left(\frac{1}{\xi}\E\underset{f_{\theta,1}\in\calF_1}{\sup}\left|\bbG_N(f_{\theta,1})\right|\ge\sigma_{\xi,\eta}\right)\le\eta\ .
			\end{align*}
			Then, for any $\epsilon>0$, by taking $\gamma=\xi=\eta=\frac{\epsilon}{3}$ and appropriately choosing $C_\gamma$, $N_\gamma$, $N_{\xi,\eta}$ and $\sigma_{\xi,\eta}$, we have the above inequalities and for any $N\ge N_\epsilon=\max\{N_\gamma,N_{\xi,\eta}\}$,
			\begin{align*}
				P\left(\underset{\theta\in\Theta}{\sup}\left|\sqrt{N}S_{\theta,1}^r\right|\ge\sigma_{\xi,\eta}\right)
				\le\gamma+\xi+\eta=\epsilon\ .
			\end{align*}
			That is, $\sup_{\theta\in\Theta}|\sqrt{N}S_{\theta,1}^r|=o_p(1)$.
			
			To show $\E\sup_{f_{\theta,1}\in\calF_1}|\bbG_N(f_{\theta,1})|=o_p(1)$, we utilize the maximal inequality with bracketing (Corollary 19.35 in \cite{van2000asymptotic}). Define the envelop function $F_1(L):=\sup_{\theta\in\Theta}|\psi_\theta(L)-u^r_\theta(\Lob)|\delta_1$. It's easy to see $|f_{\theta,1}(L,R)|\le F_1(L)$ for any $f_{\theta,1}\in\calF_1$. Besides, due to \ref{assump-3D}, for each $j$-th entry,
			\begin{align*}
				\|F_{1,j}\|_{P,2}
				=\sqrt{\int F_{1,j}(L)^2dP(L)}
				=\sqrt{\E\left[\underset{\theta}{\sup}\left\{\psi_{\theta,j}(L)-u_{\theta,j}^r(\Lob)\right\}^2\delta_1^2\right]}
				\le C_3\delta_1\ .
			\end{align*}
			To save notations, $\psi_\theta$ and $u_\theta^r$ are used as their $j$-th entry. We also omit the subscripts ``$j$'' of some sets of functions where the related inequalities should hold for each $j$-th entry.
			
			By the maximal inequality,
			\begin{align*}
				\E\underset{f_{\theta,1}\in\calF_1}{\sup}\left|\bbG_N(f_{\theta,1})\right|
				=O_p\left(J_{[ \ ]}\{C_3\delta_1,\calF_1,L_2(P)\}\right)\ .
			\end{align*}
			To study the entropy integral of $\calF_1$, we split function $f_{\theta,1}$ into two parts and consider two sets of functions $\calG_1=\{g_1:\|g_1\|_\infty\le\delta_1\}$ where $g_1(L)=O(\Lob)-\oddsr(\Lob)$ and $\calH_1=\{h_{\theta,1}:\theta\in\Theta\}$ where $h_{\theta,1}(L)=\psi_\theta(L)-u^r_\theta(\Lob)$. Notice that $\|g_1\|_\infty\le\delta_1$, $\|h_{\theta,1}(L)\|_{P,2}\le C_3$ and $\delta_1\le1$ when $N$ is large enough. By Lemma \ref{bracket1},
			\begin{align*}
				n_{[ \ ]}\left\{4\left(C_3+1\right)\epsilon,\calF_1,L_2(P)\right\}
				\le n_{[ \ ]}\{\epsilon,\calG_1,L^\infty\}n_{[ \ ]}\{\epsilon,\calH_1,L_2(P)\}\ .
			\end{align*}
			Define $\tilde{\calG}_1:=\{g_1:\|g_1\|_\infty\le C\}$ for some constant $C$ and $\calO:=\tilde{\calG}_1+\oddsr=\{O:\|O-\oddsr\|_\infty\le C\}$. It is obvious that $\calG=\delta_1/C\tilde{\calG}$. Since $\oddsr$ is a fixed function,
			\begin{align*}
				&\quads n_{[ \ ]}\left\{\epsilon,\calG_1,L^\infty\right\}
				=n_{[ \ ]}\left\{\epsilon,\delta_1/C\tilde{\calG}_1,L^\infty\right\}\\
				&=n_{[ \ ]}\left\{C\epsilon/\delta_1,\tilde{\calG}_1,L^\infty\right\}
				=n_{[ \ ]}\left\{C\epsilon/\delta_1,\calO,L^\infty\right\}\ .
			\end{align*}
			The true propensity score odds $\oddsr$ is unknown, but its roughness is controlled by \ref{assump-3B}. Thus, we should not consider much more rough functions. In other words, our models for propensity score odds should satisfy a similar smoothness condition. There exists appropriate constant $C_\calO$ such that $\calO\subset\calM^r$. Thus,
			\begin{align*}
				n_{[ \ ]}\{\epsilon,\calO,L^\infty\}
				\le n_{[ \ ]}\{\epsilon,\calM^r,L^\infty\}\ .
			\end{align*}
			Define a set of functions $\calU^r=\{u^r_\theta:\theta\in\Theta\}$. Notice that $\calH_1\subset\Psi-\calU^r$ By Lemma \ref{bracket2}, Assumption \ref{assump-3E} and Lemma \ref{bracket3},
			\begin{align*}
				n_{[ \ ]}\{2\epsilon,\calH_1,L_2(P)\}
				\le n_{[ \ ]}\{\epsilon,\calH,L_2(P)\}n_{[ \ ]}\{\epsilon,\calU^r,L_2(P)\}
				\le n_{[ \ ]}\{\epsilon,\calH,L_2(P)\}^2\ .
			\end{align*}
			Combine the above inequalities and recall \ref{assump-3B} and \ref{assump-3C},
			\begin{align*}
				J_{[ \ ]}\{\|F_1\|_{P,1},\calF_1,L_2(P)\}
				&\le\int_0^{C_3\delta_1}\sqrt{\log n_{[ \ ]}\left\{\frac{C_\calO\epsilon}{4(C_3+1)\delta_1},\calM^r,L^\infty\right\}}d\epsilon\\
				&\quads+\sqrt{2}\int_0^{C_3\delta_1}\sqrt{\log n_{[ \ ]}\left\{\frac{\epsilon}{8(C_3+1)\delta_1},\calH,L_2(P)\right\}}d\epsilon\\
				&\le\sqrt{C_\calM}\int_0^{C_3\delta_1}\{4(C_3+1)\delta_1/(C_\calO\epsilon)\}^{\frac{1}{2d_\calM}}d\epsilon\\
				&\quads+\sqrt{2C_\calH}\int_0^{C_3\delta_1}\{8(C_3+1)\delta_1/\epsilon\}^{\frac{1}{2d_\calH}}d\epsilon\\
				&=\sqrt{C_\calM}\{4(C_3+1)/C_\calO\}^{\frac{1}{2d_\calM}}C_3^{1-\frac{1}{2d_\calM}}\delta_1\\
				&\quads+\sqrt{2C_\calH}\{8(C_3+1)\}^{\frac{1}{2d_\calH}}C_3^{1-\frac{1}{2d_\calH}}\delta_1\\
				&\to0
			\end{align*}
			since $d_\calM,d_\calH>1/2$ and $\delta_1\to0$ as $N\to\infty$. Therefore, $\E\sup_{f_{\theta,1}\in\calF_1}|\bbG_N(f_{\theta,1})|=O_p(o_p(1))=o_p(1)$ and $\sup_{\theta\in\Theta}|\sqrt{N}S_{\theta,1}^r|=o_p(1)$.
		\end{proof}

		\begin{lemma}\label{S2}
			Under Assumptions \ref{assump1}--\ref{assump3}, for any missing pattern $r$, $\sup_{\theta\in\Theta}|\sqrt{N}S_{\theta,2}^r|=o_p(1)$.
		\end{lemma}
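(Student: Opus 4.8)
The plan is to separate the estimation error of the odds from the basis-approximation error, and to exploit that the latter is uniformly tiny. Write the approximation residual as $h_\theta(\lob):=u^r_\theta(\lob)-\Phir(\lob)\tr\beta_\theta^r$, which by \ref{assump-3A} satisfies $\sup_{\theta}\|h_\theta\|_\infty\le C_2K_r^{-\mu_2}$, and split
\begin{align*}
	S_{\theta,2}^r
	&=\underbrace{\frac{1}{N}\sum_{i=1}^N\left\{\mathsf{1}_{R_i=1_d}\oddsr(\Lobi)-\mathsf{1}_{R_i=r}\right\}h_\theta(\Lobi)}_{S_{\theta,2,a}^r}\\
	&\quads+\underbrace{\frac{1}{N}\sum_{i=1}^N\mathsf{1}_{R_i=1_d}\left\{\oddsr(\Lobi;\hat\alpha^r)-\oddsr(\Lobi)\right\}h_\theta(\Lobi)}_{S_{\theta,2,b}^r}\ .
\end{align*}
The term $S_{\theta,2,a}^r$ involves only the \emph{true} odds and is centered: taking $g=h_\theta$ in the balancing condition in expectation \eqref{balancing-condition-expectation} gives $\E\{\mathsf{1}_{R=1_d}\oddsr(\Lob)h_\theta(\Lob)\}=\E\{\mathsf{1}_{R=r}h_\theta(\Lob)\}$, so $\sqrt{N}S_{\theta,2,a}^r=\bbG_N(\xi_\theta)$ with $\xi_\theta=\{\mathsf{1}_{R=1_d}\oddsr-\mathsf{1}_{R=r}\}h_\theta$ a mean-zero empirical process indexed by $\theta$. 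All randomness coming from the estimated odds is isolated in $S_{\theta,2,b}^r$.

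For $S_{\theta,2,a}^r$ I would argue exactly as in Lemma \ref{S1}. The envelope $F_2:=(C_0+1)\sup_\theta|h_\theta|$ satisfies $\|F_2\|_{P,2}=O(K_r^{-\mu_2})$ by \ref{assump-2A} and \ref{assump-3A}, so the maximal inequality via bracketing (Corollary 19.35 of \citet{van2000asymptotic}) gives $\E\sup_\theta|\bbG_N(\xi_\theta)|=O\big(J_{[ \ ]}\{\|F_2\|_{P,2},\calF_2,L_2(P)\}\big)$ with $\calF_2=\{\xi_\theta:\theta\in\Theta\}$. Using $h_\theta=u^r_\theta-\Phir(\lob)\tr\beta_\theta^r$, the bracketing number of $\calF_2$ is controlled by combining that of $\calU^r=\{u^r_\theta:\theta\in\Theta\}$ (polynomial, inherited from $\calH$ through Lemmas \ref{bracket2}--\ref{bracket3} and \ref{assump-3C}) with that of the $K_r$-dimensional linear class $\{\Phir(\lob)\tr\beta_\theta^r\}$, whose coefficients are uniformly bounded by \ref{assump-2F} and whose log-bracketing number grows like $K_r\log(1/\epsilon)$. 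Because the integration range is of order $K_r^{-\mu_2}$, the dominant contribution to the entropy integral is of order $K_r^{1/2-\mu_2}$ (up to a harmless $\sqrt{\log K_r}$ factor), which tends to $0$ under $\mu_2>1/2$; a Markov-inequality step as in Lemma \ref{S1} then yields $\sup_\theta|\sqrt{N}S_{\theta,2,a}^r|=o_p(1)$.

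For $S_{\theta,2,b}^r$ the idea is Cauchy--Schwarz, which decouples the $\theta$-uniform smallness of $h_\theta$ from the odds estimation error:
\begin{align*}
	\sup_{\theta\in\Theta}|S_{\theta,2,b}^r|
	\le\sqrt{\frac{1}{N}\sum_{i=1}^N\mathsf{1}_{R_i=1_d}\left\{\oddsr(\Lobi;\hat\alpha^r)-\oddsr(\Lobi)\right\}^2}\cdot\sup_{\theta\in\Theta}\sqrt{\frac{1}{N}\sum_{i=1}^N\mathsf{1}_{R_i=1_d}h_\theta(\Lobi)^2}\ .
\end{align*}
The second factor is at most $\sup_\theta\|h_\theta\|_\infty\le C_2K_r^{-\mu_2}$, uniformly in $\theta$. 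For the first factor I would reproduce the $L_2$-type computation from the proof of Theorem \ref{odds}, but against the empirical measure: the local Lipschitz bound on $\exp$ gives $\{\oddsr(\lob;\hat\alpha^r)-\oddsr(\lob)\}^2\lesssim (\hat\alpha^r-\alpha_{K_r}^*)\tr\Phir(\lob)\Phir(\lob)\tr(\hat\alpha^r-\alpha_{K_r}^*)+K_r^{-2\mu_1}$, the empirical Gram matrix $N^{-1}\sum_i\mathsf{1}_{R_i=1_d}\Phir(\Lobi)\Phir(\Lobi)\tr$ has largest eigenvalue $\lambda_{\max}^*+o_p(1)$ (as in Lemma \ref{quadratic}), and $\|\hat\alpha^r-\alpha_{K_r}^*\|_2=O_p(\sqrt{K_r/N_r}+K_r^{-\mu_1})$ by Lemma \ref{alpha}; hence the first factor is $O_p(\sqrt{K_r/N_r}+K_r^{-\mu_1})$. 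Multiplying and rescaling by $\sqrt{N}$ gives
\begin{align*}
	\sup_{\theta\in\Theta}|\sqrt{N}S_{\theta,2,b}^r|
	=O_p\!\left(\sqrt{\frac{N}{N_r}}\,K_r^{\frac{1}{2}-\mu_2}+\sqrt{N}\,K_r^{-(\mu_1+\mu_2)}\right)=o_p(1)\ ,
\end{align*}
where the first term vanishes by $\mu_2>1/2$ and $N/N_r=O(1)$ (from \ref{assump-1B}), and the second by \ref{assump-3E}. Combining the two parts proves the lemma.

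I expect the main obstacle to be the first factor of the $S_{\theta,2,b}^r$ bound: Theorem \ref{odds} is stated for the \emph{population} $L_2$ norm, so it cannot be invoked directly, and one must re-derive the matching \emph{empirical} $L_2$ rate from Lemma \ref{alpha} together with the empirical-Gram eigenvalue control of Lemma \ref{quadratic} (the cruder sup-norm bound would require $\mu_2>1$ and is therefore useless). A secondary difficulty is the uniform-in-$\theta$ bracketing for $S_{\theta,2,a}^r$, namely ensuring that the growing $K_r$-dimensional approximation term $\Phir(\lob)\tr\beta_\theta^r$ does not inflate the entropy integral beyond order $K_r^{1/2-\mu_2}$; this is precisely where the condition $\mu_2>1/2$ is consumed.
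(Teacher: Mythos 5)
Your proposal is correct, but it follows a genuinely different route from the paper's proof. The paper never splits off an oracle term: it keeps $S_{\theta,2}^r$ whole, notes that with high probability the estimated odds lies in the ball $\{O:\|O-\oddsr\|_{P,2}\le\delta_1'\}$ (Theorem \ref{odds}) and the approximant in the sup-ball of radius $\delta_2=C_2K_r^{-\mu_2}$, and then treats $\sqrt{N}S_{\theta,2}^r$ as an empirical process over the resulting deterministic class $\calF_2$. The functions in that class are \emph{not} centered; the paper bounds their means by $\delta_1'\delta_2=o_p(N^{-1/2})$ via Cauchy--Schwarz and \ref{assump-3E}, and controls the stochastic fluctuation by bracketing with envelope $(C_0+1)\delta_2$, drawing entropy from the smoothness class $\calM^r$ of \ref{assump-3B} for the odds and from $\calH$ of \ref{assump-3C} for the $u^r_\theta$ part. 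Your oracle/plug-in decomposition relocates exactly the same cross-term $\sqrt{N}(\sqrt{K_r/N_r}+K_r^{-\mu_1})K_r^{-\mu_2}$ into your $S_{\theta,2,b}^r$ bound, so the rate bookkeeping and the use of \ref{assump-3E} coincide; the difference is in how the randomness of $\hat{\alpha}^r$ is tamed. The paper pays for it with \ref{assump-3B} (and the slightly informal step that the estimated odds lies in $\calM^r$), whereas you avoid any smoothness class for the odds in this lemma and instead pay with two details the paper gets for free: (i) an \emph{empirical} $L_2$ rate for $\oddsr(\cdot;\hat{\alpha}^r)-\oddsr$, which you correctly rebuild from Lemma \ref{alpha} plus the empirical-Gram eigenvalue bound $\lambda_{\max}\{N^{-1}\sum_i\mathsf{1}_{R_i=1_d}\Phir(\Lobi)\Phir(\Lobi)\tr\}\le\lambda_{\max}^*+o_p(1)$ already established in the proof of Lemma \ref{linear}; and (ii) bracketing entropy for the growing $K_r$-dimensional class $\{\Phir{}\tr\beta_\theta^r\}$, where the uniform coefficient bound $\sup_\theta\|\beta_\theta^r\|_2<\infty$ needs to be spelled out (it follows from \ref{assump-2F} together with the uniform $L_2$-boundedness of $u^r_\theta$ inherited from $\calH$ and \ref{assump-3A}, and the sup-norm conversion uses \ref{assump-2E}), giving the entropy integral $O(K_r^{1/2-\mu_2}\sqrt{\log K_r})=o(1)$ under $\mu_2>1/2$ exactly as you claim. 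Both routes are sound; yours is arguably more self-contained on the odds side, while the paper's reuses wholesale the class machinery ($\calM^r$, $\calH$, Lemmas \ref{bracket1}--\ref{bracket4}) it already needs for Lemma \ref{S1}.
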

		
		\begin{proof}
			Consider the following empirical process.
			\begin{align*}
				\bbG_N(f_{\theta,2})=\sqrt{N}\left[\frac{1}{N}\sum_{i=1}^Nf_{\theta,2}(L_i,R_i)-\E\left\{f_{\theta,2}(L,R)\right\}\right]
			\end{align*}
			where $f_{\theta,2}(L,R)=\{\mathsf{1}_{R=1_d}O(\Lob)-\mathsf{1}_{R=r}\}
			\{u^r_\theta(\Lob)-U(\Lob)\}$ and $O$ and $U$ are arbitrary functions. By Theorem \ref{odds}, for any $\gamma>0$, there exists constants $C_\gamma>0$ and $N_\gamma >0$ such that for any $N\ge N_\gamma$,
			\begin{align*}
				P\left\{\left\|\oddsr(\ \cdot\ ;\hat{\alpha}^r)-\oddsr\right\|_{P,2}\ge C_\gamma\left(\sqrt{\frac{K_r}{N_r}}+K_r^{-\mu_1}\right)\right\}
				\le\gamma\ .
			\end{align*}
			Besides, by \ref{assump-3A}, $\sup_{\lob\in\domr}|u^r_\theta(\lob)-\Phir(\lob)\tr\beta_\theta^r|\le C_2K_r^{-\mu_2}$. So, we consider the set of functions
			\begin{align*}
				\calF_2=\left\{f_{\theta,2}:\left\|O-\oddsr\right\|_{P,2}\le\delta_1',\|u^r_\theta-U\|_\infty\le\delta_2,\theta\in\Theta\right\}
			\end{align*}
			where $\delta_1'=C_\gamma(\sqrt{K_r/N_r}+K_r^{-\mu_1})$ and $\delta_2=C_2K_r^{-\mu_2}$. Then, for any $f_{\theta,2}\in\calF_2$,
			\begin{align*}
				\E\left\{f_{\theta,2}(L,R)\right\}
				&=\E\left[\left\{\mathsf{1}_{R=1_d}\oddsr(\Lob)-\mathsf{1}_{R=r}\right\}\left\{u^r_\theta(\Lob)-U(\Lob)\right\}\right]\\
				&\quads+\E\left[\mathsf{1}_{R=1_d}\left\{O(\Lob)-\oddsr(\Lob)\right\}\left\{u^r_\theta(\Lob)-U(\Lob)\right\}\right]\\
				&\le0+\left\|O-\oddsr\right\|_{P,2}\|u^r_\theta-U\|_{P,2}\\
				&\le\delta_1'\delta_2
				=C_2C_\gamma\left(\frac{K_r^{\frac{1}{2}-\mu_2}}{\sqrt{N_r}}+K_r^{-\mu_1-\mu_2}\right)=o_p(N^{-\frac{1}{2}})\ .
			\end{align*}
			The last line holds due to the fact that $\|\cdot\|_{P,2}\le\|\cdot\|_\infty$ and \ref{assump-3A} and \ref{assump-3E}. 
			Plug in our estimator and define 
			$\hat{f}_{\theta,2}(L,R):=\{\mathsf{1}_{R=1_d}\oddsr(\Lob;\hat{\alpha}^r)-\mathsf{1}_{R=r}\}\{u^r_\theta(\Lob)-\Phir(\Lob)\tr\beta_\theta^r\}$. Then, $\sup_{\theta\in\Theta}|\sqrt{N}S_{\theta,2}^r|\le\sup_{\theta\in\Theta}|\bbG_N(\hat{f}_{\theta,2})|+\sqrt{N}\delta_1'\delta_2$ and 
			\begin{align*}
				P\left(\underset{\theta\in\Theta}{\sup}\left|\sqrt{N}S_{\theta,2}^r\right|>
				\underset{f_{\theta,2}\in\calF_2}{\sup}|\bbG_N(f_{\theta,2})|+\sqrt{N}\delta_1'\delta_2\right)
				\le P\left(\hat{f}_{\theta,2}\notin\calF_2\right)
				\le\gamma\ .
			\end{align*}
			Similarly, we need to show $\E\sup_{f_{\theta,2}\in\calF_2}|\bbG_N(f_{\theta,2})|=o_p(1)$. Define the envelop function $F_2:=(C_0+1)\delta_2$. It's easy to see that $|f_{\theta,2}(L,R)|\le F_2$ for any $f_{\theta,2}\in\calF_2$ when $N$ is large enough. By the maximal inequality with bracketing, 
			\begin{align*}
				\E\underset{f_{\theta,2}\in\calF_2}{\sup}\left|\bbG_N(f_{\theta,2})\right|
				=O_p\left(J_{[ \ ]}\{\|F_2\|_{P,2},\calF_2,L_2(P)\}\right)\ .
			\end{align*}
			To study the entropy integral of $\calF_2$, we first compare it with $\calF_2'=\{f_{\theta,2}:\|O-\oddsr\|_{P,2}\le\delta_1',\|u^r_\theta-U\|_{P,2}\le\delta_2,\theta\in\Theta\}$. It is apparent $\calF_2\subset\calF_2'$. Then, we split function $f_{\theta,2}$ into two parts and consider two sets of functions $\calG_2=\{g_2:\|O-\oddsr\|_{P,2}\le\delta_1'\}$ where $g_2(L,R)=\mathsf{1}_{R=1_d}O(\Lob)-\mathsf{1}_{R=r}$ and $\calH_2=\{h_{\theta,2}:\|h_{\theta,2}\|_{P,2}\le\delta_2,\theta\in\Theta\}$ where $h_{\theta,2}(L)=u^r_\theta(\Lob)-U(\Lob)$. Notice that $\|g_2\|_{P,2}\le C_0+1$ and $\|h_{\theta,2}\|_{P,2}\le\delta_2\le1$ when $N$ is large enough. By Lemma \ref{bracket1},
			\begin{align*}
				n_{[ \ ]}\{4(C_0+2)\epsilon,\calF_2,L_2(P)\}
				\le n_{[ \ ]}\{\epsilon,\calG_2,L_2(P)\}n_{[ \ ]}\{\epsilon,\calH_2,L_2(P)\}\ .
			\end{align*}
			Notice that $\calG_2+(\mathsf{1}_{R=r}-\mathsf{1}_{R=1_d}\oddsr)=\mathsf{1}_{R=1_d}\calG_1$. Since $\mathsf{1}_{R=r}-\mathsf{1}_{R=1_d}\oddsr$ is a fixed function, and $\|\mathsf{1}_{R=1_d}\|_\infty\le1$, by Lemma \ref{bracket4},
			\begin{align*}
				n_{[ \ ]}\{\epsilon,\calG_2,L_2(P)\}
				=n_{[ \ ]}\{\epsilon,\mathsf{1}_{R=1_d}\calG_1,L_2(P)\}
				\le n_{[ \ ]}\{\epsilon,\calG_1,L_2(P)\}\ .
			\end{align*}
			It is obvious that any $\epsilon$-brackets equipped with $\|\cdot\|_\infty$ norm are also $\epsilon$-brackets in $L_2(P)$. With similar arguments in the proof of Lemma \ref{S1}, we have
			\begin{align*}
				n_{[ \ ]}\{\epsilon,\calG_1,L_2(P)\}
				\le n_{[ \ ]}\{\epsilon,\calG_1,L^\infty\}
				\le n_{[ \ ]}\left\{C_\calO\epsilon/\delta_1',\calM^r,L^\infty\right\}\ .
			\end{align*}
			Define a set of functions $\tilde{\calH}_2=\{h_{\theta,2}:\|h_{\theta,2}\|_{P,2}\le C,\theta\in\Theta\}$. Similarly,
			\begin{align*}
				n_{[ \ ]}\{\epsilon,\calH_2,L_2(P)\}
				= n_{[ \ ]}\left\{\epsilon,\delta_2/C\tilde{\calH}_2,L_2(P)\right\}
				=n_{[ \ ]}\left\{C\epsilon/\delta_2,\tilde{\calH}_2,L_2(P)\right\}\ .
			\end{align*}
			Similarly, we split $\tilde{\calH}_2$ into two parts. Define a set of functions $\hat{\calU}^r=\{U:\exists u^r_\theta\in\calU^r \ s.t. \ \|u^r_\theta-U\|_\infty\le C,\}$ where $\calU^r=\{u^r_\theta:\theta\in\Theta\}$. By Lemma \ref{bracket2},
			\begin{align*}
				n_{[ \ ]}\{2\epsilon,\tilde{\calH}_2,L_2(P)\}
				\le n_{[ \ ]}\{\epsilon,\calU^r,L_2(P)\}n_{[ \ ]}\{\epsilon,\hat{\calU}^r,L_2(P)\}\ .
			\end{align*}
			Also define a set of functions $\E\calH^r:=\{g^r(\lob):=\E\{f(L)\mid\Lob=\lob,R=r\},f\in\calH\}$. Although the set $\calU^r$ is unknown, we should not consider much more rough functions than those in $\E\calH^r$. Therefore, there exists a constant $C_{\hat{\calU}^r}$ such that $\hat{\calU}^r\subset\E\calH^r$. Thus, by Lemma \ref{bracket3},
			\begin{align*}
				n_{[ \ ]}\{\epsilon,\hat{\calU}^r,L_2(P)\}
				\le n_{[ \ ]}\{\epsilon,\E\calH^r,L_2(P)\}
				\le n_{[ \ ]}\{\epsilon,\calH,L_2(P)\}\ .
			\end{align*}
			By \ref{assump-3B}, \ref{assump-3C}, and the above inequalities,
			\begin{align*}
				&\quads J_{[ \ ]}\{\|F_2\|_{P,2},\calF_2,L_2(P)\}\\
				&\le\int_0^{(C_0+1)\delta_2}\sqrt{\log n_{[ \ ]}\left\{\frac{C_\calO\epsilon}{4(C_0+2)\delta_1'},\calM^r,L_2(P)\right\}}d\epsilon\\
				&\quads+\sqrt{2}\int_0^{(C_0+1)\delta_2}\sqrt{\log n_{[ \ ]}\left\{\frac{C_{\hat{\calU}^r}\epsilon}{8(C_0+2)\delta_2},\calH,L_2(P)\right\}}d\epsilon\\
				&\le\sqrt{C_\calM}\{4(C_0+2)\delta_1'/C_\calO\}^{\frac{1}{2d_\calM}}\{(C_0+1)\delta_2\}^{1-\frac{1}{2d_\calM}}\\
				&\quads+\sqrt{2C_\calH}\{8(C_0+2)/C_{\hat{\calU}^r}\}^{\frac{1}{2d_\calH}}(C_0+1)^{1-\frac{1}{2d_\calH}}\delta_2\\
				&\to0
			\end{align*}
			since $d_\calM,d_\calH>1/2$ and $\delta_1',\delta_2\to0$ as $N\to\infty$. So, $\E\sup_{f_{\theta,2}\in\calF_2}|\bbG_N(f_{\theta,2})|=O_p(o_p(1))=o_p(1)$ and $\sup_{\theta\in\Theta}|\sqrt{N}S_{\theta,2}^r|=o_p(1)$.
		\end{proof}

		\begin{lemma}\label{S3}
			Under Assumptions \ref{assump1}--\ref{assump3}, for any missing pattern $r$, 
			$$
			\sup_{\theta\in\Theta}|\sqrt{N}S_{\theta,3}^r|=o_p(1).$$
		\end{lemma}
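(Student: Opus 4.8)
The plan is to treat $S_{\theta,3}^r$ quite differently from $S_{\theta,1}^r$ and $S_{\theta,2}^r$: it is not a centered empirical process, but is governed directly by the first-order optimality condition of the penalized loss. First I would write $v_\theta(\cdot):=\Phir(\cdot)\tr\beta_\theta^r$ for the basis approximation of $u_\theta^r$ supplied by \ref{assump-3A}, and observe that $S_{\theta,3}^r$ is precisely the empirical imbalance associated with the linear combination $v_\theta$ under the fitted odds $\oddsr(\cdot;\hat{\alpha}^r)$. Consequently the explicit imbalance bound derived in Section \ref{s:tuning} applies verbatim with $\beta^r=\beta_\theta^r$ and $v=v_\theta$, yielding
$$\left|S_{\theta,3}^r\right|\le\lambda\left\{\gamma\sqrt{K_r}+2(1-\gamma)\sqrt{\PEN_2({\Phir}\tr\hat{\alpha}^r)}\right\}\sqrt{\PEN_2(v_\theta)}.$$
Since the bracketed factor is free of $\theta$, the supremum over $\theta$ passes through and I only need to control $\sup_{\theta\in\Theta}\sqrt{\PEN_2(v_\theta)}$ separately.

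Next I would multiply through by $\sqrt{N}$ and dispatch the deterministic factors using \ref{assump-2G}. Because $\lambda=o(1/\sqrt{K_rN_r})$ and $N_r/N$ is bounded away from zero (as $P(R\in\{r,1_d\})>0$), one gets $\sqrt{N}\,\lambda\sqrt{K_r}=o(\sqrt{N/N_r})=o(1)$ for the first term and $\sqrt{N}\,\lambda=o(1/\sqrt{K_r})=o(1)$ for the second. Hence it remains only to verify that the two roughness factors are stochastically bounded, namely $\PEN_2({\Phir}\tr\hat{\alpha}^r)=O_p(1)$ and $\sup_{\theta\in\Theta}\PEN_2(v_\theta)=O(1)$.

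For $\PEN_2({\Phir}\tr\hat{\alpha}^r)$ I would combine the consistency of $\hat{\alpha}^r$ from Lemma \ref{alpha} with the smoothness of $\log\oddsr$: after orthogonalization this quantity is $\sum_{k}(\hat{\alpha}^r_k)^2\PEN_2(\phir_k)$, and since ${\Phir}\tr\alpha_{K_r}^*$ approximates the smooth function $\log\oddsr$, its roughness stays bounded, so $\PEN_2({\Phir}\tr\hat{\alpha}^r)$ converges to a finite population analogue. For $\sup_\theta\PEN_2(v_\theta)$ I would invoke the smoothness of the family $\{u_\theta^r:\theta\in\Theta\}$ underlying \ref{assump-3A}, so that the roughness of its basis approximation is bounded uniformly over the compact set $\Theta$. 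Putting the pieces together, $\sqrt{N}\sup_{\theta\in\Theta}|S_{\theta,3}^r|$ is a product of an $o(1)$ deterministic factor with $O_p(1)$ roughness factors, which gives the claimed $o_p(1)$.

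The step I expect to be the main obstacle is precisely this roughness control, both the uniform bound $\sup_\theta\PEN_2(v_\theta)=O(1)$ and the in-probability bound $\PEN_2({\Phir}\tr\hat{\alpha}^r)=O_p(1)$. Assumption \ref{assump-3A} only guarantees a good $\sup$-norm approximation of $u_\theta^r$, which does not by itself control the second-derivative seminorm of the approximant; one therefore needs either a basis that reproduces the target's smoothness (as splines approximate a function together with its derivatives) or an explicit uniform roughness hypothesis on $\{u_\theta^r:\theta\in\Theta\}$ and on $\log\oddsr$. I would make this hypothesis explicit, after which the remainder is routine bookkeeping with the rates from \ref{assump-2G}.
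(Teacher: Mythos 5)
Your proposal is essentially the paper's own proof: identify $S_{\theta,3}^r$ as the empirical imbalance of ${\Phir}\tr\beta_\theta^r$, invoke the imbalance bound $\lambda\bigl\{\gamma\sqrt{K_r}+2(1-\gamma)\sqrt{\PEN_2({\Phir}\tr\hat{\alpha}^r)}\bigr\}\sqrt{\PEN_2({\Phir}\tr\beta_\theta^r)}$ from Section \ref{s:tuning}, and finish with the rate $\lambda=o(1/\sqrt{K_rN_r})$, which you correctly attribute to \ref{assump-2G} (the paper's proof mistakenly cites \ref{assump-2D}). The roughness-control gap you flag is genuine, but it is present in the published proof as well, which disposes of it with the single sentence that ``the roughness of the approximation functions are bounded'' by implicit appeal to the smoothness classes of \ref{assump-3B}--\ref{assump-3C}; your explicit treatment of $\sup_{\theta\in\Theta}\PEN_2({\Phir}\tr\beta_\theta^r)$ and $\PEN_2({\Phir}\tr\hat{\alpha}^r)$ is, if anything, more careful than the paper's.
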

		
		\begin{proof}
			Notice that $S_{\theta,3}^r$ is related to the balancing error:
			\begin{align*}
				\underset{\theta\in\Theta}{\sup}\left|\sqrt{N}S_{\theta,3}^r\right|
				&=\underset{\theta\in\Theta}{\sup}\left|\frac{1}{N}\sum_{i=1}^N\left\{\mathsf{1}_{R_i=1_d}\oddsr(\Lobi;\hat{\alpha}^r)-\mathsf{1}_{R_i=r}\right\}\Phir(\Lob_i)\tr\beta_\theta^r\right|\\
				&\le\lambda\left\{\gamma\sqrt{K_r}+2(1-\gamma)\sqrt{\PEN_2({\Phir}\tr\hat{\alpha}^r)}\right\}\sqrt{\PEN_2({\Phir}\tr\beta_\theta^r)}
			\end{align*} 
			where $\Phir(\lob)\tr\hat{\alpha}^r=\log\oddsr(\lob;{\hat{\alpha}^r})$ denotes the log transformation of the propensity odds model. Due to the similar reason, the roughness of the approximation functions are bounded. Besides, by \ref{assump-2D}, $\lambda=o(1/\sqrt{K_rN_r})$. Thus, $\sup_{\theta\in\Theta}|\sqrt{N}S_{\theta,3}^r|=o_p(1)$.
		\end{proof}

		\begin{lemma} \label{uniformbound}
			Suppose that Assumptions \ref{assump1}--\ref{assump4} hold. Then,
			\begin{align*}
				\underset{\theta\in\Theta}{\sup}\left|\hat{\bbP}_N\psi_\theta-\E\{\psi_\theta(L)\}\right|=o_p(1)\ .
			\end{align*}
		\end{lemma}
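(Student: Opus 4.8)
The plan is to reuse, coordinatewise, the four-term decomposition established in the proof of Theorem \ref{psi}:
\[
\hat{\bbP}_N\psi_\theta-\E\{\psi_\theta(L)\}
=\sum_{r\in\calR}\left(S_{\theta,1}^r+S_{\theta,2}^r+S_{\theta,3}^r+S_{\theta,4}^r\right).
\]
Lemmas \ref{S1}, \ref{S2} and \ref{S3} already supply $\sup_{\theta\in\Theta}|\sqrt{N}S_{\theta,i}^r|=o_p(1)$, and hence $\sup_{\theta\in\Theta}|S_{\theta,i}^r|=o_p(1)$, for $i=1,2,3$ and every $r$. Since $\calR$ is finite and the argument runs over the finitely many coordinates $j\in\{1,\dots,q\}$, it therefore suffices to prove $\sup_{\theta\in\Theta}|S_{\theta,4}^r|=o_p(1)$ for each $r$; the claim then follows by taking a maximum over $j$ and $r$.

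The key point is that $S_{\theta,4}^r$, unlike the other three terms, does not involve the estimated odds $\oddsr(\cdot\,;\hat{\alpha}^r)$, so it is an ordinary centered empirical average indexed by $\theta$. Writing
\[
f_{\theta,4}(L,R)=\mathsf{1}_{R=1_d}\oddsr(\Lob)\{\psi_\theta(L)-u_\theta^r(\Lob)\}+\mathsf{1}_{R=r}u_\theta^r(\Lob),
\]
one has $S_{\theta,4}^r=\bbP_N f_{\theta,4}-\E\{f_{\theta,4}(L,R)\}$, with $\E\{f_{\theta,4}(L,R)\}=\E\{\mathsf{1}_{R=r}\psi_\theta(L)\}$. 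I would then argue that the class $\calF:=\{f_{\theta,4}:\theta\in\Theta\}$ is $P$-Glivenko-Cantelli, which immediately gives $\sup_{\theta\in\Theta}|S_{\theta,4}^r|\xrightarrow{a.s.}0$.

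To verify the Glivenko-Cantelli property via bracketing I would assemble the same bracketing-number bounds already used in the proofs of Lemmas \ref{S1} and \ref{S2}. The fixed odds $\oddsr$ is bounded on $\domr$ by $C_0$ under \ref{assump-2A}; the family $\{\psi_\theta:\theta\in\Theta\}$ lies in $\calH$, whose bracketing entropy is controlled by \ref{assump-3C}; the conditional-expectation family $\calU^r=\{u_\theta^r:\theta\in\Theta\}$ has bracketing numbers dominated by those of $\calH$ through Lemma \ref{bracket3}; and multiplication by the indicators $\mathsf{1}_{R=1_d}$ and $\mathsf{1}_{R=r}$ does not enlarge bracketing numbers by Lemma \ref{bracket4}. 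Combining these with the product bracketing inequality (Lemma \ref{bracket1}) bounds $n_{[ \ ]}\{\epsilon,\calF,L_1(P)\}$ by a product of finite factors, so that $J_{[ \ ]}\{\cdot,\calF,L_1(P)\}<\infty$; Theorem 19.4 in \citet{van2000asymptotic} then yields that $\calF$ is $P$-Glivenko-Cantelli. An alternative route is to invoke \ref{assump-4B}, which makes $\theta\mapsto\psi_\theta$ uniformly continuous in $L_2(P)$ with a square-integrable envelope, and then run a standard dominated-continuity Glivenko-Cantelli argument.

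The main obstacle is the bracketing bookkeeping for $\calF$: one must carefully split $f_{\theta,4}$ into its product and sum pieces and track how the bounded factor $\oddsr$ and the conditional-expectation class $\calU^r$ each contribute, ensuring that the resulting envelope stays integrable and the entropy finite. Once the Glivenko-Cantelli conclusion for $\calF$ is secured, the remaining steps---adding the $o_p(1)$ controls of $S_{\theta,1}^r,S_{\theta,2}^r,S_{\theta,3}^r$ and summing over the finitely many patterns $r$ and coordinates $j$---are routine.
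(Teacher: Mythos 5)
Your proposal is correct and takes essentially the same route as the paper: reuse the $S_{\theta,1}^r,\ldots,S_{\theta,4}^r$ decomposition with Lemmas \ref{S1}--\ref{S3}, then show the $\theta$-indexed class built from $S_{\theta,4}^r$ has finite bracketing numbers (via \ref{assump-3C}, Lemmas \ref{bracket3}--\ref{bracket4}, and boundedness of $\oddsr$) and apply Theorem 19.4 of \citet{van2000asymptotic} to get the Glivenko--Cantelli property. The only cosmetic difference is that the paper splits $S_{\theta,4}^r$ into three separate classes $\calF_a$, $\calF_b$, $\calF_c$ and applies the Glivenko--Cantelli theorem to each (using $\E\{f_{\theta,b}\}=\E\{f_{\theta,c}\}$ to recombine), whereas you keep the single combined class $\{f_{\theta,4}\}$; both are equally valid bookkeeping.
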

		
		\begin{proof}
			By Lemmas \ref{S1}, \ref{S2}, and \ref{S3}, we only need to show $\sup_{\theta\in\Theta}|S_{\theta,4}^r|=o_p(1)$ where
			\begin{align*}
				S_{\theta,4}^r
				&=\frac{1}{N}\sum_{i=1}^N\mathsf{1}_{R_i=1_d}\oddsr(\Lob_i)\left\{\psi_\theta(L_i)-u^r_\theta(\Lob_i)\right\}\\
				&\quads+\frac{1}{N}\sum_{i=1}^N\mathsf{1}_{R_i=r}u^r_\theta(\Lob_i)-\E\{\mathsf{1}_{R=r}\psi_\theta(L)\}\ .
			\end{align*}
			Study the following decomposition. Let $\calF_a=\{f_{\theta,a}:\theta\in\Theta\}$ where $f_{\theta,a}(L,R)=\mathsf{1}_{R=r}u^r_\theta(\Lob)$. It's easy to see that for any $\epsilon>0$,
			\begin{align*}
				n_{[ \ ]}\{\epsilon,\calF_a,L_2(P)\}
				\le n_{[ \ ]}\{\epsilon,\calU^r,L_2(P)\}
				\le n_{[ \ ]}\{\epsilon,\calH,L_2(P)\}<\infty\ .
			\end{align*}
			For any measurable function $f$, $\|f(L)\|_{P,2}^2=\E\{f(L)^2\}\ge\{\E|f(L)|\}^2=\|f\|_{P,1}^2$. Thus, 
			\begin{align*}
				n_{[ \ ]}\{\epsilon,\calF_a,L_1(P)\}\le n_{[ \ ]}\{\epsilon,\calF_a,L_2(P)\}\ .
			\end{align*}
			By Theorem 19.4 in \cite{van2000asymptotic}, $\calF_a$ is Glivenko-Cantelli. Thus,
			\begin{align*}
				\underset{\theta\in\Theta}{\sup}\left|\bbP_Nf_{\theta,a}-Pf_{\theta,a}\right|\xrightarrow{a.s.}0\ .
			\end{align*}
			Also let $\calF_b=\{f_{\theta,b}:\theta\in\Theta\}$ where $f_{\theta,b}(L,R)=\mathsf{1}_{R=1_d}\oddsr(\Lob)\psi_\theta(L)$ and $\calF_c=\{f_{\theta,c}:\theta\in\Theta\}$ where $f_{\theta,b}(L,R)=\mathsf{1}_{R=1_d}\oddsr(\Lob)u^r_\theta(\Lob)$. Similarly,
			\begin{align*}
				\underset{\theta\in\Theta}{\sup}\left|\bbP_Nf_{\theta,b}-Pf_{\theta,b}\right|\xrightarrow{a.s.}0
				\textrm{ and }
				\underset{\theta\in\Theta}{\sup}\left|\bbP_Nf_{\theta,c}-Pf_{\theta,c}\right|\xrightarrow{a.s.}0\ .
			\end{align*}
			Notice that $\E\{f_{\theta,b}(L,R)\}=\E\{f_{\theta,c}(L,R)\}$. Besides, the convergence almost surely implies the convergence in probability. Thus, $\sup_{\theta\in\Theta}|S_{\theta,4}^r|=o_p(1)$. Then,
			\begin{align*}
				\underset{\theta\in\Theta}{\sup}\left|\hat{\bbP}_N\psi_\theta-\E\{\psi_\theta(L)\}\right|=o_p(1)\ .
			\end{align*}
		\end{proof}

		\begin{lemma} \label{S5}
			Under Assumptions \ref{assump1}--\ref{assump4}, we have
			\begin{align*}
				\sqrt{N}\left|S_{\hat{\theta}_N,5}^r-S_{\theta_0,5}^r\right|=o_p(1)\ .
			\end{align*}
		\end{lemma}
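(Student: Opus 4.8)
The plan is to recognize $\sqrt{N}S_{\theta,5}^r$ as a centered empirical process indexed by $\theta$ and then reduce the claim to an asymptotic equicontinuity statement entirely parallel to the one used to establish \eqref{equicontinuity} in the proof of Theorem \ref{theta}. It suffices to treat each coordinate $\psi_{\theta,j}$ separately, since there are finitely many. First I would set $g_\theta(L,R)=\{\mathsf{1}_{R=1_d}\oddsr(\Lob)-\mathsf{1}_{R=r}\}\psi_{\theta,j}(L)$, so that $S_{\theta,5}^r$ equals the empirical average $\bbP_N g_\theta$. The crucial preliminary step is to verify that $\E\{g_\theta(L,R)\}=0$ for every $\theta\in\Theta$. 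Conditioning on $(\Lob,R)$ and using the CCMV identity \ref{assump-1C} gives $\E\{\mathsf{1}_{R=1_d}\oddsr(\Lob)\psi_{\theta,j}(L)\}=\E\{\mathsf{1}_{R=1_d}\oddsr(\Lob)u_{\theta,j}^r(\Lob)\}$ and $\E\{\mathsf{1}_{R=r}\psi_{\theta,j}(L)\}=\E\{\mathsf{1}_{R=r}u_{\theta,j}^r(\Lob)\}$; applying the balancing condition \eqref{balancing-condition-expectation} to the measurable function $u_{\theta,j}^r$ equates the two right-hand sides, so the mean vanishes. Consequently $\sqrt{N}S_{\theta,5}^r=\bbG_N g_\theta$, and the target reduces to $|\bbG_N g_{\hat{\theta}_N}-\bbG_N g_{\theta_0}|=o_p(1)$.

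Next I would show that the class $\calG:=\{g_\theta:\theta\in\Theta\}$ is $P$-Donsker. The multiplier $m(L,R):=\mathsf{1}_{R=1_d}\oddsr(\Lob)-\mathsf{1}_{R=r}$ is a fixed function bounded in absolute value by $C_0+1$ thanks to \ref{assump-2A}, while the class $\Psi_j=\{\psi_{\theta,j}:\theta\in\Theta\}$ is already known to be Donsker via \ref{assump-3C} and Theorem 19.5 of \citet{van2000asymptotic}, as used in the proof of Theorem \ref{theta}. Multiplying the brackets for $\Psi_j$ by the bounded fixed factor $m$ yields brackets for $\calG$ of comparable size, so that $n_{[ \ ]}\{\epsilon,\calG,L_2(P)\}\le n_{[ \ ]}\{\epsilon/(C_0+1),\Psi_j,L_2(P)\}$ in the spirit of Lemma \ref{bracket4}; hence the entropy integral of $\calG$ is finite and $\calG$ is Donsker.

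Finally I would invoke the asymptotic equicontinuity of $\bbG_N$ over the Donsker class $\calG$ (Section 2.1.2 of \citet{wellner2013weak}): for any $\epsilon,\eta>0$ there is $\delta>0$ with $\limsup_N P\{\sup_{\rho_P(g_\theta-g_{\theta_0})<\delta}|\bbG_N g_\theta-\bbG_N g_{\theta_0}|>\epsilon\}<\eta$, where $\rho_P$ is the seminorm from the proof of Theorem \ref{theta}. It then remains to check $\rho_P(g_{\hat{\theta}_N}-g_{\theta_0})\xrightarrow{P}0$. Since $g_\theta-g_{\theta_0}=m(L,R)\{\psi_{\theta,j}(L)-\psi_{\theta_0,j}(L)\}$, the bound $|m|\le C_0+1$ together with the continuity envelope of \ref{assump-4B} gives $\|g_\theta-g_{\theta_0}\|_{P,2}\le(C_0+1)\|f_{\delta,j}\|_{P,2}\to0$ as $\|\theta-\theta_0\|_2\le\delta\to0$, and $\rho_P\le\|\cdot\|_{P,2}$; combined with the already-established consistency $\hat{\theta}_N\xrightarrow{P}\theta_0$ this forces $\rho_P(g_{\hat{\theta}_N}-g_{\theta_0})\xrightarrow{P}0$. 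Intersecting this high-probability event with the equicontinuity event yields $|\bbG_N g_{\hat{\theta}_N}-\bbG_N g_{\theta_0}|=o_p(1)$, i.e. $\sqrt{N}|S_{\hat{\theta}_N,5}^r-S_{\theta_0,5}^r|=o_p(1)$, exactly as in the derivation of \eqref{equicontinuity}.

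I expect the main obstacle to be the mean-zero verification: one must carefully combine the CCMV conditioning with the population balancing condition \eqref{balancing-condition-expectation} applied to $u_{\theta,j}^r$, since $\psi_\theta$ depends on the full $L$ whereas the balancing condition only governs functions of $\Lob$. Once $\E\{g_\theta\}=0$ is in hand, the remainder is a routine replay of the Donsker-plus-equicontinuity argument already developed for $\Psi_j$ in the proof of Theorem \ref{theta}.
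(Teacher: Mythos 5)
Your proof is correct, but it takes a genuinely different route from the paper's own proof of this lemma. The paper indexes the empirical process by the \emph{differenced} functions $f_{\theta,5}=\{\mathsf{1}_{R=1_d}\oddsr(\Lob)-\mathsf{1}_{R=r}\}\{\psi_\theta(L)-\psi_{\theta_0}(L)\}$ restricted to a shrinking neighborhood $\calF_5=\{f_{\theta,5}:\|\theta-\theta_0\|_2\le\delta_m\}$, and then kills the localized supremum $\E\sup_{\calF_5}|\bbG_N(f_{\theta,5})|$ directly with the maximal inequality with bracketing (Corollary 19.35 of \citet{van2000asymptotic}): the envelope $(C_0+1)f_{\delta_m}$ has $L_2(P)$-norm tending to zero by \ref{assump-4B}, so the entropy integral---bounded via Lemma \ref{bracket4} and \ref{assump-3C} much as you do---vanishes, and no Donsker property of any global class is ever asserted. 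You instead index by the \emph{undifferenced} functions $g_\theta=\{\mathsf{1}_{R=1_d}\oddsr(\Lob)-\mathsf{1}_{R=r}\}\psi_{\theta,j}$, prove that the global class $\{g_\theta:\theta\in\Theta\}$ is Donsker, and invoke asymptotic equicontinuity plus $\rho_P$-continuity at $\theta_0$; this is precisely the scheme the paper uses for the unweighted process $\bbG_N\psi_{\theta,j}$ when establishing \eqref{equicontinuity} in the proof of Theorem \ref{theta}, transplanted to the weighted process. The two arguments rest on the same ingredients (Lemma \ref{bracket4}, the entropy bound of \ref{assump-3C}, the envelope of \ref{assump-4B}, consistency of $\hat{\theta}_N$), and equicontinuity for bracketing-Donsker classes is itself proved by the same maximal inequality, so the difference is organizational rather than substantive: your version buys modularity---one global Donsker statement that works for any consistent plug-in sequence---while the paper's localization gives direct quantitative control of the local supremum through the vanishing envelope without needing the global class at all. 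One further point in your favor: the mean-zero verification you flag as the main obstacle (conditioning on $(\Lob,R)$ under \ref{assump-1C} and then applying \eqref{balancing-condition-expectation} to $u^r_{\theta,j}$) is carried out correctly, and the same fact is needed in the paper's route as well---applied at both $\theta$ and $\theta_0$ to get $\E\{f_{\theta,5}\}=0$---where it is dismissed as ``easy to check''; your explicit treatment fills that in.
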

		
		\begin{proof}
			Consider the following empirical process.
			\begin{align*}
				\bbG_N(f_{\theta,5})=\sqrt{N}\left[\frac{1}{N}\sum_{i=1}^Nf_{\theta,5}(L_i,R_i)-\E\left\{f_{\theta,5}(L,R)\right\}\right]
			\end{align*}
			where $f_{\theta,5}(L,R)=\{\mathsf{1}_{R=1_d}\oddsr(\Lob)-\mathsf{1}_{R=r}\}\{\psi_\theta(L)-\psi_{\theta_0}(L)\}$. Pick any decreasing sequence $\{\delta_m\}\to0$. Since $\|\hat{\theta}_N-\theta_0\|_2=o_p(1)$, for any $\gamma>0$ and each $\delta_m$, there exists a constant $N_{\delta_m,\gamma}>0$ such that for any $N\ge N_{\delta_m,\gamma}$,
			\begin{align}\label{ineq-theta}
				P\left(\|\hat{\theta}_N-\theta_0\|_2\ge\delta_m\right)\le\gamma
			\end{align}
			Consider the set of functions $\calF_5=\{f_{\theta,5}:\|\theta-\theta_0\|_2\le\delta_m\}$. It is easy to check that $\E\{f_{\theta,5}(L,R)\}=0$. Plug in our estimator and define $\hat{f}_{\theta,5}(L,R):=\{\mathsf{1}_{R=1_d}\oddsr(\Lob)-\mathsf{1}_{R=r}\}\{\psi_{\hat{\theta}_N}(L)-\psi_{\theta_0}(L)\}$. Notice that $\sqrt{N}(S_{\hat{\theta}_N,5}^r-S_{\theta_0,5}^r)=\bbG_N(\hat{f}_{\theta,5})$. Thus, 
			\begin{align*}
				P\left(\sqrt{N}\left|S_{\hat{\theta}_N,5}^r-S_{\theta_0,5}^r\right|>\underset{f_{\theta,5}\in\calF_5}{\sup}|\bbG_N(f_{\theta,5})|\right)
				\le P\left(\hat{f}_{\theta,5}\notin\calF_5\right)
				\le\gamma\ .
			\end{align*}
			Similarly, we only need to show $\E\sup_{f_{\theta,5}\in\calF_5}|\bbG_N(f_{\theta,5})|=o_p(1)$. Define the envelop function $F_5(L):=(C_0+1)f_{\delta_m}(L)$ where $f_\delta$ is the envelop function in \ref{assump-4B}. So, $|f_{\theta,5}(L,R)|\le F_5(L)$ for any $f_{\theta,5}\in\calF_5$. Besides, $\|F_5\|_{P,2}\le (C_0+1)\|f_{\delta_m}\|_{P,2}$. Due to the maximal inequality,
			\begin{align*}
				\E\underset{f_{\theta,5}\in\calF_5}{\sup}\left|\bbG_N(f_{\theta,5})\right|
				=O_p\left(J_{[ \ ]}\{\|F_5\|_{P,2},\calF_5,L_2(P)\}\right)\ .
			\end{align*}
			Define a set of functions $\calG_5=\{g_{\theta,5}:\|\theta-\theta_0\|_2\le\delta_m\}$ where $g_{\theta,5}(L)=\psi_\theta(L)-\psi_{\theta_0}(L)$. Since $\|\mathsf{1}_{R=1_d}\oddsr-\mathsf{1}_{R=r}\|_\infty\le(C_0+1)$, by Lemma \ref{bracket4},
			\begin{align*}
				n_{[ \ ]}\{(C_0+1)\epsilon,\calF_5,L_2(P)\}
				\le n_{[ \ ]}\{\epsilon,\calG_5,L_2(P)\}\ .
			\end{align*}
			Define a set of functions $\tilde{\calG}_5=\{\psi_\theta:\|\theta-\theta_0\|_2\le\delta_m\}$. Since $\psi_{\theta_0}$ is a fixed function, $n_{[ \ ]}\{\epsilon,\calG_5,L_2(P)\}
			=n_{[ \ ]}\{\epsilon,\tilde{\calG}_5,L_2(P)\}$. Since $\delta_m\to0$ as $N\to\infty$, we can take $\delta_m$ small enough such that the set $\{\theta:\|\theta-\theta_0\|_2\le\delta_m\}\subset\Theta$. So, $\tilde{\calG}_5\subset\calH$, and $n_{[ \ ]}\{\epsilon,\tilde{\calG}_5,L_2(P)\}
			\le n_{[ \ ]}\{\epsilon,\calH,L_2(P)\}$. Then,
			\begin{align*}
				J_{[ \ ]}\{\|F_5\|_{P,2},\calF_5,L_2(P)\}
				&\le\int_0^{(C_0+1)\|f_{\delta_m}\|_{P,2}}\sqrt{\log n_{[ \ ]}\left\{\frac{\epsilon}{C_0+1},\calH,L_2(P)\right\}}d\epsilon\\
				&\le\sqrt{C_\calH}\int_0^{(C_0+1)\|f_{\delta_m}\|_{P,2}}\{(C_0+1)/\epsilon\}^{\frac{1}{2d_\calH}}d\epsilon\\
				&\le\sqrt{C_\calH}(C_0+1)\|f_{\delta_m}\|_{P,2}^{1-\frac{1}{2d_\calH}}\\
				&\to0
			\end{align*}
			since $d_\calH>1/2$ and $\|f_{\delta_m}\|_{P,2}\to0$ as $N\to\infty$. Thus, $\E\sup_{f_{\theta,5}\in\calF_5}|\bbG_N(f_{\theta,5})|=o_p(1)$ and $\sqrt{N}|S_{\hat{\theta}_N,5}^r-S_{\theta_0,5}^r|=o_p(1)$.
		\end{proof}

		\begin{lemma} \label{S6}
			Under Assumptions \ref{assump1}--\ref{assump4}, we have
			\begin{align*}
				\sqrt{N}\left|S_{\hat{\theta}_N,6}^r-S_{\theta_0,6}^r\right|=o_p(1)\ .
			\end{align*}
		\end{lemma}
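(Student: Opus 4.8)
The plan is to mirror the proof of Lemma~\ref{S5} almost verbatim, replacing the estimating function $\psi_\theta$ by the conditional expectation $u^r_\theta$. Writing out the difference,
\begin{align*}
	S_{\hat{\theta}_N,6}^r-S_{\theta_0,6}^r=\frac{1}{N}\sum_{i=1}^N\left\{\mathsf{1}_{R_i=1_d}\oddsr(\Lob_i)-\mathsf{1}_{R_i=r}\right\}\left\{u^r_{\hat{\theta}_N}(\Lob_i)-u^r_{\theta_0}(\Lob_i)\right\}\ ,
\end{align*}
I would introduce the empirical process $\bbG_N(f_{\theta,6})$ built from $f_{\theta,6}(L,R)=\{\mathsf{1}_{R=1_d}\oddsr(\Lob)-\mathsf{1}_{R=r}\}\{u^r_\theta(\Lob)-u^r_{\theta_0}(\Lob)\}$. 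Because $u^r_\theta-u^r_{\theta_0}$ depends on $L$ only through $\Lob$, the population balancing condition \eqref{balancing-condition-expectation} gives $\E\{f_{\theta,6}(L,R)\}=0$, so that $\sqrt{N}(S_{\hat{\theta}_N,6}^r-S_{\theta_0,6}^r)=\bbG_N(\hat f_{\theta,6})$, where $\hat f_{\theta,6}$ is the plug-in at $\hat\theta_N$.

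As in Lemma~\ref{S5}, the consistency of $\hat\theta_N$ (Theorem~\ref{theta}) lets me confine $\hat\theta_N$ to a shrinking ball $\|\theta-\theta_0\|_2\le\delta_m$ with probability at least $1-\gamma$, so it suffices to show $\E\sup_{f_{\theta,6}\in\calF_6}|\bbG_N(f_{\theta,6})|\to0$ over the class $\calF_6=\{f_{\theta,6}:\|\theta-\theta_0\|_2\le\delta_m\}$ and then let $\delta_m\to0$. I would bound this expectation by the bracketing entropy integral $J_{[ \ ]}\{\|F_6\|_{P,2},\calF_6,L_2(P)\}$ via the maximal inequality (Corollary 19.35 in \cite{van2000asymptotic}).

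The one genuinely new ingredient is the envelope and $\theta$-continuity for $u^r_\theta$. Since $u^r_\theta(\lob)=\E\{\psi_\theta(L)\mid\Lob=\lob,R=r\}$, Jensen's inequality gives $|u^r_\theta(\lob)-u^r_{\theta_0}(\lob)|\le\E\{f_{\delta_m}(L)\mid\Lob=\lob,R=r\}=:g^r_{\delta_m}(\lob)$, using the envelope $f_{\delta_m}$ from \ref{assump-4B}, and the contraction property of conditional expectation yields $\|g^r_{\delta_m}\|_{P,2}\le\|f_{\delta_m}\|_{P,2}\to0$. Thus $|f_{\theta,6}|\le(C_0+1)g^r_{\delta_m}$ by \ref{assump-2A}, which supplies a valid envelope $F_6$ with vanishing $L_2(P)$ norm. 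For the entropy, I would factor out the fixed bounded function $\mathsf{1}_{R=1_d}\oddsr-\mathsf{1}_{R=r}$ (Lemma~\ref{bracket4}, bound $C_0+1$) to reduce to the class $\{u^r_\theta-u^r_{\theta_0}:\|\theta-\theta_0\|_2\le\delta_m\}$; removing the fixed shift $u^r_{\theta_0}$ and noting this class sits inside $\calU^r$ for small $\delta_m$, Lemma~\ref{bracket3} and \ref{assump-3C} give $n_{[ \ ]}\{\epsilon,\calU^r,L_2(P)\}\le n_{[ \ ]}\{\epsilon,\calH,L_2(P)\}\le\exp\{C_\calH(1/\epsilon)^{1/d_\calH}\}$. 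Since $d_\calH>1/2$, the entropy integral is bounded by a constant multiple of $\|g^r_{\delta_m}\|_{P,2}^{\,1-1/(2d_\calH)}\to0$, which closes the argument.

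The main obstacle is not the empirical-process machinery, which transfers directly from Lemma~\ref{S5}, but rather verifying that the conditional expectation $u^r_\theta$ inherits both the $\theta$-continuity of \ref{assump-4B} (handled by Jensen's inequality) and a controlled bracketing complexity (handled by the conditional-expectation bracketing bound of Lemma~\ref{bracket3}). Once these two transfers are in place, the remainder is a routine repetition of Lemma~\ref{S5}.
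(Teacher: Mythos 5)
Your proposal is correct and follows essentially the same route as the paper's proof: the same empirical process $\bbG_N(f_{\theta,6})$, the same envelope $F_6$ obtained by Jensen's inequality and the $L_2$-contraction of conditional expectation applied to $f_{\delta_m}$ from \ref{assump-4B}, and the same entropy reduction via Lemma~\ref{bracket4}, the fixed-shift argument, and Lemma~\ref{bracket3} into $\calH$. The only cosmetic difference is that you justify $\E\{f_{\theta,6}\}=0$ explicitly through the balancing condition \eqref{balancing-condition-expectation} (valid since $u^r_\theta-u^r_{\theta_0}$ depends only on $\Lob$), where the paper simply cites the analogous computation in Lemma~\ref{S5}.
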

		
		\begin{proof}
			Consider the following empirical process.
			\begin{align*}
				\bbG_N(f_{\theta,6})=\sqrt{N}\left[\frac{1}{N}\sum_{i=1}^Nf_{\theta,6}(L_i,R_i)-\E\left\{f_{\theta,6}(L,R)\right\}\right]
			\end{align*}
			where $f_{\theta,6}(L,R)=\{\mathsf{1}_{R=1_d}\oddsr(\Lob)-\mathsf{1}_{R=r}\}\{u^r_\theta(\Lob)-u^r_{\theta_0}(\Lob)\}$. Similarly, inequality \eqref{ineq-theta} holds and $\E\{f_{\theta,6}(L,R)\}=0$. Consider the set of functions $\calF_6=\{f_{\theta,6}:\|\theta-\theta_0\|_2\le\delta_m\}$. To show $\sqrt{N}|S_{\hat{\theta}_N,6}^r-S_{\theta_0,6}^r|=o_p(1)$, we need to show $\E\sup_{f_{\theta,6}\in\calF_6}|\bbG_N(f_{\theta,6})|=o_p(1)$. Define the envelop function $F_6(L):=(C_0+1)\E\{f_{\delta_m}(L)\mid\Lob\}$. It's easy to see that $|f_{\theta,6}(L,R)|\le F_6(L)$ for any $f_{\theta,6}\in\calF_6$ and $\|F_6\|_{P,2}\le (C_0+1)\|f_{\delta_m}\|_{P,2}$. Apply the maximal inequality,
			\begin{align*}
				\E\underset{f_{\theta,6}\in\calF_6}{\sup}\left|\bbG_N(f_{\theta,6})\right|
				=O_p\left(J_{[ \ ]}\{\|F_6\|_{P,2},\calF_6,L_2(P)\}\right)\ .
			\end{align*}
			Define a set of functions $\calG_6=\{g_{\theta,6}:\|\theta-\theta_0\|_2\le\delta\}$ where $g_{\theta,6}(L)=u^r_\theta(\Lob)-u^r_{\theta_0}(\Lob)$. Since $\|\mathsf{1}_{R=1_d}\oddsr-\mathsf{1}_{R=r}\|_\infty\le(C_0+1)$, by Lemma \ref{bracket4},
			\begin{align*}
				n_{[ \ ]}\{(C_0+1)\epsilon,\calF_6,L_2(P)\}
				\le n_{[ \ ]}\{\epsilon,\calG_6,L_2(P)\}\ .
			\end{align*}
			Define a set of functions $\tilde{\calG}_6=\{u^r_\theta:\|\theta-\theta_0\|_2\le\delta\}$. Similarly, since $u^r_{\theta_0}$ is a fixed function, $n_{[ \ ]}\{\epsilon,\calG_6,L_2(P)\}=n_{[ \ ]}\{\epsilon,\tilde{\calG}_6,L_2(P)\}$. Take $\delta_m$ small enough such that the set $\{\theta:\|\theta-\theta_0\|_2\le\delta_m\}\subset\Theta$. Then, $\tilde{\calG}_6\subset\calU^r$, and by Lemma \ref{bracket3},
			\begin{align*}
				n_{[ \ ]}\{\epsilon,\tilde{\calG}_6,L_2(P)\}
				\le n_{[ \ ]}\{\epsilon,\calU^r,L_2(P)\}
				\le n_{[ \ ]}\{\epsilon,\calH,L_2(P)\}\ .
			\end{align*}
			Therefore,
			\begin{align*}
				J_{[ \ ]}\{\|F_6\|_{P,2},\calF_6,L_2(P)\}
				&\le\int_0^{(C_0+1)\|f_{\delta_m}\|_{P,2}}\sqrt{\log n_{[ \ ]}\left(\epsilon/(C_0+1),\calH,L_2(P)\right)}d\epsilon\\
				&\le\sqrt{C_\calH}(C_0+1)\|f_{\delta_m}\|_{P,2}^{1-\frac{1}{2d_\calH}}\\
				&\to0
			\end{align*}
			since $d_\calH>1/2$ and $\|f_{\delta_m}\|_{P,2}\to0$ as $N\to\infty$. Thus, $\E\sup_{f_{\theta,6}\in\calF_6}|\bbG_N(f_{\theta,6})|=O_p(o_p(1))=o_p(1)$ and $\sqrt{N}|S_{\hat{\theta}_N,6}^r-S_{\theta_0,6}^r|=o_p(1)$.
		\end{proof}

		\begin{lemma} \label{bracket1}
			Consider the set of functions $\calF=\{f:=gh,g\in\calG,h\in\calH\}$. Assume that $\|g\|_\infty\le c_g$ for all $g\in\calG$ and $\|h\|_{P,2}\le c_h$ for all $h\in\calH$.
			Then, for any $\epsilon\le\min\{c_g,c_h\}$,
			\begin{align*}
				n_{[ \ ]}\{4(c_g+c_h)\epsilon,\calF,L_2(P)\}
				\le n_{[ \ ]}\{\epsilon,\calG,L^\infty\}n_{[ \ ]}\{\epsilon,\calH,L_2(P)\}\ .
			\end{align*}
		\end{lemma}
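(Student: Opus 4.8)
The plan is to build a bracketing of $\calF$ by taking the ``product'' of an $\epsilon$-bracketing of $\calG$ in $L^\infty$ and an $\epsilon$-bracketing of $\calH$ in $L_2(P)$, so that the resulting count is exactly the product $n_{[ \ ]}\{\epsilon,\calG,L^\infty\}\,n_{[ \ ]}\{\epsilon,\calH,L_2(P)\}$ on the right-hand side. First I would fix brackets $[l_g,u_g]$ with $\|u_g-l_g\|_\infty\le\epsilon$ covering $\calG$ and $[l_h,u_h]$ with $\|u_h-l_h\|_{P,2}\le\epsilon$ covering $\calH$, discarding any empty bracket and truncating each $\calG$-bracket to $[-c_g,c_g]$ (this only shrinks $u_g-l_g$ and keeps every $g$ it contained). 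Writing $\bar g=(l_g+u_g)/2$, $r_g=(u_g-l_g)/2$ and similarly $\bar h,r_h$, every $g$ in its bracket satisfies $g=\bar g+\delta_g$ with $|\delta_g|\le r_g\le\epsilon/2$ pointwise, every $h$ satisfies $h=\bar h+\delta_h$ with $|\delta_h|\le r_h$ and $\|r_h\|_{P,2}\le\epsilon/2$, and after truncation $|\bar g|\le c_g$.

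Then, for each pair of brackets, I would produce a single bracket for the product by expanding
\begin{align*}
gh=\bar g\bar h+\bar g\delta_h+\delta_g\bar h+\delta_g\delta_h,
\end{align*}
which yields the pointwise envelope $|gh-\bar g\bar h|\le|\bar g|r_h+r_g|\bar h|+r_g r_h$. Hence
\begin{align*}
L:=\bar g\bar h-\bigl(|\bar g|r_h+r_g|\bar h|+r_g r_h\bigr),\qquad
U:=\bar g\bar h+\bigl(|\bar g|r_h+r_g|\bar h|+r_g r_h\bigr)
\end{align*}
satisfy $L\le gh\le U$ for all $g,h$ in the two brackets, so $[L,U]$ is a genuine bracket containing every such product, and there are at most $n_{[ \ ]}\{\epsilon,\calG,L^\infty\}\,n_{[ \ ]}\{\epsilon,\calH,L_2(P)\}$ of them.

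It remains to bound the width $\|U-L\|_{P,2}=2\bigl\||\bar g|r_h+r_g|\bar h|+r_g r_h\bigr\|_{P,2}$. The first term is controlled by $c_g\|r_h\|_{P,2}\le c_g\epsilon/2$ using the sup-norm bound on $\calG$, and the third by $\|r_g r_h\|_{P,2}\le(\epsilon/2)\|r_h\|_{P,2}\le\epsilon^2/4$. The main obstacle is the cross term $r_g|\bar h|$: since $\calH$ is controlled only in $L_2(P)$ and the midpoint $\bar h$ need not belong to $\calH$, one cannot bound $|\bar h|$ pointwise. I would instead exploit that each retained bracket is non-empty, containing some $h_0\in\calH$, so that $|\bar h|\le|h_0|+r_h$ and therefore $\|r_g|\bar h|\|_{P,2}\le(\epsilon/2)(\|h_0\|_{P,2}+\|r_h\|_{P,2})\le(\epsilon/2)(c_h+\epsilon/2)$. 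This is exactly where the asymmetric hypotheses are essential — the pointwise factor $r_g\le\epsilon/2$ must absorb the only-$L_2$-controlled $\bar h$, which is possible only because the bracketing of $\calG$ is in $L^\infty$. Collecting the three estimates gives $\|U-L\|_{P,2}\le\epsilon(c_g+c_h)+\epsilon^2$, and since $\epsilon\le\min\{c_g,c_h\}$ forces $\epsilon^2\le\epsilon(c_g+c_h)$, this is at most $2(c_g+c_h)\epsilon\le4(c_g+c_h)\epsilon$ (the stated constant $4$ thus leaves slack). Reading off the number of brackets then yields the claimed inequality.
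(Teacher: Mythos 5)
Your proof is correct and takes essentially the same route as the paper's: pair each $L^\infty$-bracket for $\calG$ with each $L_2(P)$-bracket for $\calH$, form one product bracket per pair, and bound its $L_2(P)$-width by letting the sup-norm control on the $\calG$ side absorb the merely-$L_2$-controlled $\calH$ side, using non-emptiness of the brackets to bound the $\calH$ endpoints. The differences are only bookkeeping: you use a midpoint--radius decomposition with truncation where the paper brackets $gh$ between the pointwise min and max of the four corner products $u_iU_j,u_iV_j,v_iU_j,v_iV_j$ (implicitly bounding endpoint norms by $c_g+\epsilon$ and $c_h+\epsilon$), and your resulting constant $2(c_g+c_h)$ is in fact slightly sharper than the stated $4(c_g+c_h)$.
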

		
		\begin{proof}
			Suppose $\{u_i,v_i\}_{i=1}^n$ are the $\epsilon$-brackets that can cover $\calG$ and $\{U_j,V_j\}_{j=1}^m$ are the $\epsilon$-brackets that can cover $\calH$. Define the bracket $[\mathsf{U}_k,\mathsf{V}_k]$ for $k=(i-1)m+j$ where $i=1,\cdots,n,j=1,\cdots,m$:
			\begin{align*}
				\mathsf{U}_k(x)=\min\{u_i(x)U_j(x),u_i(x)V_j(x),v_i(x)U_j(x),v_i(x)V_j(x)\}\ ,\\
				\mathsf{V}_k(x)=\max\{u_i(x)U_j(x),u_i(x)V_j(x),v_i(x)U_j(x),v_i(x)V_j(x)\}\ .
			\end{align*}
			For any function $f\in\calF$, there exists functions $g\in\calG$ and $h\in\calH$ such that $f=gh$. Besides, we can find two pairs of functions $(u_{i_0},v_{i_0})$ and $(U_{j_0},V_{j_0})$ such that $u_{i_0}(x)\le g(x)\le v_{i_0}(x)$, $U_{j_0}(x)\le h(x)\le V{j_0}(x)$, $\|u_{i_0}-v_{i_0}\|_\infty\le\epsilon$, and $\|U_{j_0}-V_{j_0}\|_{P,2}\le\epsilon$. Then, $\mathsf{U}_{k_0}(x)\le f(x)\le \mathsf{V}_{k_0}(x)$ where $k_0=(i_0-1)m+j_0$. Then, we look at the size of the new brackets. By simple algebra,
			\begin{align*}
				\|\mathsf{U}_k-\mathsf{V}_k\|_{P,2}
				&\le\left\|\left(|u_i|+|v_i|\right)|U_j-V_j|+\left(|U_j|+|V_j|\right)|u_i-v_i|\right\|_{P,2}\\
				&\le\|u_i\|_\infty\|U_j-V_j\|_{P,2}+\|v_i\|_\infty\|U_j-V_j\|_{P,2}\\
				&+\|u_i-v_i\|_\infty\|U_j\|_{P,2}+\|u_i-v_i\|_\infty\|V_j\|_{P,2}\\
				&\le2\epsilon(c_g+\epsilon)+2(c_h+\epsilon)\epsilon=2(c_g+c_h+2\epsilon)\epsilon\ .
			\end{align*}
			Furthermore, for any $\epsilon\le\min\{c_g,c_h\}$, we have $2(c_g+c_h+2\epsilon)\epsilon\le 4(c_g+c_h)\epsilon$.
			Therefore,
			\begin{align*}
				n_{[ \ ]}\{4(c_g+c_h)\epsilon,\calF,L_2(P)\}
				\le n_{[ \ ]}\{\epsilon,\calG,L^\infty\}n_{[ \ ]}\{\epsilon,\calH,L_2(P)\}\ .
			\end{align*}
		\end{proof}

		\begin{lemma} \label{bracket2}
			Consider the set of functions $\calF=\calH+\calG=\{f:=g+h,g\in\calG,h\in\calH\}$. Assume that $\|g\|_{P,2}\le c_g$ for all $g\in\calG$ and $\|h\|_{P,2}\le c_h$ for all $h\in\calH$.
			Then,
			\begin{align*}
				n_{[ \ ]} \{2\epsilon,\mathcal{F},L_2(P)\}
				\le n_{[ \ ]}\{\epsilon,\calG,L_2(P)\}n_{[ \ ]}\{\epsilon,\calH,L_2(P)\}\ .
			\end{align*}
		\end{lemma}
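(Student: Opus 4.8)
The plan is to mirror the construction used in the proof of Lemma \ref{bracket1}, but exploiting the fact that addition is far better behaved than multiplication, so the argument collapses to a single application of the triangle inequality. Concretely, I would start by fixing a minimal family of $\epsilon$-brackets $\{[u_i,v_i]\}_{i=1}^{n}$ that covers $\calG$ in $L_2(P)$ and a minimal family $\{[U_j,V_j]\}_{j=1}^{m}$ that covers $\calH$ in $L_2(P)$, where $n=n_{[ \ ]}\{\epsilon,\calG,L_2(P)\}$ and $m=n_{[ \ ]}\{\epsilon,\calH,L_2(P)\}$. For each index pair $(i,j)$ I would define the candidate bracket for $\calF$ by summing endpoints,
\begin{align*}
	\mathsf{U}_k=u_i+U_j,\qquad \mathsf{V}_k=v_i+V_j,\qquad k=(i-1)m+j,
\end{align*}
so that there are exactly $nm$ such brackets.

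Next I would verify the covering property. Given any $f\in\calF$, write $f=g+h$ with $g\in\calG$, $h\in\calH$. By the covering assumption there exist indices $i_0,j_0$ with $u_{i_0}\le g\le v_{i_0}$ and $U_{j_0}\le h\le V_{j_0}$ pointwise. Adding these two chains of inequalities gives $\mathsf{U}_{k_0}\le f\le \mathsf{V}_{k_0}$ for $k_0=(i_0-1)m+j_0$, which is exactly the required enclosure. This step is purely pointwise and needs nothing beyond the monotonicity of addition.

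Then I would bound the bracket widths. Since $\mathsf{V}_k-\mathsf{U}_k=(v_i-u_i)+(V_j-U_j)$, the triangle inequality in $L_2(P)$ yields
\begin{align*}
	\|\mathsf{V}_k-\mathsf{U}_k\|_{P,2}
	\le\|v_i-u_i\|_{P,2}+\|V_j-U_j\|_{P,2}
	\le\epsilon+\epsilon=2\epsilon,
\end{align*}
using that the original families are $\epsilon$-brackets. Hence the $nm$ brackets $\{[\mathsf{U}_k,\mathsf{V}_k]\}$ are $2\epsilon$-brackets that cover $\calF$, which gives $n_{[ \ ]}\{2\epsilon,\calF,L_2(P)\}\le nm$, the claimed bound.

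I do not expect a genuine obstacle here; the result is essentially bookkeeping. The only point worth flagging is a contrast with Lemma \ref{bracket1}: in the multiplicative setting one must control cross terms such as $\|u_i\|_\infty\|U_j-V_j\|_{P,2}$, which is why the magnitude bounds $c_g,c_h$ and the $L^\infty$ control on $\calG$ were invoked there. In the additive setting the width difference telescopes directly, so the hypotheses $\|g\|_{P,2}\le c_g$ and $\|h\|_{P,2}\le c_h$ are in fact not used in the proof; I would simply carry out the triangle-inequality bound and note that no magnitude assumptions are needed, unlike in the product case.
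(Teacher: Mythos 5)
Your proof is correct and follows essentially the same construction as the paper's: sum the bracket endpoints $\mathsf{U}_k=u_i+U_j$, $\mathsf{V}_k=v_i+V_j$, verify the pointwise enclosure, and bound the width by the triangle inequality to get $2\epsilon$. Your side remark is also accurate --- the paper's own proof never invokes the bounds $\|g\|_{P,2}\le c_g$ and $\|h\|_{P,2}\le c_h$, so those hypotheses are indeed superfluous in the additive case, in contrast to Lemma \ref{bracket1} where magnitude control is needed for the cross terms.
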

		
		\begin{proof}
			Suppose $\{u_i,v_i\}_{i=1}^n$ are the $\epsilon$-brackets that can cover $\calG$ and $\{U_j,V_j\}_{j=1}^m$ are the $\epsilon$-brackets that can cover $\calH$. Define the bracket $[\mathsf{U}_k,\mathsf{V}_k]$ for $k=(i-1)m+j$ and $i=1,\cdots,n,j=1,\cdots,m$:
			\begin{align*}
				\mathsf{U}_k(x)=u_i(x)+U_j(x)\ ,\\
				\mathsf{V}_k(x)=v_i(x)+V_j(x)\ .
			\end{align*}
			For any function $f\in\calF$, there exists functions $g\in\calG$ and $h\in\calH$ such that $f=g+h$. Besides, we can find two pairs of functions $(u_{i_0},v_{i_0})$ and $(U_{j_0},V_{j_0})$ such that $u_{i_0}(x)\le g(x)\le v_{i_0}(x)$, $U_{j_0}(x)\le h(x)\le V{j_0}(x)$, $\|u_{i_0}-v_{i_0}\|_{P,2}\le\epsilon$, and $\|U_{j_0}-V_{j_0}\|_{P,2}\le\epsilon$. Then, $\mathsf{U}_{k_0}(x)\le f(x)\le \mathsf{V}_{k_0}(x)$ where $k_0=(i_0-1)m+j_0$ and
			\begin{align*}
				\|\mathsf{U}_k-\mathsf{V}_k\|_{P,2} 
				\le\|u_i-v_i\|_{P,2}+\|U_j-V_j\|_{P,2}
				\le 2\epsilon\ .
			\end{align*}
			Therefore,
			\begin{align*}
				n_{[ \ ]}\{2\epsilon,\mathcal{F},L_2(P)\}
				\le n_{[ \ ]}\{\epsilon,\calG,L_2(P)\}n_{[ \ ]}\{\epsilon,\calH,L_2(P)\}\ .
			\end{align*}
		\end{proof}

		\begin{lemma} \label{bracket3}
			Let $\calH$, $\calU^r$ and $\E\calH^r$ be the sets of functions as we defined before. Then, 
			\begin{align*}
				n_{[ \ ]}\{\epsilon,\calU^r,L_2(P)\}
				&\le n_{[ \ ]}\{\epsilon,\calH,L_2(P)\}\ ,\\
				n_{[ \ ]}\{\epsilon,\E\calH^r,L_2(P)\}
				&\le n_{[ \ ]}\{\epsilon,\calH,L_2(P)\}\ .
			\end{align*}
		\end{lemma}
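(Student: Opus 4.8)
The plan is to produce brackets for $\E\calH^r$ (and hence for $\calU^r$) by pushing a given bracketing of $\calH$ through the conditional-expectation map $T^r f := \E\{f(L)\mid\Lob=\cdot\,,R=r\}$. The first inequality then comes for free from the second: since $\Psi\subset\calH$ and $u_\theta^r = T^r\psi_\theta$ componentwise, we have $\calU^r\subseteq\E\calH^r$, and bracketing numbers are monotone under set inclusion, so $n_{[ \ ]}\{\epsilon,\calU^r,L_2(P)\}\le n_{[ \ ]}\{\epsilon,\E\calH^r,L_2(P)\}$. Hence it suffices to establish the single inequality $n_{[ \ ]}\{\epsilon,\E\calH^r,L_2(P)\}\le n_{[ \ ]}\{\epsilon,\calH,L_2(P)\}$.

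For that inequality, I would take a minimal collection $\{[u_i,v_i]\}_{i=1}^n$ of $\epsilon$-brackets covering $\calH$ in $L_2(P)$, so that every $f\in\calH$ satisfies $u_i\le f\le v_i$ pointwise for some $i$ with $\|v_i-u_i\|_{P,2}\le\epsilon$. Define the candidate brackets $U_i:=T^r u_i$ and $V_i:=T^r v_i$. Because conditional expectation preserves order, $u_i\le f\le v_i$ implies $U_i\le T^r f\le V_i$; and since every $g^r\in\E\calH^r$ equals $T^r f$ for some $f\in\calH$, the pairs $\{[U_i,V_i]\}$ cover $\E\calH^r$. What remains is to verify that these are still $\epsilon$-brackets, i.e.\ to control their widths.

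The crux is to show that $T^r$ does not expand the $L_2(P)$ norm, namely $\|T^r w\|_{P,2}\le\|w\|_{P,2}$ for $w=v_i-u_i\ge0$. This is the $L_2$-contraction property of conditional expectation: by Jensen's inequality applied conditionally on $(\Lob,R=r)$, $(T^r w)^2=\{\E(w\mid\Lob,R=r)\}^2\le\E(w^2\mid\Lob,R=r)$, and integrating via the tower property yields $\|T^r w\|_{P,2}^2\le\|w\|_{P,2}^2$. It follows that $\|V_i-U_i\|_{P,2}=\|T^r(v_i-u_i)\|_{P,2}\le\|v_i-u_i\|_{P,2}\le\epsilon$, so $\{[U_i,V_i]\}$ are genuine $\epsilon$-brackets for $\E\calH^r$ and there are at most $n$ of them, giving the claim.

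The step I expect to require the most care is the bookkeeping of the measure in the contraction inequality: one must ensure the conditioning event $R=r$ is incorporated consistently so that $\E\{\cdot\mid\Lob,R=r\}$ is a genuine $L_2$ projection for the same norm used to measure both $\calH$ and $\E\calH^r$. Making the width bound fully rigorous is where I would invoke the law of iterated expectations with the correct conditional laws, and — if the plain marginal measure does not yield the contraction with constant one — the CCMV identity $p(\lms\mid\lob,R=r)=p(\lms\mid\lob,R=1_d)$ together with the lower bound $P(R=1_d\mid\lob)\ge\delta_0$ from \ref{assump-1B} to absorb the change of measure; the rest is routine monotonicity bookkeeping.
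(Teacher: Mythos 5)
Your proposal is essentially the paper's own proof: it too pushes a minimal $\epsilon$-bracketing of $\calH$ through the map $f\mapsto\E\{f(L)\mid\Lob=\cdot\,,R=r\}$, obtains coverage from the order-preservation of conditional expectation, and bounds the bracket widths by conditional Jensen plus the tower property, the only cosmetic difference being that the paper runs this same argument separately for $\calU^r$ rather than invoking your inclusion $\calU^r\subseteq\E\calH^r$. Your closing caution about the conditioning measure is in fact warranted and is glossed over in the paper, which silently equates $\E\bigl[\E\{w^2\mid\Lob,R=r\}\bigr]$ (outer expectation under the marginal law of $\Lob$) with $\E\{w^2\}$; under MNAR these generally differ, and the CCMV change of measure you sketch, combined with the bound $P(R=1_d\mid\lob)\ge\delta_0$ of \ref{assump-1B}, yields the contraction only up to the constant $\delta_0^{-1/2}$, which enlarges the brackets by a harmless constant factor and leaves the downstream entropy-integral arguments intact.
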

		
		\begin{proof}
			Suppose $\{u_i,v_i\}_{i=1}^n$ are the $\epsilon$-brackets that can cover $\calH$. Define $U_i(\lob)=\E\{u_i(L)\mid\Lob=\lob,R=r\}$ and $V_i(\lob)=\E\{v_i(L)\mid\Lob=\lob,R=r\}$. Then, for any $u^r\in\calU^r$, there exists $\psi_\theta\in\calH$ such that $u^r(\lob)=\E\{\psi_\theta(L)\mid\Lob=\lob,R=r\}$ with a pair of functions $(u_0,v_0)$ satisfying $u_0(l)\le\psi_\theta(l)\le v_0(l)$ and $\|u_0(L)-v_0(L)\|_{P,2}\le\epsilon$. Then, $U_0(\lob)\le u^r(\lob)\le V_0(\lob)$ and 
			\begin{align*}
				\|U_0(\Lob)-V_0(\Lob)\|_{P,2}
				&=\E[\E\{u_0(L)-v_0(L)\mid\Lob=\lob,R=r\}^2]\\
				&\le\E\E[\{u_0(L)-v_0(L)\}^2\mid\Lob=\lob,R=r]\\
				&=\E\{u_0(L)-v_0(L)\}^2
				=\|u_0-v_0\|_{P,2}
				\le\epsilon\ .
			\end{align*}
			So, $\{U_i,V_i\}_{i=1}^n$ are the $\epsilon$-brackets that can cover $\calU^r$ and
			\begin{align*}
				n_{[ \ ]}\{\epsilon,\calU^r,L_2(P)\} 
				\le n_{[ \ ]}\{\epsilon,\calH,L_2(P)\}\ .
			\end{align*}
			For any $g^r\in\E\calH^r$, there exists $f\in\calH$ such that $g^r(\lob)=\E\{f(L)\mid\Lob=\lob,R=r\}$. Similarly,
			\begin{align*}
				n_{[ \ ]}\{\epsilon,\E\calH^r,L_2(P)\} 
				\le n_{[ \ ]}\{\epsilon,\calH,L_2(P)\}\ .
			\end{align*}
		\end{proof}

		\begin{lemma} \label{bracket4}
			Let $h$ be a fixed bounded function. Assume $\|h\|_\infty\le c_h$. We consider two function classes $\calF=\{f:f(x):=g(x)h(x),g\in\calG\}$ and $\calG=\{g:\|g\|_{P,2}\le c\}$ for a fixed constant $c$. Then, 
			\begin{align*}
				n_{[ \ ]}\{ c_h\epsilon,\calF,L_2(P)\} 
				\le n_{[ \ ]}\{\epsilon,\calG,L_2(P)\}\ .
			\end{align*}
		\end{lemma}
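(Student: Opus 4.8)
The plan is to transport a bracketing cover of $\calG$ to one of $\calF$ by multiplication with the fixed function $h$, mirroring the construction in Lemma \ref{bracket1} but exploiting that here $h$ is a single fixed bounded function rather than a member of a class that also ranges. First I would fix $\epsilon$-brackets $\{u_i,v_i\}_{i=1}^n$ covering $\calG$ in $L_2(P)$, with $n=n_{[ \ ]}\{\epsilon,\calG,L_2(P)\}$, so that every $g\in\calG$ satisfies $u_i\le g\le v_i$ pointwise and $\|u_i-v_i\|_{P,2}\le\epsilon$ for some $i$.

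The one subtlety is that $h$ may change sign, so the naive pointwise ordering $u_ih\le gh\le v_ih$ need not hold. I would therefore define, for each $i$, the brackets
\begin{align*}
	\mathsf{U}_i(x)=\min\{u_i(x)h(x),v_i(x)h(x)\}\ ,\quad
	\mathsf{V}_i(x)=\max\{u_i(x)h(x),v_i(x)h(x)\}\ ,
\end{align*}
taken pointwise. For any $f=gh\in\calF$, choosing the bracket $[u_{i_0},v_{i_0}]$ containing $g$ gives $\mathsf{U}_{i_0}\le gh\le\mathsf{V}_{i_0}$ regardless of the sign of $h$, since $gh$ always lies between $u_{i_0}h$ and $v_{i_0}h$.

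It then remains to bound the bracket widths. Since $\mathsf{V}_i-\mathsf{U}_i=|u_ih-v_ih|=|h|\,|u_i-v_i|$, the assumption $\|h\|_\infty\le c_h$ yields
\begin{align*}
	\|\mathsf{V}_i-\mathsf{U}_i\|_{P,2}
	=\big\||h|\,|u_i-v_i|\big\|_{P,2}
	\le c_h\|u_i-v_i\|_{P,2}
	\le c_h\epsilon\ .
\end{align*}
Hence $\{\mathsf{U}_i,\mathsf{V}_i\}_{i=1}^n$ form $c_h\epsilon$-brackets covering $\calF$ in $L_2(P)$, and since there are at most $n=n_{[ \ ]}\{\epsilon,\calG,L_2(P)\}$ of them, the claim $n_{[ \ ]}\{c_h\epsilon,\calF,L_2(P)\}\le n_{[ \ ]}\{\epsilon,\calG,L_2(P)\}$ follows. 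The only step needing any care is the sign handling via the pointwise $\min$/$\max$; everything else is a direct computation, so I expect no serious obstacle here.
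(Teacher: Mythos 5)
Your proposal is correct and is essentially identical to the paper's own proof: both transport the $\epsilon$-brackets of $\calG$ via the pointwise $\min\{u_ih,v_ih\}$ and $\max\{u_ih,v_ih\}$ construction to handle possible sign changes of $h$, and both bound the bracket width by $\|h\|_\infty\|u_i-v_i\|_{P,2}\le c_h\epsilon$. No gaps; nothing further to add.
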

		
		\begin{proof}
			Suppose $\{u_i,v_i\}_{i=1}^n$ are the $\epsilon$-brackets that can cover $\calG$. That is, for any $g\in\calG$, we can find a pair of functions $(u_0,v_0)$ such that $u_0(x)\le g(x)\le v_0(x)$ and $\|u_0-v_0\|_{P,2}\le\epsilon$. Then, for any $x$, either $u_0(x)h(x)\le g(x)h(x)\le v_0(x)h(x)$ or $u_0(x)h(x)\ge g(x)h(x)\ge v_0(x)h(x)$ holds. Define $U_i(x)=\min\{u_i(x)h(x),v_i(x)h(x)\}$ and $V_i(x)=\max\{u_i(x)h(x),v_i(x)h(x)\}$. For any $f\in\calF$, there exists $g\in\calG$ such that $f=gh$ and a pair of functions $(U_0,V_0)$ such $U_0(x)\le f(x)\le V_0(x)$ and 
			\begin{align*}
				\|U_i-V_i\|_{P,2}
				=\|(u_i-v_i)h\|_{P,2}
				\le\|h\|_\infty\|(u_i-v_i)\|_{P,2}
				\le c_h\epsilon\ .
			\end{align*}
			So, $\{U_i,V_i\}_{i=1}^n$ are the $c_h\epsilon$-brackets that can cover $\calF$ and
			\begin{align*}
				n_{[ \ ]}\{c_h\epsilon,\calF,L_2(P)\} 
				\le n_{[ \ ]}\{\epsilon,\calG,L_2(P)\}\ .
			\end{align*}
		\end{proof}
	\end{appendices}

	\bibliographystyle{chicago}
	\bibliography{refer.bib}

\end{document}